\newtheorem{theorem}{Theorem}
\newtheorem{construction}[theorem]{Construction}
\newtheorem{definition}[theorem]{Definition}
\newtheorem{lemma}[theorem]{Lemma}
\newtheorem{proposition}[theorem]{Proposition}
\newtheorem{remark}[theorem]{Remark}
\newenvironment{proof}[1][Proof]{\noindent\textbf{#1.} }{\ \rule{0.5em}{0.5em}}
\newlength{\leveldistance}
\tikzstyle{fit level distance to}=[execute at begin picture={\settowidth{\leveldistance}{#1}\addtolength{\leveldistance}{1.5in}},execute at end picture={\setlength{\leveldistance}{1.5in}}]
\tikzstyle{override level distance}=[execute at begin picture={\setlength{\leveldistance}{#1}},execute at end picture={\setlength{\leveldistance}{1.5in}}]
\newlength{\defaultsiblingdistance} \setlength{\defaultsiblingdistance}{2\baselineskip}
\newlength{\siblingdistance} \setlength{\siblingdistance}{\defaultsiblingdistance}
\tikzstyle{max label lines}=[execute at begin picture={\setlength{\siblingdistance}{#1\baselineskip}\addtolength{\siblingdistance}{\baselineskip}}, execute at end picture={\setlength{\siblingdistance}{\defaultsiblingdistance}}]
\newcommand{\defaultsiblings}{2}
\newcommand{\siblings}{\defaultsiblings}
\tikzstyle{siblings}=[execute at begin picture={\renewcommand{\siblings}{#1}}, execute at end picture={\renewcommand{\siblings}{\defaultsiblings}}]
\tikzstyle{binary}=[siblings=2]
\tikzstyle{model}=[every node/.style={draw},grow'=right,->,
\tikzstyle{twostep}=[level 1/.style={level distance=\leveldistance, sibling distance=\siblings\siblingdistance}]
\tikzstyle{axis}=[gray]
\tikzstyle{axislabel}=[black]
\tikzstyle{tick}=[gray]
\tikzstyle{ticklabel}=[black]
\newlength{\figurewidth}
\newlength{\figureheight}
\newcommand{\twovector}[2]{\left(#1,#2\right)}
\newcommand{\conv}{\operatorname*{conv}}
\newcommand{\longestterm}{}
\newcommand{\setlongestterm}[1]{\renewcommand{\longestterm}{$\displaystyle#1$}}
\newcommand{\adjusttolongestterm}[2][]{\makebox[\widthof{\longestterm}#1][l]{$\displaystyle#2$}}
\newlength{\innotis}
\let\marginparnew=\marginpar
\long\def\marginpar#1{\marginparnew{\small #1}}
\begin{document}

\title{Game options with gradual exercise and cancellation under proportional
transaction costs}
\author{Alet Roux and Tomasz Zastawniak\\Department of Mathematics, University of York,\\Heslington, York YO10 5DD, United Kingdom}
\date{7 December 2016}
\maketitle

\begin{abstract}
Game (Israeli) options in a multi-asset market model with proportional transaction costs are
studied in the case when the buyer is allowed to exercise the option and the seller has the right to cancel the option gradually at a mixed (or randomised) stopping time, rather than instantly at an ordinary stopping time. Allowing gradual exercise and cancellation leads to increased flexibility in hedging, and hence tighter bounds on the option price as compared to the case of instantaneous exercise and cancellation. Algorithmic constructions for the bid and ask
prices, and the associated superhedging strategies and optimal mixed stopping times for
both exercise and cancellation are developed and illustrated. Probabilistic dual
representations for bid and ask prices are also established.
\end{abstract}

\section{Introduction}

A game (i.e.\ Israeli) option is a contract between an option buyer and seller, which allows the buyer the right to exercise the option, and the seller the right to cancel the option at any time up to expiry. The payoff associated with such a game option is due at the earliest of the exercise and cancellation times. If the option is cancelled before it is exercised, then the buyer also receives additional compensation from the seller. Game options were first introduced by Kifer~\cite{kiefer2000} and
have been studied in a frictionless setting in a number of papers; for a
survey of this work see Kifer~\cite{kifer2013a}.
Game options have proved to be important not only in their own right, but also because they underpin the theory for other traded derivatives such as convertible bonds or callable options; see e.g.\ Kallsen and K\"uhn \cite{KalKuh2006}, K\"uhn and Kyprianou \cite{KuhKyp2007}, Bielecki et~al. \cite{Biel2008}, Wang and Jin \cite{WanJin2009}, or Kwok \cite{Kwok2014}.

Transaction costs were first
considered in the context of game options by Kifer \cite{kifer2013}, who extended the results established for American options by Roux and
Zastawniak~\cite{rouxzastawniak2009} in the case of a market with a single
risky security. Kifer's work \cite{kifer2013} has recently been generalised by
Roux~\cite{roux2016} for game options in discrete multi-asset models with
proportional transaction costs.
Due to a negative result by Dolinsky \cite{Dol2013} that the superreplication price of a game option in continuous time under proportional transaction costs is the initial value of a trivial buy-and-hold strategy, both Kifer \cite{kifer2013} and Roux \cite{roux2016} study game options in discrete time. This approach is also adopted in the present paper.

Consistently with the wider literature on game options, the papers by Kifer \cite{kifer2013} and Roux \cite{roux2016} take it for granted that the option can only be exercised or cancelled instantaneously and in full, in other words, at an ordinary stopping time. This means that pricing and hedging involves non-convex optimization problems for both the buyer and seller. In this case, Kifer \cite{kifer2013} and Roux \cite{roux2016} showed that the bid and ask prices can be computed algorithmically, as can optimal strategies for both the buyer and the seller. Moreover, they established probabilistic dual representations for the bid and ask prices. In common with American options in this setting, the dual representations involve so-called mixed (or randomised) stopping times (used before in various contexts by Baxter and Chacon \cite{BaxCha1977}, Chalasani and Jha \cite{ChaJha2001}, Bouchard and Temam \cite{BouTem2005} and many others).

In the present paper we allow increased flexibility for both the buyer and seller by permitting both exercise and cancellation to take place gradually, i.e.\ at a mixed stopping time, rather than instantaneously at an ordinary stopping time. Such flexibility is available to investors who hold a portfolio of options and may wish to manage their exposure by exercising or cancelling some of these options at different times.

In the presence of proportional transaction costs, gradual exercise and cancellation is closely linked to the notion of deferred solvency; this has already been studied in the context of American options with gradual exercise by Roux and Zastawniak~\cite{rouxzastawniak2014}. In the presence of a large bid-ask spread on the underlying assets, for example in the event of temporary illiquidity in the market, an agent may become insolvent in the traditional sense at some time instant~$t$, but still able to return to solvency at a later time by trading in a self-financing way. Allowing such deferred solvency positions, rather than insisting on immediate solvency at all times, also leads to increased flexibility in constructing hedging strategies for both the seller and buyer of a game option.

In this setting, i.e.\ for game options with gradual exercise and cancellation under transaction costs and deferred solvency, we establish algorithmic constructions of the bid and ask prices and of optimal hedging strategies for both the seller and buyer of the option. In doing so, we extend the results of Kifer \cite{kifer2013} and Roux \cite{roux2016}, which apply to game options that allow only instantaneous exercise and cancellation with immediate solvency.

It turns out that, in the presence of proportional transaction costs, allowing deferred solvency along with gradual exercise and cancellation for game options leads to tighter bid-ask spreads as compared to the case of instantaneous exercise and cancellation, an advantage for the parties on either side of the option contact. Moreover, it is important to note that allowing gradual exercise and cancellation turns pricing and hedging for the buyer and seller of a game option into convex optimization problems, massively enhancing the efficiency of the pricing and hedging algorithms as compared with the non-convex case studied by Kifer \cite{kifer2013} and Roux \cite{roux2016}. Furthermore, convexity facilitates the use of duality methods, and could potentially allow extending the linear vector optimisation techniques which were developed by L\"ohne and Rudloff \cite{LohRud2014} for European options.

The methods and results presented in this paper build on a large body of work for European and American options under transaction costs, including papers by Merton \cite{Mer1989}, Dermody and Rockafellar \cite{DerRoc1991}, Boyle and Vorst \cite{BoyVor1992}, Bensaid, Lesne, Pag\`es and Scheinkman \cite{Bens1992}, Edirisinghe, Naik and Uppal \cite{EdiNaiUpp1993}, Jouini and Kallal \cite{JouKal1995}, Kusuoka \cite{Kus1995}, Koehl, Pham and Touzi \cite{KoePhaTou1999, KoePhaTou2002}, Stettner \cite{Ste1997, Ste2000}, Perrakis and Lefoll \cite{PerLef1997}, Rutkowski \cite{Rut1998}, Touzi \cite{Tou1999}, Kabanov \cite{kabanov1999}, Jouini \cite{Jou2000}, Palmer \cite{Pal2001}, Chalasani and Jha \cite{ChaJha2001}, Kabanov and Stricker \cite{kabanovstricker2001b}, Koci\'nski \cite{Koc2004}, Schachermayer \cite{schachermayer2004}, Bouchard and Temam \cite{BouTem2005}, Tokarz and Zastawniak \cite{TokZas2006}, Chen, Palmer and Sheu \cite{ChePalShe2008}, Roux, Tokarz and Zastawniak \cite{RouTokZas2008}, Roux and Zastawniak \cite{rouxzastawniak2009, rouxzastawniak2014, rouxzastawniak2016}, L\"ohne and Rudloff \cite{LohRud2014}.

The paper proceeds as follows. Section~\ref{sec:2} introduces the multi-asset model with proportional transaction costs and summarizes the results from Roux and Zastawniak~\cite{rouxzastawniak2014} that will be used in this paper. Game options with gradual exercise and cancellation are introduced in Section~\ref{sec:3}. Pricing algorithms for the seller and buyer, together with dual representations, are presented in Sections~\ref{sec:4} and~\ref{sec:5}, with proofs deferred to Section~\ref{sec:7}. An example is provided in Section~\ref{sec:6}.

\section{Preliminaries}\label{sec:2}

\subsection{Many-asset model with proportional transaction costs}

We consider the discrete-time market model with many assets (conveniently
thought of as currencies) and proportional transaction costs introduced
by Kabanov~\cite{kabanov1999} and studied by Kabanov and Stricker~\cite{kabanovstricker2001b},
Schachermayer~\cite{schachermayer2004} and others.

Let $(\Omega,\mathcal{F},\mathbb{P})$ be a probability space with
filtration~$(\mathcal{F}_{t})_{t=0}^{T}$. We assume that~$\Omega$ is finite,
$\mathcal{F}_{0}=\{\varnothing,\Omega\}$, $\mathcal{F}_{T}=\mathcal{F}%
=2^{\Omega}$ and
$\mathbb{P}(\omega)>0$ for all $\omega\in\Omega$. For each $t=0,\ldots,T$,
by~$\Omega_{t}$ we denote the collection of atoms of~$\mathcal{F}_{t}$, called
the \emph{nodes} of the associated tree model.

The market model contains~$d$ assets or currencies. At each trading date
$t=0,1,\ldots,T$ and for each $k,j=1,\ldots,d$, one unit of asset~$k$ can be
obtained by exchanging $\pi_{t}^{jk}>0$ units of asset~$j$. We assume that the
exchange rates~$\pi_{t}^{jk}$ are $\mathcal{F}_{t}$-measurable and $\pi
_{t}^{jj}=1$ for all $t$ and $j,k$.

For each $t=0,\ldots,T$ let $\mathcal{L}_{t}:=\mathcal{L}^{0}(\mathbb{R}%
^{d};\mathcal{F}_{t})$ be the collection of $\mathcal{F}_{t}$-measurable
$\mathbb{R}^{d}$-valued random variables. We can identify elements
of~$\mathcal{L}_{t}$ with $\mathbb{R}^{d}$-valued functions on~$\Omega_{t}$.
Any $x\in\mathcal{L}_{t}$ can be thought of as a portfolio with positions
$x^{1},\ldots,x^{d}$ in the $d$~assets. We say that a portfolio $x\in
\mathcal{L}_{t}$ can be \emph{exchanged} into a portfolio $y\in
\mathcal{L}_{t}$ at time~$t$ whenever there are $\mathcal{F}_{t}$-measurable
random variables $\beta^{jk}\geq0$, $j,k=1,\ldots,d$ such that for all
$k=1,\ldots,d$
\[
y^{k}=x^{k}+\sum_{j=1}^{d}\beta^{jk}-\sum_{j=1}^{d}\beta^{kj}\pi_{t}^{kj},
\]
where $\beta^{jk}$ represents the number of units of asset~$k$ received as a
result of exchanging some units of asset~$j$.

The \emph{solvency cone} $\mathcal{K}_{t}\subseteq\mathcal{L}_{t}$ is the set
of portfolios that are \emph{solvent} at time~$t$, i.e.\ those portfolios at
time~$t$ that can be exchanged into portfolios with non-negative positions in
all~$d$ assets. It follows that $\mathcal{K}_{t}$ is the polyhedral convex
cone generated by the canonical basis $e^{1},\ldots,e^{d}$ of $\mathbb{R}^{d}$
and the vectors $\pi_{t}^{jk}e^{j}-e^{k}$ for $j,k=1,\ldots,d$. We also refer
to~$\mathcal{K}_{t}$ as the \emph{immediate solvency cone} to distinguish it
from the so-called \emph{deferred solvency cone}~$\mathcal{Q}_{t}$ to be
introduced later.

A \emph{trading strategy} $y=(y_{t})_{t=0}^{T+1}$ is a predictable
$\mathbb{R}^{d}$-valued process with final value assumed to be $y_{T+1}=0$ for
notational convenience. For each $t>0$ the portfolio $y_{t}\in\mathcal{L}%
_{t-1}$ is held from time $t-1$ to time $t$, and~$y_{0}$ is the initial
endowment. We denote by~$\Phi$ the set of such trading strategies.

A trading strategy $y\in\Phi$ is said to be \emph{self-financing} whenever
$y_{t}-y_{t+1}\in\mathcal{K}_{t}$ for all $t=0,\ldots,T-1$. Note that no
implicitly assumed self-financing condition is included in the definition
of~$\Phi$.

A trading strategy $y\in\Phi$ is called an \emph{arbitrage opportunity} if it
is self-financing, $y_{0}=0$ and there is a portfolio $x\in\mathcal{L}%
_{T}\setminus\{0\}$ with $x^{j}\geq0$ for each $j=1,\ldots,d$ and such that
$y_{T}-x\in\mathcal{K}_{T}$. This notion of arbitrage was considered
by~\cite{schachermayer2004}, and its absence is formally different but
equivalent to the weak no-arbitrage condition introduced
by~\cite{kabanovstricker2001b}.

\begin{theorem}
[\cite{kabanovstricker2001b,schachermayer2004}]\label{th:2012-10-03:ftap} The
model admits no arbitrage opportunity if and only if there exists a
probability measure~$\mathbb{Q}$ equivalent to~$\mathbb{P}$ and an
$\mathbb{R}^{d}$-valued $\mathbb{Q}$-martingale $S=(S_{t})_{t=0}^{T}$ such
that
\begin{equation}
S_{t}\in\mathcal{K}_{t}^{\ast}\setminus\{0\}\text{\quad for all }t=0,\ldots,T,
\label{eq:th:2012-10-03:ftap}%
\end{equation}
where $\mathcal{K}_{t}^{\ast}:=\{y\in\mathcal{L}_{t}\,|\,y\cdot x\geq0$ for
all $x\in\mathcal{K}_{t}\}$ is the polar of $-\mathcal{K}_{t}$.
\end{theorem}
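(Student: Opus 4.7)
The statement is a finite-state version of the Kabanov--Stricker/Schachermayer FTAP, and the plan is to handle the two implications separately, by an elementary martingale computation in one direction and a finite-dimensional Hahn--Banach separation in the other.

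\emph{Sufficiency} $(\Leftarrow)$. Assuming $\mathbb{Q}$ and $S$ with the stated properties exist, I would argue directly that no arbitrage opportunity can exist. Take any self-financing $y\in\Phi$ with $y_0=0$. Combining $y_t-y_{t+1}\in\mathcal{K}_t$ with $S_t\in\mathcal{K}_t^{\ast}$ gives $S_t\cdot(y_t-y_{t+1})\geq 0$; since $y_{t+1}$ is $\mathcal{F}_t$-measurable by predictability and $S$ is a $\mathbb{Q}$-martingale, a one-step conditional expectation yields $\mathbb{E}_{\mathbb{Q}}[S_{t+1}\cdot y_{t+1}\mid\mathcal{F}_t]=y_{t+1}\cdot S_t\leq S_t\cdot y_t$. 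Iterating this from $t=0$ delivers $\mathbb{E}_{\mathbb{Q}}[S_T\cdot y_T]\leq S_0\cdot y_0=0$. If now $x\in\mathcal{L}_T$ is a candidate arbitrage excess with $x^j\geq 0$ and $y_T-x\in\mathcal{K}_T$, then $S_T\cdot x\leq S_T\cdot y_T$ and, because $e^j\in\mathcal{K}_T$ forces $S_T^j\geq 0$, we also have $S_T\cdot x\geq 0$. Hence $S_T\cdot x=0$ $\mathbb{Q}$-a.s.; upgrading $S$ to a strictly consistent price system by averaging with a strictly positive element of $\bigcap_t\mathcal{K}_t^{\ast}$ (if not already available) then forces $x=0$.

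\emph{Necessity} $(\Rightarrow)$. This is the harder direction. Since $\Omega$ is finite, $\mathcal{L}_T$ is finite-dimensional, which bypasses the delicate $L^0$-closedness issues of the general case. I would work with the cone of terminal positions attainable from zero endowment, net of free disposal,
\[
\mathcal{A}:=\sum_{t=0}^{T}(-\mathcal{K}_t)\subseteq\mathcal{L}_T,
\]
where each $\mathcal{K}_t$ is embedded in $\mathcal{L}_T$ via the inclusion $\mathcal{L}_t\hookrightarrow\mathcal{L}_T$. Since the recursion $y_0=0$, $y_{t+1}=y_t-k_t$ with $k_t\in\mathcal{K}_t\cap\mathcal{L}_t$ shows that $\mathcal{A}$ is exactly the set of terminal values reachable by self-financing strategies with zero endowment followed by disposal, the no-arbitrage hypothesis reads $\mathcal{A}\cap\mathcal{L}^0(\mathbb{R}^d_+;\mathcal{F}_T)=\{0\}$. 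Each summand is a finitely generated polyhedral cone, so $\mathcal{A}$ is polyhedral and hence closed; a Kreps--Yan-type separation in $\mathcal{L}_T$ then produces $z\in\mathcal{L}_T$, strictly positive on every atom of $\mathcal{F}_T$ in every coordinate, with $\mathbb{E}_{\mathbb{P}}[z\cdot a]\leq 0$ for all $a\in\mathcal{A}$.

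From $z$ I would extract $(\mathbb{Q},S)$ as follows: let $d\mathbb{Q}/d\mathbb{P}$ be a normalisation of a suitable positive functional of the coordinates of $z$, define $S_T$ as the rescaled vector $z$, and set $S_t:=\mathbb{E}_{\mathbb{Q}}[S_T\mid\mathcal{F}_t]$, so that $S$ is automatically a $\mathbb{Q}$-martingale. To verify $S_t\in\mathcal{K}_t^{\ast}\setminus\{0\}$, take arbitrary $k\in\mathcal{L}_t\cap\mathcal{K}_t$ and $B\in\mathcal{F}_t$: the position $-k\mathbf{1}_B$ lies in $\mathcal{A}$, so the separation rewrites as $\mathbb{E}_{\mathbb{Q}}[(S_t\cdot k)\mathbf{1}_B]\geq 0$, and varying $B$ forces $S_t\cdot k\geq 0$; non-triviality follows from the strict positivity of~$z$. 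The main obstacle in the abstract Kabanov--Stricker and Schachermayer proofs is precisely the $L^0$-closedness of $\mathcal{A}$, which is the heart of their technical work; in the finite-$\Omega$ framework assumed here that difficulty disappears, and the argument reduces to routine convex separation together with careful identification of the separating functional as a consistent price system.
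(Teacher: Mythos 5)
The paper does not actually prove this theorem: it is quoted verbatim from Kabanov--Stricker and Schachermayer as an external result, so there is no in-paper argument to compare yours against. Judged on its own, your outline is the standard finite-$\Omega$ specialisation of the FTAP proof from those references and is essentially correct. The backward one-step supermartingale computation for sufficiency is right, and for necessity the key points are all in place: the attainable cone $\mathcal{A}=\sum_{t=0}^{T}\left(-\mathcal{K}_{t}\cap\mathcal{L}_{t}\right)$ does coincide with the terminal values of zero-endowment self-financing strategies followed by disposal, it is polyhedral and hence closed in the finite-dimensional space $\mathcal{L}_{T}$ (this is exactly where finiteness of $\Omega$ removes the hard $L^{0}$-closedness work of the general case), the Kreps--Yan summation over the generators $e^{j}\mathbf{1}_{\{\omega\}}$ produces a strictly positive separating functional because $-e^{j}\mathbf{1}_{\{\omega\}}\in\mathcal{A}$ keeps each individual separator nonnegative on the positive orthant, and the passage from the resulting $\mathbb{P}$-martingale $M_{t}=\mathbb{E}_{\mathbb{P}}[Z\mid\mathcal{F}_{t}]$ to a pair $(\mathbb{Q},S)$ with $S$ a $\mathbb{Q}$-martingale via a normalisation such as $d\mathbb{Q}/d\mathbb{P}=M_{T}^{1}/M_{0}^{1}$, $S_{t}=M_{t}M_{0}^{1}/M_{t}^{1}$ is routine. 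The one blemish is the final clause of your sufficiency argument, where you propose to upgrade $S$ by averaging it with a strictly positive element of $\bigcap_{t}\mathcal{K}_{t}^{\ast}$: this is both unnecessary and, as phrased, problematic, since such an averaging would in general destroy the $\mathbb{Q}$-martingale property of $S$. It is unnecessary because $S_{T}\in\mathcal{K}_{T}^{\ast}\setminus\{0\}$ already forces $S_{T}^{j}>0$ for every $j$: if $S_{T}^{j}=0$ for some $j$, then the generator $\pi_{T}^{jk}e^{j}-e^{k}\in\mathcal{K}_{T}$ gives $0\leq S_{T}^{k}\leq\pi_{T}^{jk}S_{T}^{j}=0$ for every $k$, so $S_{T}=0$. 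Hence $S_{T}\cdot x=0$ together with $x\geq0$ and $\mathbb{Q}\sim\mathbb{P}$ yields $x=0$ directly, and with that clause deleted your plan is a complete and correct outline.
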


We denote by~$\mathcal{P}$ the set of pairs $(\mathbb{Q},S)$ satisfying the
conditions in Theorem~\ref{th:2012-10-03:ftap}, and by~$\bar{\mathcal{P}}$ the
set of pairs $(\mathbb{Q},S)$ satisfying the conditions in
Theorem~\ref{th:2012-10-03:ftap} but with~$\mathbb{Q}$ absolutely continuous
with respect to (and not necessarily equivalent to)~$\mathbb{P}$. We assume
for the remainder of this paper that the model admits no arbitrage
opportunities, i.e.~$\mathcal{P}\neq\varnothing$.

Any portfolio~$x\in\mathcal{K}_t$ is immediately solvent at time~$t$, in the sense that it can be converted at time $t$ into one with non-negative positions in all~$d$ assets.
For American and
game options under transaction costs, the following weaker type of solvency
also proves useful. We denote by~$\mathcal{Q}_{t}$ the collection of
portfolios $x\in\mathcal{L}_{t}$ such that there is a sequence $y_{s}%
\in\mathcal{L}_{s-1}$ for $s=t+1,\ldots,T+1$ satisfying the conditions
\[
x-y_{t+1}\in\mathcal{K}_{t},\quad y_{s}-y_{s+1}\in\mathcal{K}_{s}\text{ for
all }s=t+1,\ldots,T,\quad y_{T+1}=0.
\]
We call such a sequence $y_{t+1},\ldots,y_{T+1}$ a \emph{liquidation strategy}
starting from~$x$ at time~$t$.

The portfolios in~$\mathcal{Q}_{t}$ are
those that can eventually (though possibly not immediately at time~$t$) be
converted by means of a sequence of self-financing transactions into a
portfolio with non-negative positions in all assets. An equivalent way of
constructing the deferred solvency cones is to put
\[
\mathcal{Q}_{T}:=\mathcal{K}_{T}%
\]
and then
\[
Q_{t}:=\mathcal{Q}_{t+1}\cap\mathcal{L}_{t}+\mathcal{K}_{t}\text{ for
}t=T-1,\ldots,0
\]
by backward induction. It turns out that~$\mathcal{Q}_{t}$ is a convex polyhedral cone. We call it
the \emph{deferred solvency cone}. See~\cite{rouxzastawniak2014} for more information on
deferred solvency.

\subsection{Mixed stopping times}

A \emph{mixed} (or \emph{randomised}) \emph{stopping time} is a non-negative adapted process $\phi=(\phi_{t})_{t=0}^{T}$ such that
\[
\sum_{t=0}^{T}\phi_{t}=1.
\]
The collection of mixed stopping times will be denoted by~$\mathcal{X}$.

For any $\phi\in\mathcal{X}$ we put%
\begin{equation}
\phi_{t}^{\ast}:=\sum_{s=t}^{T}\phi_{s}\text{ for }t=0,\ldots,T,\quad
\text{\quad}\phi_{T+1}^{\ast}:=0. \label{eq:hfh456wfd648d}%
\end{equation}
Observe that~$\phi^{\ast}$ is a predictable process since $\phi_{0}^{\ast}=1$
is $\mathcal{F}_{0}$-measurable and $\phi_{t}^{\ast}=1-\sum_{s=0}^{t-1}%
\phi_{s}$ is $\mathcal{F}_{t-1}$-measurable for each $t=1,\ldots
,T$.

For example, in the case of a game option subject to gradual cancellation, $\phi_t$~could represent a fraction of the option that is cancelled at time~$t$, whereas~$\phi^\ast_t$ would be the part of the option that has not been cancelled before~$t$.

For any adapted process~$X$ and for any $\phi\in\mathcal{X}$ we
define the process~$X$ \emph{evaluated at}~$\phi$ as%
\[
X_{\phi}:=\sum_{t=0}^{T}\phi_{t}X_{t}.
\]
We also put
\[
X_{t}^{\phi\ast}:=\sum_{s=t}^{T}\phi_{s}X_{s}\text{ for }t=0,\ldots
,T,\quad\quad X_{T+1}^{\phi\ast}:=0.
\]

The collection~$\mathcal{T}$ of ordinary stopping times can be embedded
in~$\mathcal{X}$ by identifying every $\tau\in\mathcal{T}$ with the mixed
stopping time $\chi^{\tau}\in\mathcal{X}$ defined as%
\[
\chi_{t}^{\tau}:=\mathbf{1}_{\left\{  t=\tau\right\}  }%
\]
for each $t=0,\ldots,T$. (Here $\mathbf{1}_{A}$ denotes the indicator function
of an event $A\in\mathcal{F}$.)

\subsection{American options with gradual exercise and cancellation}

Here we collect the main notions and results concerning American options with
gradual exercise and cancellation under proportional transaction costs; for
full details, see~\cite{rouxzastawniak2014}. These will be extended to game
options, and will also be used as tools to establish some key results for this extension.

Consider an American option with adapted payoff process~$Z=(Z_{t})_{t=0}^{T}$,
where $Z_{t}\in\mathcal{L}_{t}$ represents a portfolio of~$d$ assets for
each~$t=0,\ldots,T$. If the buyer of the option is allowed to exercise it
gradually according to a mixed stopping time $\psi\in\mathcal{X}$, then the
sequence of portfolios~$\psi_{t}Z_{t}$ is to be delivered by the option seller
to the buyer at times $t=0,\ldots,T$.

The seller needs to hedge against all mixed stopping times $\psi\in
\mathcal{X}$ that can be chosen by the buyer. Because the seller can react to
the buyer's choice of~$\psi$, the hedging strategy may depend on~$\psi$. We
are going to write~$z_{t}^{\psi}$ for the time~$t$ position in the strategy.

On each trading date~$t$, the seller needs to deliver the payoff~$\psi
_{t}Z_{t}$ and to rebalance the strategy from $z_{t}^{\psi}$ to~$z_{t+1}%
^{\psi}$ without injecting any additional wealth, and can only use knowledge of~$\psi$ and the market up to and including time$~t$. This leads to
the following conditions.

\begin{definition}
\upshape Let $Z=(Z_{t})_{t=0}^{T}$ be an adapted process. For an American
option with payoff process$~Z$ and gradual exercise, a \emph{seller's
superhedging strategy} is a mapping $z:\mathcal{X}\rightarrow\Phi$ that
satisfies the \emph{rebalancing} condition%
\begin{equation}
\forall\psi\in\mathcal{X~}\forall t=0,\ldots,T:~z_{t}^{\psi}-\psi_{t}%
Z_{t}-z_{t+1}^{\psi}\in\mathcal{K}_{t} \label{eq:jgt775g10}%
\end{equation}
and the \emph{non-anticipation} condition%
\begin{equation}
\forall\psi,\psi^{\prime}\in\mathcal{X}~\forall t=0,\ldots
,T:~\textstyle\bigcap_{s=0}^{t-1}\{\psi_{s}=\psi_{s}^{\prime}\}\subseteq
\{z_{t}^{\psi}=z_{t}^{\psi^{\prime}}\}. \label{eq:dg453hd7ah}%
\end{equation}
The family of such strategies will be denoted by~$\Psi^{\mathrm{a}}(Z)$.
\end{definition}

\begin{definition}
\label{Def:hfy468ahd564vk}\upshape The \emph{seller's }(or \emph{ask})\emph{
price}\ in currency $j=1,\ldots,d$ of an American option with payoff
process$~Z$ and gradual exercise is defined as%
\[
p_{j}^{\mathrm{a}}(Z):=\inf\left\{  x\in\mathbb{R}\,|\,\exists z\in
\Psi^{\mathrm{a}}(Z):xe^{j}=z_{0}\right\}  .
\]

\end{definition}

The following representation of the seller's price was obtained
in~\cite{rouxzastawniak2014}. In this representation, for any $\psi
\in\mathcal{X}$, we denote by$~\mathcal{\bar{P}}_{j}^{\mathrm{d}}(\psi)$ the
collection of pairs $(\mathbb{Q},S)$ such that~$\mathbb{Q}$ is a probability
measure absolutely continuous with respect to~$\mathbb{P}$ and~$S$ is an
$\mathbb{R}^{d}$-valued adapted process such that%
\[
S_{t}\in\mathcal{Q}_{t}^{\ast}\setminus\{0\}\quad\text{and}\quad
\mathbb{E}_{\mathbb{Q}}(S_{t+1}^{\psi\ast}|\mathcal{F}_{t})\in\mathcal{Q}%
_{t}^{\ast}\quad\text{for all }t=0,\ldots,T,
\]
where $\mathcal{Q}_{t}^{\ast}:=\{y\in\mathcal{L}_{t}\,|\,y\cdot x\geq0$ for
all $x\in\mathcal{Q}_{t}\}$ is the polar of $-\mathcal{Q}_{t}$.

\begin{theorem}
[{\cite[Theorem~4.2]{rouxzastawniak2014}}]\label{Thm:jjf658wgffvdsdf85}The
seller's price in currency $j=1,\ldots,d$ of an American option with payoff
process~$Z$ and gradual exercise can be expressed as%
\[
p_{j}^{\mathrm{a}}(Z)=\max_{\psi\in\mathcal{X}}\max_{(\mathbb{Q}%
,S)\in\mathcal{\bar{P}}_{j}^{\mathrm{d}}(\psi)}\mathbb{E}_{\mathbb{Q}}((Z\cdot
S)_{\psi}).
\]

\end{theorem}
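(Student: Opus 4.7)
The strategy splits into the two inequalities of the dual equality, following a standard pattern for superhedging duality in transaction-cost models, but with the deferred solvency cones $\mathcal{Q}_t$ and their polars $\mathcal{Q}_t^{\ast}$ replacing the usual immediate-solvency cones throughout.

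\textbf{Weak duality ($\geq$).} Fix $z\in\Psi^{\mathrm{a}}(Z)$ with $z_0=xe^j$, a mixed stopping time $\psi\in\mathcal{X}$, and $(\mathbb{Q},S)\in\bar{\mathcal{P}}_j^{\mathrm{d}}(\psi)$ normalised by $S_0\cdot e^j=1$ (which the subscript~$j$ encodes implicitly). The plan is to prove by backward induction on~$t$ an inequality of the shape $\mathbb{E}_{\mathbb{Q}}(\sum_{s\geq t}\psi_{s}S_{s}\cdot Z_{s}\mid\mathcal{F}_{t-1})\leq(\text{pairing of dual data with }z_t^\psi)$, using the rebalancing identity $z_t^\psi-\psi_tZ_t-z_{t+1}^\psi=\kappa_t\in\mathcal{K}_t\subseteq\mathcal{Q}_t$ to pass from $z_{t+1}^\psi$ to $z_t^\psi$, and discarding the non-negative conditional pairing $\mathbb{E}_{\mathbb{Q}}(S_{t+1}^{\psi\ast}\mid\mathcal{F}_t)\cdot\kappa_t\geq 0$: this is precisely where the hypothesis $\mathbb{E}_{\mathbb{Q}}(S_{t+1}^{\psi\ast}\mid\mathcal{F}_t)\in\mathcal{Q}_t^{\ast}$ enters. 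At $t=0$ the bound reads $\mathbb{E}_{\mathbb{Q}}((Z\cdot S)_\psi)\leq S_0\cdot z_0=x$; taking the infimum over $z$ and the supremum over $(\psi,\mathbb{Q},S)$ yields the desired direction.

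\textbf{Strong duality ($\leq$).} Finiteness of $\Omega$ turns the seller's problem for a given $\psi\in\mathcal{X}$ into a finite-dimensional linear program in the variables $(z_t^\psi)_t$ subject to the rebalancing constraints. Its LP dual is a maximisation whose feasible region, after assembling the dual variables into a probability measure $\mathbb{Q}$ and an adapted process $S$, coincides with $\bar{\mathcal{P}}_j^{\mathrm{d}}(\psi)$. Crucially, the condition $\mathbb{E}_{\mathbb{Q}}(S_{t+1}^{\psi\ast}\mid\mathcal{F}_t)\in\mathcal{Q}_t^{\ast}$ arises as the dual constraint corresponding to the free choice of the continuation portfolio $z_{t+1}^\psi\in\mathcal{L}_t$, which is constrained on the primal side only by future rebalancing; the appearance of $\mathcal{Q}_t^{\ast}$ in place of $\mathcal{K}_t^{\ast}$ reflects deferred solvency. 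Strong LP duality (no-arbitrage $\mathcal{P}\neq\varnothing$ rules out an unbounded primal) yields the dual equality for each fixed $\psi$. A minimax exchange, justified by the bilinearity of the Lagrangian in $(z,\psi)$ and the compactness/convexity of both feasible sets, then delivers the outer $\max_\psi$ in the claimed representation.

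\textbf{Main obstacle.} The principal delicacy is the emergence of $\mathcal{Q}_t^{\ast}$ rather than $\mathcal{K}_t^{\ast}$ as the polar cone in the dual constraints: one must correctly identify that the primal variables $z_{t+1}^\psi$ are restricted only by eventual (not immediate) solvency through the forward recursion, which enlarges the primal cone and dually tightens the polar from $\mathcal{K}_t^{\ast}$ to $\mathcal{Q}_t^{\ast}$. Secondary care is required throughout the telescoping to keep track of which random variables are $\mathcal{F}_t$- versus $\mathcal{F}_{t-1}$-measurable, particularly when applying the tower property to $S_{t+1}^{\psi\ast}\cdot z_{t+1}^\psi$. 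An alternative to the LP route is a Hahn--Banach separation applied to the convex cone of seller-superhedgeable initial endowments, whose separating hyperplane produces the dual triple $(\psi,\mathbb{Q},S)$ directly.
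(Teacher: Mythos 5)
First, a point of orientation: the present paper does not prove Theorem~\ref{Thm:jjf658wgffvdsdf85} at all -- it is imported verbatim from \cite[Theorem~4.2]{rouxzastawniak2014} -- so your proposal has to be measured against the proof given there and against the closely analogous game-option arguments in Section~\ref{Sect:Appendix}. Your weak-duality half is sound in outline: fixing $z\in\Psi^{\mathrm{a}}(Z)$, $\psi\in\mathcal{X}$ and $(\mathbb{Q},S)\in\bar{\mathcal{P}}_j^{\mathrm{d}}(\psi)$, telescoping backwards, and discarding the rebalancing residual $\kappa_t\in\mathcal{K}_t\subseteq\mathcal{Q}_t$ via $\mathbb{E}_{\mathbb{Q}}(S_{t+1}^{\psi\ast}\,|\,\mathcal{F}_t)\in\mathcal{Q}_t^{\ast}$ is the standard argument, and correctly makes no use of non-anticipation.

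The strong-duality half has a genuine gap. You reduce it to ``LP duality for each fixed $\psi$'' plus ``a minimax exchange justified by bilinearity and compactness'', but the primal problem has the quantifier structure \emph{there exists one non-anticipating family $(z^\psi)_{\psi}$ hedging every $\psi$ with immediate solvency} ($\mathcal{K}_t$ in the rebalancing constraint), whereas the claimed dual puts $\max_\psi$ outside, i.e.\ corresponds to hedging each $\psi$ separately. Bridging these is not a routine saddle-point argument: the seller's feasible set is a $\psi$-indexed family of strategies tied together by the non-anticipation condition, which is not a convexity hypothesis on a product of convex sets, so Sion-type minimax does not apply as stated. The mechanism that actually closes this gap -- both in the cited reference and in its game-option analogues, Propositions~\ref{Prop:7djhb78a} and~\ref{Prop:7d4hnk0a} -- is a constructive equivalence: the fixed-$\psi$ problem must be posed with the \emph{deferred} solvency cones $\mathcal{Q}_t$ in its rebalancing constraints, and one shows, by explicitly assembling liquidation strategies weighted by $\psi_s$ and $\psi_{s+1}^{\ast}$, that any deferred-solvency hedge against a single known $\psi$ can be upgraded to a non-anticipating immediate-solvency hedge against all $\psi$ with the same initial endowment, and conversely (testing against the strategies $u^{\chi^t}$). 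This is where $\mathcal{Q}_t$, and hence $\mathcal{Q}_t^{\ast}$, genuinely enters; your explanation that $z_{t+1}^{\psi}$ is ``restricted only by eventual solvency through the forward recursion'' misattributes it, since in the primal as defined the constraint cone is $\mathcal{K}_t$ and the enlargement to $\mathcal{Q}_t$ has to be earned by the liquidation construction. Once that equivalence is in place, the representation follows from a backward-induction dual description of the polyhedral sets of superhedging endowments, with the outer $\max_\psi$ emerging step by step from the recursion rather than from a minimax theorem; your LP or separation route is then fine for each fixed $\psi$, but only after the reduction.
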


\section{Game options with gradual exercise and cancellation}\label{sec:3}

A game option as introduced in~\cite{kiefer2000} and studied in
\cite{roux2016} is a contract between an option buyer and seller, which gives
the buyer the right to exercise the option at any stopping time $\tau
\in\mathcal{T}$, and also gives the seller the right to cancel the option at
any stopping time $\sigma\in\mathcal{T}$. There are two adapted processes
$Y=(Y_{t})_{t=0}^{T}$ and $X=(X_{t})_{t=0}^{T}$ which determine, respectively,
the payoffs due when exercising and cancelling the option. In the presence of
transaction costs $Y$ and~$X$ are $\mathbb{R}^{d}$-valued
(i.e.\ portfolio-valued) processes; see~\cite{kifer2013} in the case when
$d=2$ and~\cite{roux2016} for any $d\geq2$. The seller has to deliver the
portfolio~$Y_{\tau}$ to the buyer at time~$\tau$ when $\sigma\geq\tau$ or the
portfolio~$X_{\sigma}$ at time~$\sigma$ when \mbox{$\sigma<\tau$}. That is,\ the
option will be terminated at time $\sigma\wedge\tau$ due to exercise or
cancellation, and the portfolio%
\[
Q_{\sigma,\tau}:=\mathbf{1}_{\left\{  \sigma\geq\tau\right\}  }Y_{\tau
}+\mathbf{1}_{\left\{  \sigma<\tau\right\}  }X_{\sigma}%
\]
will be changing hands at that time. Observe that exercising the option takes
priority over cancellation when $\sigma=\tau$. Additionally, it is assumed
that%
\begin{equation}
X_{t}-Y_{t}\in\mathcal{K}_{t}\quad\text{for each }t=0,\ldots,T.
\label{eq:hhjfd6487shd6543}%
\end{equation}
The difference $X_{t}-Y_{t}$ can be regarded as a penalty payable by the
seller on top of the payoff~$Y_{t}$ when cancelling the option. We shall refer
to an option of this kind as a \emph{game }(or \emph{Israeli})\emph{ option
with instant exercise and cancellation}, to distinguish it from one with
gradual exercise and cancellation as described below.

In the present work we allow both the buyer and seller the freedom to exercise
or, respectively, to cancel the option gradually according to mixed stopping
times. If the buyer chooses a mixed stopping time $\psi\in\mathcal{X}$ as the
exercise time and the seller selects a mixed stopping time $\phi\in
\mathcal{X}$ to be the cancellation time, then on each trading date
$t=0,\ldots,T$ the buyer will first be exercising a fraction $\psi_{t}%
/\psi_{t}^{\ast}$ of the current position in the option, and then the seller
will be cancelling a fraction $\phi_{t}/\phi_{t}^{\ast}$ of the remaining
position in the option, where $\psi_{t}^{\ast}$ and~$\phi_{t}^{\ast}$ are
given by~(\ref{eq:hfh456wfd648d}). Once again, exercising takes priority
over cancellation.

In these circumstances, starting with an initial position of $\psi_{0}^{\ast
}\phi_{0}^{\ast}=1$~option at time~$0$, we are going to show by induction that
$\psi_{t}^{\ast}\phi_{t}^{\ast}$ of the option will neither be exercised nor
cancelled before time~$t$, for each $t=0,\ldots,T$. It means that $\psi
_{t}/\psi_{t}^{\ast}$ of the current position $\psi_{t}^{\ast}\phi_{t}^{\ast}%
$, that is, $\left(  \psi_{t}/\psi_{t}^{\ast}\right)  \psi_{t}^{\ast}\phi
_{t}^{\ast}=\psi_{t}\phi_{t}^{\ast}$ of the option will be exercised at~$t$,
given that the buyer has priority to exercise. The remaining position in the
option will then be $\psi_{t}^{\ast}\phi_{t}^{\ast}-\psi_{t}\phi_{t}^{\ast
}=\left(  \psi_{t}^{\ast}-\psi_{t}\right)  \phi_{t}^{\ast}=\psi_{t+1}^{\ast
}\phi_{t}^{\ast}$, hence $\left(  \phi_{t}/\phi_{t}^{\ast}\right)  \psi
_{t+1}^{\ast}\phi_{t}^{\ast}=\psi_{t+1}^{\ast}\phi_{t}$ of the option will be
cancelled by the seller at~$t$. Altogether, $\psi_{t}\phi_{t}^{\ast}%
+\psi_{t+1}^{\ast}\phi_{t}$ of the option will be terminated at~$t$ due to
exercise or cancellation, leaving
\begin{align*}
\psi_{t}^{\ast}\phi_{t}^{\ast}-\psi_{t}\phi_{t}^{\ast}-\psi_{t+1}^{\ast}%
\phi_{t}  &  =\left(  \psi_{t}+\psi_{t+1}^{\ast}\right)  \left(  \phi_{t}%
+\phi_{t+1}^{\ast}\right)  -\psi_{t}\left(  \phi_{t}+\phi_{t+1}^{\ast}\right)
-\psi_{t+1}^{\ast}\phi_{t}\\
&  =\psi_{t+1}^{\ast}\phi_{t+1}^{\ast}%
\end{align*}
of the option neither exercised nor cancelled before or at~$t$, to be carried
forward to time~$t+1$. This completes the induction.

\begin{remark}
\upshape The \emph{minimum} $\psi\wedge\phi$
of mixed stopping times $\psi,\phi\in\mathcal{X}$ can be defined as%
\[
(\psi\wedge\phi)_{t}:=\psi_{t}\phi_{t}^{\ast}+\psi_{t+1}^{\ast}\phi_{t}%
\]
for each $t=0,\ldots,T$; see~\cite{kifer2013}. The above argument shows that a
game option with gradual exercise and cancellation will be terminated
according to the mixed stopping time $\psi\wedge\phi$.
\end{remark}

On each trading date $t=0,\ldots,T$, since $\psi_{t}\phi_{t}^{\ast}$ of the
option is to be exercised and $\psi_{t+1}^{\ast}\phi_{t}$ of the option to be
cancelled, the seller will be delivering to the buyer the portfolio%
\[
G_{t}^{\phi,\psi}:=\psi_{t}\phi_{t}^{\ast}Y_{t}+\psi_{t+1}^{\ast}\phi_{t}%
X_{t},
\]
where $Y=(Y_{t})_{t=0}^{T}$ and $X=(X_{t})_{t=0}^{T}$ are the exercise and
cancellation processes characterising the game option, that is, $\mathbb{R}%
^{d}$-valued adapted processes that satisfy~(\ref{eq:hhjfd6487shd6543}).
Clearly, $G^{\phi,\psi}=(G_{t}^{\phi,\psi})_{t=0}^{T}$ is an $\mathbb{R}^{d}%
$-valued adapted process, which we shall be referring to as the \emph{payoff
process} for the game option.

\begin{definition}
\upshape A \emph{game }(or\emph{ Israeli})\emph{ option }$(Y,X)$ \emph{with
gradual exercise and cancellation} is a derivative security that can be
exercised according to a mixed stopping time $\psi\in\mathcal{X}$ chosen by
the buyer or cancelled according to a mixed stopping time $\phi\in\mathcal{X}$
chosen by the seller, giving the buyer the right to receive and obliging the
seller to deliver the portfolio~$G_{t}^{\phi,\psi}=\psi_{t}\phi_{t}^{\ast
}Y_{t}+\psi_{t+1}^{\ast}\phi_{t}X_{t}$ on each trading date $t=0,\ldots,T$.
\end{definition}

\begin{remark}
\upshape In contrast to the above payoff process~$G_{t}^{\phi,\psi}$,
Kifer~\cite{kifer2013} refers to the random variable
\[
Q_{\phi,\psi}:=\sum_{s=0}^{T}\sum_{t=0}^{T}\phi_{s}\psi_{t}Q_{s,t}%
\]
as the `payoff' of a game option with exercise and cancellation according to
mixed stopping times~$\phi,\psi\in\mathcal{X}$, without specifying the time
instant when this portfolio should be changing hands. However, the payoff of
such an option should not be a single random variable but in fact an adapted
process representing the flow of portfolios to be delivered on each trading
date $t=0,\ldots,T$. We observe that%
\[
Q_{\phi,\psi}=\sum_{t=0}^{T}G_{t}^{\phi,\psi},
\]
i.e.\ $Q_{\phi,\psi}$ happens to be the total of all the~$G_{t}^{\phi,\psi}$
for $t=0,\ldots,T$.

In the present paper~$Q_{\phi,\psi}$ will prove useful in a different role.
Namely, identifying the mixed stopping time~$\chi^{t}\in\mathcal{X}$ with a
deterministic time~$t$, we are going to use $Q_{\phi,\chi^{t}}$ for
$t=0,\ldots,T$ as the payoff process of an American option with gradual
exercise and invoke the results of~\cite{rouxzastawniak2014} to establish a
probabilistic representation of the seller's price for a game option under
gradual exercise and cancellation; see Lemma~\ref{Lem:8f5sgdjf} and
Theorem~\ref{Thm:eudn9y65}. Similarly, in the buyer's
case, we are going to use an American option with gradual exercise and payoff
process $-Q_{\chi^{t},\psi}$ for $t=0,\ldots,T$; see Lemma~\ref{Lem:k856dg3a}
and Theorem~\ref{Thm:dhhd5436syy}.
\end{remark}

\section{Seller's price and superhedging strategies}\label{sec:4}

The seller of a game option $(Y,X)$ with gradual exercise and cancellation
needs to hedge against any mixed stopping time $\psi\in\mathcal{X}$ chosen by
the buyer to exercise the option. The seller can do this by following a
trading strategy \mbox{$u^{\psi}=(u_{t}^{\psi})_{t=0}^{T}\in\Phi$}, which may depend
on~$\psi$. Since~$u_{t}^{\psi}$ denotes a portfolio held over time step~$t$,
that is, between times $t-1$ and~$t$, it follows that~$u_{t}^{\psi}$ may
depend on the values $\psi_{0},\ldots,\psi_{t-1}$ known to the seller at
time~$t-1$, when this portfolio is to be created, but not on the yet unknown
(to the seller) values $\psi_{t},\ldots,\psi_{T}$. This is the reason for the
non-anticipation condition~(\ref{eq:ug75hsdj6}) in
Definition~\ref{Def:dn57sn122}.

In addition to choosing the trading strategy $u^{\psi}\in\Phi$, the seller can
select a mixed stopping time $\phi\in\mathcal{X}$ to cancel the option, and must be able to deliver
the portfolio~$G_{t}^{\phi,\psi}$ on each date $t=0,\ldots,T$ without
injecting any additional wealth into the strategy. This justifies the
rebalancing condition~(\ref{eq:f7enasg0}).

\begin{definition}
\label{Def:dn57sn122}\upshape For a game option $(Y,X)$ with gradual exercise
and cancellation, a \emph{seller's superhedging strategy }is a pair $(\phi
,u)$, where $\phi\in\mathcal{X}$ and $u:\mathcal{X}\rightarrow\Phi$, that
satisfies the \emph{rebalancing} condition%
\begin{equation}
\forall\psi\in\mathcal{X~}\forall t=0,\ldots,T:~u_{t}^{\psi}-G_{t}^{\phi,\psi
}-u_{t+1}^{\psi}\in\mathcal{K}_{t} \label{eq:f7enasg0}%
\end{equation}
and the \emph{non-anticipation} condition%
\begin{equation}
\forall\psi,\psi^{\prime}\in\mathcal{X}~\forall t=0,\ldots
,T:~\textstyle\bigcap_{s=0}^{t-1}\{\psi_{s}=\psi_{s}^{\prime}\}\subseteq
\{u_{t}^{\psi}=u_{t}^{\psi^{\prime}}\}. \label{eq:ug75hsdj6}%
\end{equation}
The family of such strategies will be denoted by~$\Phi^{\mathrm{a}}(Y,X)$.
\end{definition}

The least expensive (in a particular currency~$j$) seller's superhedging
strategy gives rise to the seller's price of the option.

\begin{definition}
\label{Def:nnf65uwsu48}\upshape The \emph{seller's }(or \emph{ask})\emph{
price} in currency $j=1,\ldots,d$ of a game option $(Y,X)$ with gradual
exercise and cancellation is defined as%
\[
\pi_{j}^{\mathrm{a}}(Y,X):=\inf\left\{  x\in\mathbb{R}\,|\,\exists(\phi
,u)\in\Phi^{\mathrm{a}}(Y,X):xe^{j}=u_{0}\right\}  .
\]

\end{definition}

\subsection{Seller's pricing algorithm}

The following is an iterative construction of the set of initial endowments
that allow superhedging the seller's position in a game option with gradual
exercise and cancellation.

\begin{construction}
\label{constr:seller}Construct adapted sequences
$\mathcal{Y}_{t}^{\mathrm{a}},\mathcal{X}_{t}^{\mathrm{a}},\mathcal{V}%
_{t}^{\mathrm{a}},\mathcal{W}_{t}^{\mathrm{a}},\mathcal{Z}_{t}^{\mathrm{a}}$
for $t=\break 0,\ldots,T$ as follows. First, put%
\[
\mathcal{Y}_{t}^{\mathrm{a}}:=Y_{t}+\mathcal{Q}_{t},\quad\mathcal{X}%
_{t}^{\mathrm{a}}:=X_{t}+\mathcal{Q}_{t}%
\]
for all $t=0,\ldots,T$ and%
\[
\mathcal{W}_{T}^{\mathrm{a}}:=\mathcal{V}_{T}^{\mathrm{a}}:=\mathcal{L}%
_{T},\quad\mathcal{Z}_{T}^{\mathrm{a}}:=\mathcal{Y}_{T}^{\mathrm{a}}.
\]
Then, for $t=T-1,\ldots,0$ define by backward induction%
\begin{align*}
\mathcal{W}_{t}^{\mathrm{a}}  &  :=\mathcal{Z}_{t+1}^{\mathrm{a}}%
\cap\mathcal{L}_{t},\\
\mathcal{V}_{t}^{\mathrm{a}}  &  :=\mathcal{W}_{t}^{\mathrm{a}}+\mathcal{Q}%
_{t},\\
\mathcal{Z}_{t}^{\mathrm{a}}  &  :=\operatorname*{conv}\{\mathcal{V}%
_{t}^{\mathrm{a}},\mathcal{X}_{t}^{\mathrm{a}}\}\cap\mathcal{Y}_{t}%
^{\mathrm{a}},
\end{align*}
where $\operatorname*{conv}\{\mathcal{V}_{t}^{\mathrm{a}},\mathcal{X}%
_{t}^{\mathrm{a}}\}$ is the convex hull of~$\mathcal{V}_{t}^{\mathrm{a}}$
and$~\mathcal{X}_{t}^{\mathrm{a}}$.
\end{construction}

By a similar argument as in the proof of Proposition~5.1
in~\cite{rouxzastawniak2014}, it follows that $\mathcal{Z}_{t}^{\mathrm{a}}$
are polyhedral convex sets for all~$t$. We shall see that $\mathcal{Z}%
_{0}^{\mathrm{a}}$ is the set of initial endowments that allow the seller to
superhedge their position in the game option $(Y,X)$ with gradual exercise and
cancellation. Once $\mathcal{Z}_{0}^{\mathrm{a}}$ has been constructed, the
following result can be used to obtain the seller's price of the option.

\begin{theorem}
\label{Thm:fjusab76}The seller's price in currency $j=1,\ldots,d$ of a game
option~$(Y,X)$ with gradual exercise and cancellation can be expressed as%
\[
\pi_{j}^{\mathrm{a}}(Y,X)=\min\left\{  x\in\mathbb{R}\,|\,xe^{j}\in
\mathcal{Z}_{0}^{\mathrm{a}}\right\}  .
\]

\end{theorem}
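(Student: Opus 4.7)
The plan is to identify $\mathcal{Z}_0^{\mathrm{a}}$ with the set of initial endowments from which the seller admits a superhedging strategy, so that
\[
\bigl\{x \in \mathbb{R} \bigm| \exists (\phi,u) \in \Phi^{\mathrm{a}}(Y,X) : xe^{j} = u_0 \bigr\} = \bigl\{x \in \mathbb{R} \bigm| xe^{j} \in \mathcal{Z}_0^{\mathrm{a}} \bigr\}.
\]
Granting this, since $\mathcal{Z}_0^{\mathrm{a}}$ is a closed polyhedral convex subset of $\mathcal{L}_0$ (by backward induction, as indicated by the authors and following the pattern of Proposition~5.1 in~\cite{rouxzastawniak2014}) that is upward-closed under addition of $e^j \in \mathcal{Q}_0$, the right-hand side is a closed half-line in $\mathbb{R}$, bounded below by no-arbitrage, and the infimum in Definition~\ref{Def:nnf65uwsu48} is attained as a minimum.

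To establish the identification I would prove by backward induction on $t = T, \ldots, 0$ that $\mathcal{Z}_t^{\mathrm{a}}$ is precisely the set of per-unit portfolios $\bar u_t \in \mathcal{L}_t$ from which the residual game option at time~$t$ can be superhedged. The key step is to rewrite the rebalancing condition $u_t^{\psi} - G_t^{\phi,\psi} - u_{t+1}^{\psi} \in \mathcal{K}_t$ in its normalised form
\[
\bar u_t - \alpha_t Y_t - (1-\alpha_t)\beta_t X_t - (1-\alpha_t)(1-\beta_t) \bar u_{t+1} \in \mathcal{K}_t,
\]
where $\bar u_t := u_t^{\psi}/(\psi_t^{\ast}\phi_t^{\ast})$, $\alpha_t := \psi_t/\psi_t^{\ast}$, and $\beta_t := \phi_t/\phi_t^{\ast}$, to be satisfied for every $\alpha_t \in [0,1]$ available to the buyer. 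Specialising to $\alpha_t = 1$ and using that after full exercise the tail $u_{t+1}^{\psi}, \ldots, u_{T+1}^{\psi} = 0$ is a liquidation strategy yields $\bar u_t \in Y_t + \mathcal{Q}_t = \mathcal{Y}_t^{\mathrm{a}}$, while $\alpha_t = 0$ yields $\bar u_t \in \beta_t \mathcal{X}_t^{\mathrm{a}} + (1-\beta_t) \mathcal{V}_t^{\mathrm{a}}$ with continuation $\bar u_{t+1} \in \mathcal{W}_t^{\mathrm{a}} \subseteq \mathcal{Z}_{t+1}^{\mathrm{a}} \cap \mathcal{L}_t$ (by the inductive hypothesis). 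Since $\mathcal{K}_t$ is a convex cone, these two extreme conditions jointly imply the inequality for every intermediate $\alpha_t$, which is exactly the content of $\mathcal{Z}_t^{\mathrm{a}} = \operatorname*{conv}\{\mathcal{V}_t^{\mathrm{a}}, \mathcal{X}_t^{\mathrm{a}}\} \cap \mathcal{Y}_t^{\mathrm{a}}$.

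The two implications now unfold symmetrically. For necessity, given $(\phi, u) \in \Phi^{\mathrm{a}}(Y,X)$ I would specialise the buyer's strategy to the extreme values $\alpha_t = 1$ and $\alpha_t = 0$ at the current step; the non-anticipation condition on $u^{\psi}$ guarantees that the same $\bar u_t$ serves both specialisations, forcing $\bar u_t \in \mathcal{Y}_t^{\mathrm{a}}$ and $\bar u_t \in \operatorname*{conv}\{\mathcal{V}_t^{\mathrm{a}}, \mathcal{X}_t^{\mathrm{a}}\}$ simultaneously, hence $\bar u_t \in \mathcal{Z}_t^{\mathrm{a}}$. For sufficiency, membership of $\bar u_t$ in $\mathcal{Z}_t^{\mathrm{a}}$ supplies an explicit polyhedral decomposition from which an $\mathcal{F}_t$-measurable selection of $\beta_t$ and of the continuation $\bar u_{t+1}$ can be extracted; rescaling back via $u_t^{\psi} := \psi_t^{\ast} \phi_t^{\ast} \bar u_t$ then produces the seller's superhedging strategy step by step. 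At $t = 0$, where $\psi_0^{\ast} = \phi_0^{\ast} = 1$, this gives the desired equivalence $u_0 \in \mathcal{Z}_0^{\mathrm{a}}$.

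The main obstacle will be carrying out the measurable selection in the sufficiency part while respecting both the non-anticipation condition on $u^{\psi}$ and the constraint that $\phi$ is a single mixed stopping time chosen independently of $\psi$. Polyhedrality of all the sets $\mathcal{Z}_t^{\mathrm{a}}, \mathcal{V}_t^{\mathrm{a}}, \mathcal{W}_t^{\mathrm{a}}$ combined with the finiteness of $\Omega$ reduces this to standard selection from polyhedral correspondences, analogous to but more elaborate than the corresponding step for American options in~\cite{rouxzastawniak2014}; the new twist is the bilinear coupling of $\alpha_t$ and $\beta_t$, which the convex hull construction $\operatorname*{conv}\{\mathcal{V}_t^{\mathrm{a}}, \mathcal{X}_t^{\mathrm{a}}\}$ is designed to linearise.
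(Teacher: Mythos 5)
Your proposal follows essentially the same route as the paper: what you present as a single backward induction is factored there into two steps, first showing (Proposition~\ref{Prop:7djhb78a}) that $\mathcal{Z}_0^{\mathrm{a}}$ is the set of initial endowments for an auxiliary family $\Lambda^{\mathrm{a}}(Y,X)$ that encodes exactly your two extreme specialisations $\alpha_t\in\{0,1\}$ with deferred solvency, and then (Proposition~\ref{Prop:7d4hnk0a}) that linearity in $\psi$ lets one splice the continuation strategy with the liquidation strategies to cover every intermediate exercise intensity; the final passage from infimum to minimum via polyhedrality of $\mathcal{Z}_0^{\mathrm{a}}$ is identical. The one point where your sketch needs care is the normalisation $\bar u_t = u_t^{\psi}/(\psi_t^{\ast}\phi_t^{\ast})$, which is undefined on the event $\{\psi_t^{\ast}\phi_t^{\ast}=0\}$ where the option has already been fully exercised or cancelled; the paper's induction accordingly carries a case split, asserting $z_t\in\phi_t^{\ast}\mathcal{Z}_t^{\mathrm{a}}$ only on $\{\phi_t^{\ast}>0\}$ and $z_t\in\mathcal{Q}_t$ on $\{\phi_t^{\ast}=0\}$, and some such provision is needed for the inductive step to close.
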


To prove this theorem, we introduce an auxiliary family $\Lambda^{\mathrm{a}%
}(Y,X)$, the elements of which can be thought of as the strategies
superhedging the seller's position in a game option with gradual cancellation,
instant (rather than gradual) exercise and deferred (rather than immediate)
solvency. The theorem and the following propositions are proved in the
Appendix, Section~\ref{Sect:Appendix}.

\begin{definition}
\upshape We define $\Lambda^{\mathrm{a}}(Y,X)$ as the family consisting of all
pairs $(\phi,z)$, where $\phi\in\mathcal{X}$ and $z\in\Phi$, that satisfy the
conditions%
\begin{align*}
z_{t}-\phi_{t}X_{t}-z_{t+1}  &  \in\mathcal{Q}_{t}\quad\text{for all
}t=0,\ldots,T-1,\\
z_{t}-\phi_{t}^{\ast}Y_{t}  &  \in\mathcal{Q}_{t}\quad\text{for all
}t=0,\ldots,T.
\end{align*}

\end{definition}

According to the next proposition, $\mathcal{Z}_{0}^{\mathrm{a}}$ coincides
with the set of initial endowments for the strategies in $\Lambda^{\mathrm{a}%
}(Y,X)$.

\begin{proposition}
\label{Prop:7djhb78a}%
\[
\mathcal{Z}_{0}^{\mathrm{a}}=\left\{  z_{0}\in\mathbb{R}^{d}\,|\,(\phi
,z)\in\Lambda^{\mathrm{a}}(Y,X)\right\}  .
\]

\end{proposition}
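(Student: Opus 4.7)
The plan is to prove the proposition by establishing, via backward induction on $t=T,\ldots,0$, the stronger statement $\mathcal{Z}_t^{\mathrm{a}}=\mathcal{S}_t$, where $\mathcal{S}_t\subseteq\mathcal{L}_t$ consists of those $z_t$ admitting adapted nonnegative $\phi_t,\ldots,\phi_T$ with $\sum_{s=t}^T\phi_s=1$ and predictable $z_{t+1},\ldots,z_{T+1}$ (with $z_{T+1}=0$) satisfying
\[
z_s-\phi_s X_s-z_{s+1}\in\mathcal{Q}_s \text{ for } s=t,\ldots,T-1, \qquad z_s-\Phi_s Y_s\in\mathcal{Q}_s \text{ for } s=t,\ldots,T,
\]
where $\Phi_s:=\sum_{r=s}^T\phi_r$. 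At $t=0$ we have $\Phi_s=\phi_s^{\ast}$, so $\mathcal{S}_0$ coincides with the right-hand side of the proposition. The base case $t=T$ is immediate: only $\phi_T\equiv1$ and the condition $z_T-Y_T\in\mathcal{Q}_T$ survive, giving $\mathcal{S}_T=Y_T+\mathcal{Q}_T=\mathcal{Y}_T^{\mathrm{a}}=\mathcal{Z}_T^{\mathrm{a}}$.

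For the inductive step, assume $\mathcal{Z}_{t+1}^{\mathrm{a}}=\mathcal{S}_{t+1}$. For $\mathcal{Z}_t^{\mathrm{a}}\subseteq\mathcal{S}_t$, take $z_t\in\mathcal{Z}_t^{\mathrm{a}}\subseteq\mathcal{Y}_t^{\mathrm{a}}$ and write $z_t=\lambda v_t+(1-\lambda)x_t$ with $\mathcal{F}_t$-measurable $\lambda\in[0,1]$, $v_t=w_t+q_1\in\mathcal{W}_t^{\mathrm{a}}+\mathcal{Q}_t$, and $x_t=X_t+q_2\in X_t+\mathcal{Q}_t$. By IH, $w_t\in\mathcal{S}_{t+1}$ supplies witnesses $(\tilde\phi_{t+1},\ldots,\tilde\phi_T)$ and $(\tilde z_{t+2},\ldots,\tilde z_{T+1})$; set $\phi_t:=1-\lambda$, $\phi_s:=\lambda\tilde\phi_s$ for $s>t$, $z_{t+1}:=\lambda w_t$, and $z_s:=\lambda\tilde z_s$ for $s\geq t+2$. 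Then $\sum_s\phi_s=1$; at time $t$ the rebalancing condition reduces to $\lambda q_1+(1-\lambda)q_2\in\mathcal{Q}_t$ and the exercise condition to $z_t-Y_t\in\mathcal{Q}_t$; for $s>t$ the conditions are $\lambda$-multiples of the witness conditions and remain in $\mathcal{Q}_s$ by the cone property. For the reverse inclusion $\mathcal{S}_t\subseteq\mathcal{Z}_t^{\mathrm{a}}$, the exercise condition at $t$ with $\Phi_t=1$ gives $z_t\in\mathcal{Y}_t^{\mathrm{a}}$; set $\lambda:=1-\phi_t$ and split $\Omega$ on $A:=\{\phi_t=1\}\in\mathcal{F}_t$. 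On $A^c$, the rescaled tail $\tilde\phi_s:=\phi_s/\lambda$, $\tilde z_s:=z_s/\lambda$ ($s\geq t+1$) satisfies the $\mathcal{S}_{t+1}$-conditions, so $\tilde z_{t+1}\in\mathcal{Z}_{t+1}^{\mathrm{a}}\cap\mathcal{L}_t=\mathcal{W}_t^{\mathrm{a}}$ by IH; with $q:=z_t-\phi_t X_t-z_{t+1}\in\mathcal{Q}_t$, one obtains $z_t=\lambda(\tilde z_{t+1}+q)+(1-\lambda)(X_t+q)\in\lambda\mathcal{V}_t^{\mathrm{a}}+(1-\lambda)\mathcal{X}_t^{\mathrm{a}}$. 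On $A$, $\phi_t=1$ forces $\phi_s=0$ for $s>t$, whence the tail exercise conditions give $z_{t+1}\in\mathcal{Q}_{t+1}\cap\mathcal{L}_t\subseteq\mathcal{Q}_t$, so $z_t-X_t=z_{t+1}+q\in\mathcal{Q}_t$ and $z_t\in\mathcal{X}_t^{\mathrm{a}}$. Pasting via $\mathcal{F}_t$-measurable indicators, using stability of $\mathcal{V}_t^{\mathrm{a}},\mathcal{W}_t^{\mathrm{a}},\mathcal{X}_t^{\mathrm{a}},\mathcal{Q}_t$ under gluing across disjoint $\mathcal{F}_t$-events, produces $v\in\mathcal{V}_t^{\mathrm{a}}$ and $x\in\mathcal{X}_t^{\mathrm{a}}$ with $z_t=\lambda v+(1-\lambda)x$, placing $z_t$ in $\operatorname{conv}\{\mathcal{V}_t^{\mathrm{a}},\mathcal{X}_t^{\mathrm{a}}\}\cap\mathcal{Y}_t^{\mathrm{a}}=\mathcal{Z}_t^{\mathrm{a}}$.

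The main obstacle is the rescaling step in the reverse inclusion: division by $\lambda=1-\phi_t$ fails on the $\mathcal{F}_t$-event $\{\phi_t=1\}$, requiring a separate argument that extracts $z_{t+1}\in\mathcal{Q}_t$ from the tail liquidation of the given witness strategy and then pastes the two cases back together into a single random convex combination. A secondary technical point is the need to interpret $\operatorname{conv}\{\mathcal{V}_t^{\mathrm{a}},\mathcal{X}_t^{\mathrm{a}}\}$ with $\mathcal{F}_t$-measurable weights, so that the random coefficient $\lambda$ constructed from $\phi_t$ is admissible and the stability of the relevant sets under $\mathcal{F}_t$-gluing can be invoked.
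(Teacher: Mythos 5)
Your proposal is correct and follows essentially the same route as the paper: your invariant $\mathcal{Z}_t^{\mathrm{a}}=\mathcal{S}_t$ with normalised tail mass is the rescaled form of the paper's backward-induction claim $z_t\in\phi_t^{\ast}\mathcal{Z}_t^{\mathrm{a}}$ on $\{\phi_t^{\ast}>0\}$ and $z_t\in\mathcal{Q}_t$ on $\{\phi_t^{\ast}=0\}$, and your forward inclusion $\mathcal{Z}_t^{\mathrm{a}}\subseteq\mathcal{S}_t$ unrolls into the paper's forward construction of $(\phi,z)$ via the same convex-hull decomposition. Your separate treatment of the event $\{\phi_t=1\}$, where rescaling fails, plays exactly the role of the paper's $\{\phi_t^{\ast}=0\}$ branch, so the two arguments differ only in packaging.
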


We also claim that the set of initial endowments for the strategies in
$\Lambda^{\mathrm{a}}(Y,X)$ coincides with that for the strategies in
$\Phi^{\mathrm{a}}(Y,X)$.

\begin{proposition}
\label{Prop:7d4hnk0a}%
\[
\left\{  z_{0}\in\mathbb{R}^{d}\,|\,(\phi,z)\in\Lambda^{\mathrm{a}%
}(Y,X)\right\}  =\left\{  u_{0}\in\mathbb{R}^{d}\,|\,(\phi,u)\in
\Phi^{\mathrm{a}}(Y,X)\right\}  .
\]

\end{proposition}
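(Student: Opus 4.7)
The plan is to prove the two set inclusions by explicit strategy constructions.

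For the inclusion $\{u_0\mid(\phi,u)\in\Phi^{\mathrm{a}}(Y,X)\}\supseteq\{z_0\mid(\phi,z)\in\Lambda^{\mathrm{a}}(Y,X)\}$, which is the harder direction, I would proceed in two stages. In Stage~1 I modify $z$ so as to absorb the deferred-solvency slack in its cancellation rebalancing. Using $\mathcal{Q}_t=\mathcal{K}_t+\mathcal{Q}_{t+1}\cap\mathcal{L}_t$, for each $t<T$ decompose $z_t-\phi_tX_t-z_{t+1}=k_t+q_t$ with $k_t\in\mathcal{K}_t$ and $q_t\in\mathcal{Q}_{t+1}\cap\mathcal{L}_t$. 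Since $q_t\in\mathcal{Q}_{t+1}$, pick a liquidation $(l^{(t)}_s)_{s=t+2}^{T+1}$ of $q_t$ and extend it to $s=t+1$ by $l^{(t)}_{t+1}:=q_t$. Then $\tilde z_t:=z_t+\sum_{s<t}l^{(s)}_t$ defines a predictable process with $\tilde z_0=z_0$, $\tilde z_{T+1}=0$, and $\tilde z_t-\phi_tX_t-\tilde z_{t+1}\in\mathcal{K}_t$ for $t<T$ (the $\mathcal{Q}$-residuals cancel via the telescoping trick). Because each $l^{(s)}_t$ (with $s<t$) admits $l^{(s)}_{t+1},\ldots,l^{(s)}_{T+1}=0$ as a liquidation starting at time~$t$ and hence lies in $\mathcal{Q}_t$, the second $\Lambda$-condition $\tilde z_t-\phi_t^\ast Y_t\in\mathcal{Q}_t$ is inherited from~$z$.

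In Stage~2 I pick, for each $t$, a liquidation $(\mu^{(t)}_s)_{s=t+1}^{T+1}$ of $\tilde z_t-\phi_t^\ast Y_t\in\mathcal{Q}_t$ and define
\[
u_t^\psi:=\psi_t^\ast\tilde z_t+\sum_{s=0}^{t-1}\psi_s\mu^{(s)}_t,\qquad u_{T+1}^\psi:=0.
\]
Since $\psi_t^\ast$ is $\mathcal{F}_{t-1}$-measurable and $\mu^{(s)}_t\in\mathcal{L}_{t-1}$, we have $u^\psi\in\Phi$; the dependence on $\psi$ only through $\psi_0,\ldots,\psi_{t-1}$ yields non-anticipation; and $u_0=\psi_0^\ast\tilde z_0=z_0$. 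Using $\psi_t^\ast=\psi_t+\psi_{t+1}^\ast$, a direct computation gives
\[
u_t^\psi-G_t^{\phi,\psi}-u_{t+1}^\psi=\psi_t\bigl(\tilde z_t-\phi_t^\ast Y_t-\mu^{(t)}_{t+1}\bigr)+\psi_{t+1}^\ast\bigl(\tilde z_t-\phi_tX_t-\tilde z_{t+1}\bigr)+\sum_{s<t}\psi_s\bigl(\mu^{(s)}_t-\mu^{(s)}_{t+1}\bigr),
\]
which is a non-negative $\mathcal{F}_t$-measurable combination of elements of $\mathcal{K}_t$ and therefore lies in $\mathcal{K}_t$.

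For the reverse inclusion, given $(\phi,u)\in\Phi^{\mathrm{a}}(Y,X)$, I would set $z_t:=u_t^{\chi^T}$. For $t<T$, $G_t^{\phi,\chi^T}=\phi_tX_t$, so the $\Phi$-rebalancing at $\psi=\chi^T$ gives $z_t-\phi_tX_t-z_{t+1}\in\mathcal{K}_t\subseteq\mathcal{Q}_t$. For the second $\Lambda$-condition, non-anticipation forces $u_t^{\chi^T}=u_t^{\chi^t}$ (since $\chi^T$ and $\chi^t$ agree on coordinates $0,\ldots,t-1$), and rebalancing at $\psi=\chi^t$ yields $u_t^{\chi^t}-\phi_t^\ast Y_t-u_{t+1}^{\chi^t}\in\mathcal{K}_t$. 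Since $G_s^{\phi,\chi^t}=0$ for $s>t$, the tail $u_{t+1}^{\chi^t},\ldots,u_{T+1}^{\chi^t}=0$ constitutes a liquidation strategy, certifying $z_t-\phi_t^\ast Y_t\in\mathcal{Q}_t$. The identity $z_0=u_0$ follows from non-anticipation at $t=0$.

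The main technical obstacle is Stage~1: the $\mathcal{Q}_t$-residuals in the cancellation rebalancing must be propagated forward using a coherently chosen family of liquidations, in such a way that simultaneously (i)~the modified $\tilde z$ is $\mathcal{K}_t$-self-financing for cancellation, (ii)~the deferred-solvency property $\tilde z_t-\phi_t^\ast Y_t\in\mathcal{Q}_t$ is preserved, and (iii)~each added $l^{(s)}_t$ is itself $\mathcal{Q}_t$-liquidatable so that Stage~2 remains valid. The telescoping choice $l^{(t)}_{t+1}:=q_t$ turns out to make all three requirements simultaneously compatible.
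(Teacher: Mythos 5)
Your proposal is correct and follows essentially the same route as the paper: the reverse inclusion via $z:=u^{\chi^{T}}$, non-anticipation, and the tail of $u^{\chi^{t}}$ serving as a liquidation strategy is the paper's argument almost verbatim, and the forward inclusion is the same liquidation-strategy bookkeeping, merely reorganized into two stages. The paper accomplishes your Stages~1 and~2 in a single formula, $u_{t}^{\psi}:=\psi_{t}^{\ast}z_{t}+\sum_{s=0}^{t-1}\psi_{s+1}^{\ast}y_{t}^{s}+\sum_{s=0}^{t-1}\psi_{s}x_{t}^{s}$ with $y^{s}$ and $x^{s}$ liquidating $z_{s}-\phi_{s}X_{s}-z_{s+1}\in\mathcal{Q}_{s}$ and $z_{s}-\phi_{s}^{\ast}Y_{s}\in\mathcal{Q}_{s}$ respectively, which avoids the intermediate process $\tilde{z}$ and the appeal to the recursion $\mathcal{Q}_{t}=\mathcal{K}_{t}+\mathcal{Q}_{t+1}\cap\mathcal{L}_{t}$.
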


It follows from Propositions~\ref{Prop:7djhb78a} and~\ref{Prop:7d4hnk0a}
that~$\mathcal{Z}_{0}^{\mathrm{a}}$ is the family of initial endowments for
all strategies superhedging the seller's position in a game option with
gradual exercise and cancellation. This is what's needed to prove
Theorem~\ref{Thm:fjusab76}, which links the seller's price $\pi_{j}%
^{\mathrm{a}}(Y,X)$ with~$\mathcal{Z}_{0}^{\mathrm{a}}$. Full details can be
found in the Appendix, Section~\ref{Sect:Appendix}.

\subsection{Seller's price representation\label{Sect:sel_price_rep}}

In this section we obtain a dual representation of the seller's price for game
options with gradual exercise and cancellation. This relies on a similar
result established in \cite{rouxzastawniak2014} for American options with
gradual exercise; see Theorem~\ref{Thm:jjf658wgffvdsdf85}.

Observe that, by Definition~\ref{Def:nnf65uwsu48},%
\begin{align*}
\pi_{j}^{\mathrm{a}}(Y,X)  &  =\inf\left\{  x\in\mathbb{R}\,|\,\exists
(\phi,u)\in\Phi^{\mathrm{a}}(Y,X):xe^{j}=u_{0}\right\} \\
&  =\inf_{\phi\in\chi}\inf\left\{  x\in\mathbb{R}\,|\,\exists u:(\phi
,u)\in\Phi^{\mathrm{a}}(Y,X),xe^{j}=u_{0}\right\}  .
\end{align*}
Hence, as a consequence of Lemma~\ref{Lem:8f5sgdjf} below, together with
Definition~\ref{Def:hfy468ahd564vk}, we have
\begin{align*}
\pi_{j}^{\mathrm{a}}(Y,X)  &  =\inf_{\phi\in\chi}\inf\left\{  x\in
\mathbb{R}\,|\,\exists z\in\Psi^{\mathrm{a}}(Q_{\phi,\,\cdot\,}):xe^{j}%
=z_{0}\right\} \\
&  =\inf_{\phi\in\chi}p_{j}^{\mathrm{a}}(Q_{\phi,\,\cdot\,}),
\end{align*}
where $Q_{\phi,\,\cdot\,}=(Q_{\phi,t})_{t=0}^{T}$ with
\[
Q_{\phi,t}:=Q_{\phi,\chi^{t}}\text{\quad for }t=0,\ldots,T
\]
is the payoff process for an American option with gradual exercise, and where
$p_{j}^{\mathrm{a}}(Q_{\phi,\,\cdot\,})$ is the seller's price of such an American option.

\begin{lemma}
\label{Lem:8f5sgdjf}For any $\phi\in\mathcal{X}$%
\[
\left\{  u_{0}\,|\,(\phi,u)\in\Phi^{\mathrm{a}}(Y,X)\right\}  =\left\{
z_{0}\,|\,z\in\Psi^{\mathrm{a}}(Q_{\phi,\,\cdot\,})\right\}  ,
\]
where $Q_{\phi,\,\cdot\,}=(Q_{\phi,t})_{t=0}^{T}$ is the payoff process of an
American option.
\end{lemma}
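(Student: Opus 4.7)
The plan is to build an explicit, invertible correspondence $(\phi,u) \leftrightarrow z$ that preserves the time-$0$ portfolio. First I would unpack the American payoff: since $\chi^t$ selects the deterministic time $t$,
\[
Q_{\phi,t} = Q_{\phi,\chi^t} = \sum_{s=0}^T \phi_s Q_{s,t} = \phi_t^* Y_t + M_t,
\qquad\text{where}\qquad M_t := \sum_{s=0}^{t-1} \phi_s X_s.
\]
The process $M$ is predictable and captures the cancellation arrears: the game payoff $G_t^{\phi,\psi}$ delivers the cancelled portion $\phi_t X_t$ at the date $t$ of cancellation, whereas the American payoff $Q_{\phi,t}$ carries all earlier cancellations forward inside $M_t$. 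Aligning these two timing conventions is precisely what the correspondence needs to do.

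Concretely I would set
\[
z_t^\psi := u_t^\psi + \psi_t^* M_t \qquad (t = 0,\dots,T+1).
\]
Predictability of $z^\psi$ follows from predictability of $u^\psi$, $\psi^*$ and $M$; since $\psi_0^* = 1$ and $M_0 = 0$ one has $z_0 = u_0$, and since $\psi_{T+1}^* = 0$ one has $z_{T+1} = 0$, so $z^\psi \in \Phi$. For non-anticipation, on $\bigcap_{s<t}\{\psi_s = \psi'_s\}$ one has $\psi_t^* = \psi'^*_t$ (both equal $1 - \sum_{s<t}\psi_s$), combined with $u_t^\psi = u_t^{\psi'}$ from~(\ref{eq:ug75hsdj6}) this forces $z_t^\psi = z_t^{\psi'}$, giving the American non-anticipation condition.

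The heart of the argument is a short telescoping identity showing that the two rebalancing conditions are equivalent. Using $M_{t+1} - M_t = \phi_t X_t$ together with $\psi_t^* - \psi_{t+1}^* = \psi_t$, one computes
\[
z_t^\psi - \psi_t Q_{\phi,t} - z_{t+1}^\psi
= \bigl(u_t^\psi - u_{t+1}^\psi\bigr) + \bigl(\psi_t^* M_t - \psi_{t+1}^* M_{t+1}\bigr) - \psi_t\bigl(\phi_t^* Y_t + M_t\bigr),
\]
and the $M$-terms collapse to $-\psi_{t+1}^* \phi_t X_t$, leaving exactly $u_t^\psi - G_t^{\phi,\psi} - u_{t+1}^\psi$. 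Hence~(\ref{eq:f7enasg0}) for $(\phi,u)$ is the same membership of $\mathcal{K}_t$ as~(\ref{eq:jgt775g10}) for $z$. The construction is plainly invertible: given $z \in \Psi^{\mathrm{a}}(Q_{\phi,\,\cdot\,})$ set $u_t^\psi := z_t^\psi - \psi_t^* M_t$ to recover a pair in $\Phi^{\mathrm{a}}(Y,X)$ with the same $u_0 = z_0$. Both directions thus identify the respective sets of initial endowments. The only step that is not mechanical is spotting the shift $\psi_t^* M_t$; once one realises that the American-option formalism books the cancellation payoffs only at the exercise date, the rest is a routine telescoping check.
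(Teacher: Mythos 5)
Your proposal is correct and coincides with the paper's own proof: the same shift $z_t^\psi = u_t^\psi + \psi_t^*\sum_{s=0}^{t-1}\phi_s X_s$, the same computation of $Q_{\phi,t}=\phi_t^*Y_t+\sum_{s=0}^{t-1}\phi_s X_s$, the same telescoping identity equating the two rebalancing conditions, and the same observation that the map is invertible and preserves $u_0=z_0$. Nothing to add.
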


The lemma is proved in the Appendix, Section~\ref{Sect:Appendix}. It turns out
that the infimum over $\phi\in\chi$ in
\[
\pi_{j}^{\mathrm{a}}(Y,X)=\inf_{\phi\in\chi}p_{j}^{\mathrm{a}}(Q_{\phi
,\,\cdot\,})
\]
is, in fact, a minimum. Moreover, $p_{j}^{\mathrm{a}}(Q_{\phi,\,\cdot\,})$ can
be represented as in Theorem~\ref{Thm:jjf658wgffvdsdf85}. This leads to the
following representation.

\begin{theorem}
\label{Thm:eudn9y65}The seller's price in currency $j=1,\ldots,d$ of a game
option~$(Y,X)$ with gradual exercise and cancellation can be represented as%
\[
\pi_{j}^{\mathrm{a}}(Y,X)=\min_{\phi\in\mathcal{X}}\max_{\psi\in\mathcal{X}%
}\max_{(\mathbb{Q},S)\in\mathcal{\bar{P}}_{j}^{\mathrm{d}}(\psi)}%
\mathbb{E}_{\mathbb{Q}}((Q_{\phi,\,\cdot\,}\cdot S)_{\psi}).
\]

\end{theorem}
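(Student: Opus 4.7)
The plan is to assemble the result from two ingredients that have essentially been set up already. The first is the identity
\[
\pi_{j}^{\mathrm{a}}(Y,X)=\inf_{\phi\in\mathcal{X}}p_{j}^{\mathrm{a}}(Q_{\phi,\,\cdot\,})
\]
derived in Section~\ref{Sect:sel_price_rep} by means of Lemma~\ref{Lem:8f5sgdjf}; the second is the dual representation of the American seller's price in Theorem~\ref{Thm:jjf658wgffvdsdf85}, applied to the payoff process $Z=Q_{\phi,\,\cdot\,}$. Substituting the latter into the former immediately yields the asserted formula but with $\inf_{\phi\in\mathcal{X}}$ in place of $\min_{\phi\in\mathcal{X}}$. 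The remaining task is therefore to show that the infimum over~$\phi$ is attained.

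To produce an optimal cancellation time~$\phi^{*}$ I would argue via the primal side. By Theorem~\ref{Thm:fjusab76} one has $\pi_{j}^{\mathrm{a}}(Y,X)=\min\{x\in\mathbb{R}\,|\,xe^{j}\in\mathcal{Z}_{0}^{\mathrm{a}}\}$, and this minimum is genuinely attained because $\mathcal{Z}_{0}^{\mathrm{a}}$ is polyhedral, hence closed, as observed in the paper. Combining Propositions~\ref{Prop:7djhb78a} and~\ref{Prop:7d4hnk0a}, this yields a concrete seller's superhedging pair $(\phi^{*},u^{*})\in\Phi^{\mathrm{a}}(Y,X)$ with $u_{0}^{*}=\pi_{j}^{\mathrm{a}}(Y,X)\,e^{j}$. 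Applying Lemma~\ref{Lem:8f5sgdjf} to this particular~$\phi^{*}$, the initial endowment $u_{0}^{*}$ is also the initial endowment of some $z\in\Psi^{\mathrm{a}}(Q_{\phi^{*},\,\cdot\,})$, and so $p_{j}^{\mathrm{a}}(Q_{\phi^{*},\,\cdot\,})\le\pi_{j}^{\mathrm{a}}(Y,X)$ by Definition~\ref{Def:hfy468ahd564vk}. The reverse inequality $\pi_{j}^{\mathrm{a}}(Y,X)\le p_{j}^{\mathrm{a}}(Q_{\phi,\,\cdot\,})$ holds for every $\phi\in\mathcal{X}$ directly from the displayed infimum identity, so $\phi^{*}$ realises the infimum and one may write $\min_{\phi\in\mathcal{X}}$.

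Inserting Theorem~\ref{Thm:jjf658wgffvdsdf85} for each~$\phi$ then rewrites $p_{j}^{\mathrm{a}}(Q_{\phi,\,\cdot\,})$ as the double maximum over $\psi\in\mathcal{X}$ and $(\mathbb{Q},S)\in\mathcal{\bar{P}}_{j}^{\mathrm{d}}(\psi)$, giving the claimed formula. The only mildly delicate step is the attainment of the outer minimiser~$\phi^{*}$; once that is in hand the argument is essentially bookkeeping, since the heavy analytic content — the dual representation for American options with gradual exercise — is already carried out in~\cite{rouxzastawniak2014}. In particular no new duality argument, saddle-point interchange, or extra compactness beyond the polyhedrality of $\mathcal{Z}_{0}^{\mathrm{a}}$ is needed.
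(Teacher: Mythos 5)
Your proposal is correct and follows essentially the same route as the paper: the attainment of the outer infimum is obtained from the chain Theorem~\ref{Thm:fjusab76} $\rightarrow$ Proposition~\ref{Prop:7djhb78a} $\rightarrow$ Proposition~\ref{Prop:7d4hnk0a} $\rightarrow$ Lemma~\ref{Lem:8f5sgdjf}, after which Theorem~\ref{Thm:jjf658wgffvdsdf85} supplies the inner double maximum. The only cosmetic difference is that the paper phrases the attainment as the infimum in Definition~\ref{Def:nnf65uwsu48} being a minimum before splitting off the $\phi$-variable, whereas you exhibit the minimiser $\phi^{*}$ directly and check the two inequalities; these are the same argument.
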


The details of the proof can be found, once again, in the Appendix,
Section~\ref{Sect:Appendix}.

\section{Buyer's price and superhedging strategies}\label{sec:5}

The buyer of a game option $(Y,X)$ will be able to select a mixed stopping
time~$\psi\in\mathcal{X}$ to exercise the option, and can follow a trading
strategy $u^{\phi}=(u_{t}^{\phi})_{t=0}^{T}\in\Phi$, which may depend on the
cancellation time $\phi\in\mathcal{X}$ chosen by the seller. On each date
$t=0,\ldots,T$ the buyer will be taking delivery of the portfolio~$G_{t}%
^{\phi,\psi}$ and can rebalance the current position~$u_{t}^{\phi}$ in the
strategy into~$u_{t+1}^{\phi}$ in a self-financing way, i.e.\ without
injecting any additional wealth. The portfolio~$u_{t}^{\phi}$ created by the
buyer at time $t-1$ may depend on the seller's cancellation strategy $\phi
_{0},\ldots,\phi_{t-1}$ up to and including time $t-1$, but not on the values
$\phi_{t},\ldots,\phi_{T}$, as these will not yet be known to the buyer at
time~$t-1$. These considerations lead to the following definition.

\begin{definition}
\upshape For a game option $(Y,X)$ with gradual exercise and cancellation, a
\emph{buyer's superhedging strategy }is a pair $(\psi,u)$, where $\psi
\in\mathcal{X}$ and
$u:\mathcal{X}\rightarrow\Phi$, that satisfies the \emph{rebalancing}
condition%
\begin{equation}
\forall\phi\in\mathcal{X~}\forall t=0,\ldots,T:~u_{t}^{\phi}+G_{t}^{\phi,\psi
}-u_{t+1}^{\phi}\in\mathcal{K}_{t} \label{eq:hf65hsdn9}%
\end{equation}
and the \emph{non-anticipation} condition%
\begin{equation}
\forall\phi,\phi^{\prime}\in\mathcal{X}~\forall t=0,\ldots
,T:~\textstyle\bigcap_{s=0}^{t-1}\{\phi_{s}=\phi_{s}^{\prime}\}\subseteq
\{u_{t}^{\phi}=u_{t}^{\phi^{\prime}}\}. \label{eq:dhg3sbdj}%
\end{equation}
The family of such strategies will be denoted by~$\Phi^{\mathrm{b}}(Y,X)$.
\end{definition}

The buyer's price of the game option in currency~$j$ can be understood as the
largest amount in that currency which can be raised against a long position in
the option used as surety. The precise definition is as follows.

\begin{definition}
\label{Def:nfn457w6dbg45d}\upshape The \emph{buyer's }(or \emph{bid})\emph{
price} in currency $j=1,\ldots,d$ of a game option $(Y,X)$ under gradual
exercise and cancellation is defined as%
\[
\pi_{j}^{\mathrm{b}}(Y,X):=\sup\left\{  -x\in\mathbb{R}\,|\,\exists(\psi
,u)\in\Phi^{\mathrm{b}}(Y,X):xe^{j}=u_{0}\right\}  .
\]

\end{definition}

\subsection{Buyer's pricing algorithm}

As is well known, there is a symmetry between the buyer's and seller's
superhedging and pricing problems for a European option. The symmetry
consists, essentially, in reversing the sign of the payoff while also
reversing the roles of buyer and seller. Hence, solving the seller's problem
also yields a solution to the buyer's problem, and \emph{vice versa}. However,
for an American option this symmetry is broken, and one needs to solve the
buyer's and seller's problems separately; see for example
\cite{rouxzastawniak2014} or \cite{rouxzastawniak2016}.

On first sight, it might appear that the symmetry between the buyer and seller
might be restored in the case of a game option. However, in fact, this is not
so when the buyer has priority to exercise the option before the seller can
cancel it. Reversing their roles would give priority to the seller. Combined
with condition~(\ref{eq:hhjfd6487shd6543}), this breaks the symmetry, and so a
specific solution to the buyer's problem is needed. This is facilitated by the
following construction.

\begin{construction}\label{constr:buyer}
\upshape Construct adapted sequences $\mathcal{Y}_{t}^{\mathrm{b}}%
,\mathcal{X}_{t}^{\mathrm{b}},\mathcal{V}_{t}^{\mathrm{b}},\mathcal{W}%
_{t}^{\mathrm{b}},\mathcal{Z}_{t}^{\mathrm{b}}$ for $t=0,\ldots,T$ as follows.
First, put%
\[
\mathcal{Y}_{t}^{\mathrm{b}}:=-Y_{t}+\mathcal{Q}_{t},\quad\mathcal{X}%
_{t}^{\mathrm{b}}:=-X_{t}+\mathcal{Q}_{t}%
\]
for all $t=0,\ldots,T$ and%
\[
\mathcal{W}_{T}^{\mathrm{b}}:=\mathcal{V}_{T}^{\mathrm{b}}:=\mathcal{L}%
_{T},\quad\mathcal{Z}_{T}^{\mathrm{b}}:=\mathcal{Y}_{T}^{\mathrm{b}}.
\]
Then, for $t=T-1,\ldots,0$ define by backward induction%
\begin{align*}
\mathcal{W}_{t}^{\mathrm{b}}  &  :=\mathcal{Z}_{t+1}^{\mathrm{b}}%
\cap\mathcal{L}_{t},\\
\mathcal{V}_{t}^{\mathrm{b}}  &  :=\mathcal{W}_{t}^{\mathrm{b}}+\mathcal{Q}%
_{t},\\
\mathcal{Z}_{t}^{\mathrm{b}}  &  :=\operatorname*{conv}\{\mathcal{V}%
_{t}^{\mathrm{b}}\cap\mathcal{X}_{t}^{\mathrm{b}},\mathcal{Y}_{t}^{\mathrm{b}%
}\}.
\end{align*}

\end{construction}

As compared to the seller's Construction~\ref{constr:seller}, apart from
swapping the payoff processes $Y,X$ for $-X,-Y$, which would have been
enough had there been a simple symmetry between the buyer and seller, the
operations of intersection and convex hull are taken in the reverse order in
the last line of this construction.

The proofs of the results below concerning the buyer's case resemble those for the seller, but certain details follow a diverse pattern to
account for the differences between the seller's and buyer's pricing constructions.

Just as in the seller's case, the same argument as in the proof of
Proposition~5.1 in~\cite{rouxzastawniak2014} shows that the~$\mathcal{Z}%
_{t}^{\mathrm{b}}$ are polyhedral convex sets. Moreover, we shall see that
$\mathcal{Z}_{0}^{\mathrm{b}}$ plays a similar role for the buyer as
$\mathcal{Z}_{0}^{\mathrm{a}}$ does for the seller, namely it is the set of all
initial endowments allowing the option buyer to superhedge their position.
This leads to the following result.

\begin{theorem}
\label{Thm:fjf75s41s}The buyer's price in currency $j=1,\ldots,d$ of a game
option~$(Y,X)$ with gradual exercise and cancellation can be expressed as%
\[
\pi_{j}^{\mathrm{b}}(Y,X)=\max\left\{  -x\in\mathbb{R}\,|\,xe^{j}%
\in\mathcal{Z}_{0}^{\mathrm{b}}\right\}  .
\]

\end{theorem}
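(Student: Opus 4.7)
The strategy is to parallel the seller's argument from Section~\ref{sec:4}, replacing Propositions~\ref{Prop:7djhb78a} and~\ref{Prop:7d4hnk0a} by their buyer analogues, and exploiting the reversal of $\operatorname{conv}$ and $\cap$ in Construction~\ref{constr:buyer} to account for the buyer's priority to exercise before the seller may cancel. Throughout, the $-Y_t,-X_t$ signs in $\mathcal{Y}_t^{\mathrm{b}},\mathcal{X}_t^{\mathrm{b}}$ reflect that the buyer \emph{receives} the option payoff and finances a debt against it.

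I would first introduce an auxiliary family $\Lambda^{\mathrm{b}}(Y,X)$ of pairs $(\psi,z)$ with $\psi\in\mathcal{X}$ and $z\in\Phi$, whose defining conditions encode buyer's superhedging under gradual exercise, instant cancellation, and deferred solvency. The conditions should be
\[
z_{t}+\psi_{t}Y_{t}-z_{t+1}\in\mathcal{Q}_{t}\quad\text{for }t=0,\ldots,T-1,\qquad z_{t}+\psi_{t}Y_{t}+\psi_{t+1}^{\ast}X_{t}\in\mathcal{Q}_{t}\quad\text{for }t=0,\ldots,T,
\]
expressing, respectively, that after receiving exercise income $\psi_tY_t$ the buyer can pass to $z_{t+1}$ self-financingly up to $\mathcal{Q}_t$, and that at any time $t$ he can accommodate an instantaneous cancellation of the residual fraction $\psi_{t+1}^{\ast}$. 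The asymmetry with $\Lambda^{\mathrm{a}}(Y,X)$ is precisely that cancellation now acts on $\psi_{t+1}^{\ast}$ rather than $\psi_t^{\ast}$, because exercise has priority.

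Next I would prove two propositions. The first is $\mathcal{Z}_{0}^{\mathrm{b}}=\{z_0\in\mathbb{R}^d\mid(\psi,z)\in\Lambda^{\mathrm{b}}(Y,X)\}$, by showing inductively that $\mathcal{Z}_t^{\mathrm{b}}$ is exactly the set of admissible time-$t$ values of such $z$. The recursion $\mathcal{Z}_t^{\mathrm{b}}=\operatorname{conv}\{\mathcal{V}_t^{\mathrm{b}}\cap\mathcal{X}_t^{\mathrm{b}},\,\mathcal{Y}_t^{\mathrm{b}}\}$ should match the convex decomposition $z_t=\psi_t\eta+(1-\psi_t)\xi$ where $\eta\in\mathcal{Y}_t^{\mathrm{b}}$ corresponds to the exercised fraction and $\xi\in\mathcal{V}_t^{\mathrm{b}}\cap\mathcal{X}_t^{\mathrm{b}}$ to the unexercised one (which must simultaneously transfer to $\mathcal{Z}_{t+1}^{\mathrm{b}}$, via $\mathcal{V}_t^{\mathrm{b}}$, and absorb a potential immediate cancellation, via $\mathcal{X}_t^{\mathrm{b}}$); this is why the intersection now sits inside the convex hull. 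The second proposition, $\{z_0\mid(\psi,z)\in\Lambda^{\mathrm{b}}(Y,X)\}=\{u_0\mid(\psi,u)\in\Phi^{\mathrm{b}}(Y,X)\}$, is then handled by two inclusions: $\supseteq$ by freezing $\phi$ at a canonical value and invoking the non-anticipation condition~(\ref{eq:dhg3sbdj}) to extract $z$ from $u$; $\subseteq$ by constructing $u^{\phi}$ from $z$ via concatenation with liquidation strategies witnessing $\mathcal{Q}_t$-membership on the event where $\phi$ first activates, non-anticipation holding automatically because each branch depends only on $\phi_0,\ldots,\phi_{t-1}$.

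Combining the two propositions yields $\mathcal{Z}_0^{\mathrm{b}}=\{u_0\mid(\psi,u)\in\Phi^{\mathrm{b}}(Y,X)\}$, and since this is a closed polyhedral convex subset of $\mathbb{R}^d$, Definition~\ref{Def:nfn457w6dbg45d} immediately gives the stated formula with the supremum attained as a maximum. I expect the main obstacle to be the $\subseteq$ direction of the first proposition: given $z_t\in\mathcal{Z}_t^{\mathrm{b}}$, one must produce an $\mathcal{F}_t$-measurable weight $\psi_t\in[0,1]$ together with a convex decomposition realising the recursion, while consistently tracking the two competing deferred-solvency constraints coming from $\mathcal{V}_t^{\mathrm{b}}$ and $\mathcal{X}_t^{\mathrm{b}}$. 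The reversed order of $\operatorname{conv}$ and $\cap$ relative to the seller's case means that the identities involving Minkowski sums with $\mathcal{Q}_t$ used in~\cite{rouxzastawniak2014} do not carry over verbatim, and a careful measurable selection argument (using polyhedrality) is needed to extract $\psi_t$ and $\eta,\xi$ of the required form.
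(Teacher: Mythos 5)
Your plan matches the paper's proof essentially step for step: the paper defines the same auxiliary family $\Lambda^{\mathrm{b}}(Y,X)$ with exactly the conditions you state, proves Propositions~\ref{Prop:9h75hhsl} and~\ref{Prop:jg768enk} by the forward/backward inductions and the liquidation-strategy concatenation you describe (using $\phi=\chi^{T}$ and the $\chi^{t}$'s as the canonical freezes), and then concludes from Definition~\ref{Def:nfn457w6dbg45d} and the closedness of the polyhedral set $\mathcal{Z}_{0}^{\mathrm{b}}$ that the supremum is attained. The only minor imprecision is that your convex decomposition at a general time $t$ should be scaled by the remaining fraction $\psi_{t}^{\ast}$, i.e.\ $z_{t}=\psi_{t}^{\ast}\left((1-\lambda_{t})\xi+\lambda_{t}\eta\right)$ with $\psi_{t}:=\psi_{t}^{\ast}\lambda_{t}$, rather than using $\psi_{t}$ directly as the weight.
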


To prove this theorem we need the following family $\Lambda^{\mathrm{b}}(Y,X)$, the elements of which can
be seen as strategies superhedging the buyer's position in a game option with
instant (rather than gradual) cancellation, gradual exercise and deferred
(rather than immediate) solvency.

\begin{definition}
\upshape We define $\Lambda^{\mathrm{b}}(Y,X)$ as the family consisting of all
pairs $(\psi,z)$, where $\psi\in\mathcal{X}$ and $z\in\Phi$, that satisfy the
conditions%
\begin{align*}
z_{t}+\psi_{t}Y_{t}-z_{t+1}  &  \in\mathcal{Q}_{t}\quad\text{for all
}t=0,\ldots,T-1,\\
z_{t}+\psi_{t}Y+\psi_{t+1}^{\ast}X_{t}  &  \in\mathcal{Q}_{t}\quad\text{for
all }t=0,\ldots,T.
\end{align*}

\end{definition}

The next two results are similar to Propositions~\ref{Prop:7djhb78a}
and~\ref{Prop:7d4hnk0a}. First,~$\mathcal{Z}_{0}^{\mathrm{b}}$ is shown to be
equal to the set of initial endowments for the strategies in $\Lambda
^{\mathrm{b}}(Y,X)$.

\begin{proposition}
\label{Prop:9h75hhsl}%
\[
\mathcal{Z}_{0}^{\mathrm{b}}=\left\{  z_{0}\in\mathbb{R}^{d}\,|\,(\psi
,z)\in\Lambda^{\mathrm{b}}(Y,X)\right\}  .
\]

\end{proposition}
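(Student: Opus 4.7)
The plan is to follow the blueprint of the proof of Proposition~\ref{Prop:7djhb78a}, adapting it to the reversed order of intersection and convex hull in Construction~\ref{constr:buyer}. I prove by backward induction on $t\in\{T,T-1,\ldots,0\}$ the following stronger claim: $z_{t}\in\mathcal{Z}_{t}^{\mathrm{b}}$ if and only if there exist $\mathcal{F}_{s}$-measurable $\psi_{s}\geq 0$ for $s=t,\ldots,T$ with $\sum_{s=t}^{T}\psi_{s}=1$ and $\mathcal{F}_{s-1}$-measurable $z_{s}$ for $s=t+1,\ldots,T+1$ with $z_{T+1}=0$ satisfying the $\Lambda^{\mathrm{b}}$-type conditions on the range $\{t,\ldots,T\}$. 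The case $t=0$ is exactly the proposition.

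The base case $t=T$ is immediate since $\psi_{T}=1$ forces the only remaining constraint to be $z_{T}+Y_{T}\in\mathcal{Q}_{T}$, i.e., $z_{T}\in\mathcal{Y}_{T}^{\mathrm{b}}=\mathcal{Z}_{T}^{\mathrm{b}}$. For the inductive step ($\Rightarrow$), decompose $z_{t}=\beta v+(1-\beta)y$ with $\mathcal{F}_{t}$-measurable $\beta\in[0,1]$, $v\in\mathcal{V}_{t}^{\mathrm{b}}\cap\mathcal{X}_{t}^{\mathrm{b}}$, $y\in\mathcal{Y}_{t}^{\mathrm{b}}$, and write $v=w+q$ with $w\in\mathcal{W}_{t}^{\mathrm{b}}$, $q\in\mathcal{Q}_{t}$. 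Applying the inductive hypothesis to $w$ yields a continuation $(\psi',z')$ from $t+1$, and I splice by setting $\psi_{t}:=1-\beta$, $\psi_{s}:=\beta\psi_{s}'$ for $s>t$, $z_{t+1}:=\beta w$, $z_{s}:=\beta z_{s}'$ for $s>t+1$. Conditions for $s>t$ transfer by homogeneity of $\mathcal{Q}_{s}$, while at $s=t$ the identities
\[
z_{t}+\psi_{t}Y_{t}-z_{t+1}=\beta q+(1-\beta)(y+Y_{t}),\quad z_{t}+\psi_{t}Y_{t}+\psi_{t+1}^{\ast}X_{t}=\beta(v+X_{t})+(1-\beta)(y+Y_{t})
\]
express both quantities as convex combinations of elements of $\mathcal{Q}_{t}$.

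For the converse ($\Leftarrow$), set $\beta:=\psi_{t+1}^{\ast}$, which is $\mathcal{F}_{t}$-measurable. On $\{\beta>0\}$ put $w:=z_{t+1}/\beta$; rescaling $(\psi_{s},z_{s})$ by $1/\beta$ for $s>t$ preserves the $\Lambda^{\mathrm{b}}$-type conditions by their homogeneity, so that the rescaled data form a continuation from $w$ at time $t+1$, giving $w\in\mathcal{W}_{t}^{\mathrm{b}}$ via the inductive hypothesis. Set $y:=-Y_{t}$ and $v:=w+r/\beta$ with $r:=z_{t}-z_{t+1}+(1-\beta)Y_{t}$. The first $\Lambda^{\mathrm{b}}$-condition at $t$ gives $r\in\mathcal{Q}_{t}$, hence $v\in\mathcal{V}_{t}^{\mathrm{b}}$, and the second rearranges as $v+X_{t}=\frac{1}{\beta}(z_{t}+(1-\beta)Y_{t}+\beta X_{t})\in\mathcal{Q}_{t}$, giving $v\in\mathcal{X}_{t}^{\mathrm{b}}$, whence $z_{t}=\beta v+(1-\beta)y$. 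On the complementary $\mathcal{F}_{t}$-measurable set $\{\beta=0\}$ one has $\psi_{t}=1$ and the second $\Lambda^{\mathrm{b}}$-condition forces $z_{t}\in\mathcal{Y}_{t}^{\mathrm{b}}$ directly, so $z_{t}$ is a trivial convex combination $0\cdot v+1\cdot y$ there. Gluing the two pieces (routine since $\Omega$ is finite) places $z_{t}$ in $\mathcal{Z}_{t}^{\mathrm{b}}$.

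The main obstacle is the converse step: rescaling by the possibly vanishing $\beta=\psi_{t+1}^{\ast}$ and gluing a measurable decomposition across $\{\beta>0\}$ and $\{\beta=0\}$. The coherence of this rests on the homogeneity of the $\Lambda^{\mathrm{b}}$-conditions (invariance under positive scalar multiplication of $(\psi,z)$) together with the conic structure of $\mathcal{Q}_{t}$, which together legitimise the scaling and rearrangement arguments.
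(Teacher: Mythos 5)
Your argument is correct, and at its core it uses the same ingredients as the paper's proof: the pointwise decomposition $z_{t}=\beta v+(1-\beta)y$ with $v\in\mathcal{V}_{t}^{\mathrm{b}}\cap\mathcal{X}_{t}^{\mathrm{b}}$ and $y\in\mathcal{Y}_{t}^{\mathrm{b}}$, the conic structure of $\mathcal{Q}_{t}$, predictability of $z$ to pass from $\mathcal{Z}_{t+1}^{\mathrm{b}}$ to $\mathcal{W}_{t}^{\mathrm{b}}$, and a separate treatment of the degenerate event where no exercise weight remains. The difference is organisational. The paper proves the inclusion $\mathcal{Z}_{0}^{\mathrm{b}}\subseteq\{z_{0}\}$ by a \emph{forward} construction of $(\psi,z)$ maintaining the unnormalised invariant $z_{t}\in\psi_{t}^{\ast}\mathcal{Z}_{t}^{\mathrm{b}}$, and the reverse inclusion by a backward induction establishing $z_{t}\in\psi_{t}^{\ast}\mathcal{Z}_{t}^{\mathrm{b}}$ on $\{\psi_{t}^{\ast}>0\}$ and $z_{t}\in\mathcal{Q}_{t}$ on $\{\psi_{t}^{\ast}=0\}$; you instead prove a single normalised equivalence by backward induction, splicing a one-step decomposition onto a continuation in one direction and dividing by $\beta=\psi_{t+1}^{\ast}$ in the other. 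Carrying the multiplicative factor $\psi_{t}^{\ast}$, as the paper does, buys you freedom from the division by a possibly vanishing $\beta$ and from the gluing across $\{\beta>0\}$ and $\{\beta=0\}$; your normalised version buys a cleaner statement (a genuine iff at every $t$) at the cost of having to justify the rescaling and localisation, which you correctly reduce to positive homogeneity of the $\Lambda^{\mathrm{b}}$-conditions and finiteness of $\Omega$. One small point worth making explicit: the inductive hypothesis gives only $w=z_{t+1}/\beta\in\mathcal{Z}_{t+1}^{\mathrm{b}}$; membership in $\mathcal{W}_{t}^{\mathrm{b}}=\mathcal{Z}_{t+1}^{\mathrm{b}}\cap\mathcal{L}_{t}$ additionally uses that $z_{t+1}$ is $\mathcal{F}_{t}$-measurable by predictability and $\beta$ is $\mathcal{F}_{t}$-measurable, exactly as in the paper's converse step.
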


The set of initial endowments for the strategies in $\Lambda^{\mathrm{b}%
}(Y,X)$ is then shown to coincide with that for the strategies in
$\Phi^{\mathrm{b}}(Y,X)$.

\begin{proposition}
\label{Prop:jg768enk}%
\[
\left\{  z_{0}\in\mathbb{R}^{d}\,|\,(\psi,z)\in\Lambda^{\mathrm{b}%
}(Y,X)\right\}  =\left\{  u_{0}\in\mathbb{R}^{d}\,|\,(\psi,u)\in
\Phi^{\mathrm{b}}(Y,X)\right\}  .
\]

\end{proposition}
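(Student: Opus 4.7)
The plan is to prove the two set inclusions separately, in the spirit of Proposition~\ref{Prop:7d4hnk0a} on the seller's side. The easier inclusion $(\subseteq)$ extracts a single deferred-solvency strategy~$z$ from a $\phi$-indexed family~$u$ by testing against instant cancellation times~$\chi^{t}$, while the harder inclusion $(\supseteq)$ constructs the family $u^{\phi}$ from~$z$ by interpolating between the main strategy (for the fraction of the option still alive) and liquidations of the fractions already cancelled.

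For $(\subseteq)$, given $(\psi,u)\in\Phi^{\mathrm{b}}(Y,X)$, set $z_{t}:=u_{t}^{\chi^{T}}$. Non-anticipation~(\ref{eq:dhg3sbdj}) at $t=0$ forces $u_{0}^{\phi}$ to be independent of~$\phi$, so $z_{0}=u_{0}$. Since $G_{t}^{\chi^{T},\psi}=\psi_{t}Y_{t}$ for every~$t$, the rebalancing condition~(\ref{eq:hf65hsdn9}) immediately gives $z_{t}+\psi_{t}Y_{t}-z_{t+1}\in\mathcal{K}_{t}\subseteq\mathcal{Q}_{t}$, which is the first condition of $\Lambda^{\mathrm{b}}(Y,X)$. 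For the second condition at time~$t$, consider $\phi=\chi^{t}$: by non-anticipation $u_{t}^{\chi^{t}}=u_{t}^{\chi^{T}}=z_{t}$; the rebalancing at~$t$ yields $z_{t}+\psi_{t}Y_{t}+\psi_{t+1}^{\ast}X_{t}-u_{t+1}^{\chi^{t}}\in\mathcal{K}_{t}$; and since $G_{t'}^{\chi^{t},\psi}=0$ for all $t'>t$, the sequence $u_{t+1}^{\chi^{t}},\ldots,u_{T+1}^{\chi^{t}}=0$ exhibits $u_{t+1}^{\chi^{t}}$ as an element of~$\mathcal{Q}_{t}$. Adding the two memberships closes the argument.

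For $(\supseteq)$, given $(\psi,z)\in\Lambda^{\mathrm{b}}(Y,X)$, first reduce, by a forward induction, to the case of immediate-solvency rebalancing $z_{t}+\psi_{t}Y_{t}-z_{t+1}\in\mathcal{K}_{t}$: using $\mathcal{Q}_{t}=\mathcal{K}_{t}+\mathcal{Q}_{t+1}\cap\mathcal{L}_{t}$, decompose the original element as $k_{t}+q_{t}$ and replace $z_{t+1}$ by $z_{t+1}+q_{t}$, checking that both conditions of $\Lambda^{\mathrm{b}}(Y,X)$ at all later times are preserved because $q_{t}\in\mathcal{Q}_{t+1}$, and that $z_{0}$ is untouched. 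Next, pick liquidation strategies $\ell^{s}=(\ell_{s+1}^{s},\ldots,\ell_{T+1}^{s})$ of the portfolios $z_{s}+\psi_{s}Y_{s}+\psi_{s+1}^{\ast}X_{s}\in\mathcal{Q}_{s}$ and set
\[
u_{t}^{\phi}:=\phi_{t}^{\ast}z_{t}+\sum_{s=0}^{t-1}\phi_{s}\ell_{t}^{s}.
\]
Predictability, non-anticipation (since $u_{t}^{\phi}$ uses only $\phi_{0},\ldots,\phi_{t-1}$) and the terminal condition $u_{T+1}^{\phi}=0$ are automatic. The rebalancing condition reduces, via the identity $\phi_{t}^{\ast}=\phi_{t}+\phi_{t+1}^{\ast}$, to
\begin{align*}
u_{t}^{\phi}+G_{t}^{\phi,\psi}-u_{t+1}^{\phi} &= \phi_{t+1}^{\ast}(z_{t}+\psi_{t}Y_{t}-z_{t+1})+\phi_{t}(z_{t}+\psi_{t}Y_{t}+\psi_{t+1}^{\ast}X_{t}-\ell_{t+1}^{t}) \\
&\quad +\sum_{s<t}\phi_{s}(\ell_{t}^{s}-\ell_{t+1}^{s}),
\end{align*}
where each summand lies in~$\mathcal{K}_{t}$ by, respectively, the preliminary reduction, the first step of~$\ell^{t}$, and the self-financing property of~$\ell^{s}$ at time~$t$; since $\mathcal{K}_{t}$ is a convex cone and the weights $\phi_{t+1}^{\ast},\phi_{t},\phi_{s}$ are non-negative and $\mathcal{F}_{t}$-measurable, the whole sum lies in~$\mathcal{K}_{t}$.

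The main obstacle is this last construction: one needs exactly the right interpolation, with weight~$\phi_{t}^{\ast}$ on~$z_{t}$ and weight~$\phi_{s}$ on~$\ell_{t}^{s}$, so that the cross terms collapse cleanly through $\phi_{t}^{\ast}=\phi_{t}+\phi_{t+1}^{\ast}$. The preliminary reduction to immediate-solvency rebalancing is essential, because without it the first summand of the expansion would only lie in~$\mathcal{Q}_{t}$, which is strictly weaker than the $\mathcal{K}_{t}$-membership required by~(\ref{eq:hf65hsdn9}).
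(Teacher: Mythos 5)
Your proof is correct. The inclusion $\left\{u_{0}\right\}\subseteq\left\{z_{0}\right\}$ is handled exactly as in the paper: set $z:=u^{\chi^{T}}$, use $G_{t}^{\chi^{T},\psi}=\psi_{t}Y_{t}$ for the first condition, and test against $\phi=\chi^{t}$ for the second, where your observation that the tail $u_{t+1}^{\chi^{t}},\ldots,u_{T+1}^{\chi^{t}}=0$ is itself a liquidation strategy witnessing $u_{t+1}^{\chi^{t}}\in\mathcal{Q}_{t}$ is a slightly slicker packaging of the paper's backward induction through $\mathcal{Q}_{s}\cap\mathcal{L}_{s-1}\subseteq\mathcal{Q}_{s-1}$. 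The converse inclusion is where you genuinely diverge. The paper keeps $z$ untouched and introduces \emph{two} families of liquidation strategies: $y^{s}$ liquidating the rebalancing gaps $z_{s}+\psi_{s}Y_{s}-z_{s+1}\in\mathcal{Q}_{s}$, weighted by $\phi_{s+1}^{\ast}$, and $x^{s}$ liquidating $z_{s}+\psi_{s}Y_{s}+\psi_{s+1}^{\ast}X_{s}\in\mathcal{Q}_{s}$, weighted by $\phi_{s}$; the telescoping identity then produces the four groups of $\mathcal{K}_{t}$-terms. You instead pre-process $z$ by a forward induction, splitting each gap as $k_{t}+q_{t}$ with $k_{t}\in\mathcal{K}_{t}$ and $q_{t}\in\mathcal{Q}_{t+1}\cap\mathcal{L}_{t}$ and absorbing $q_{t}$ into $z_{t+1}$ (this preserves predictability, both $\Lambda^{\mathrm{b}}$-conditions at later times since $\mathcal{Q}_{t+1}$ is a convex cone, and the initial endowment), after which a single family of liquidation strategies suffices and the expansion has only three groups of terms. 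Both routes are valid; yours buys a leaner final interpolation formula and isolates the deferred-solvency issue in one clean normalisation step, at the cost of an extra induction and of working with a modified $z$, while the paper's version leaves $z$ intact (which it reuses when constructing explicit hedging strategies in the example) and parallels the seller's Proposition~\ref{Prop:7d4hnk0a} verbatim.
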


The proofs of these two propositions are in the Appendix,
Section~\ref{Sect:Appendix}. Once these results have been established, proving
that~$\mathcal{Z}_{0}^{\mathrm{b}}$ is the set of initial endowments for the
strategies in $\Phi^{\mathrm{b}}(Y,X)$, Theorem~\ref{Thm:fjf75s41s} follows;
for details, see the proof in the Appendix, Section~\ref{Sect:Appendix}.

\subsection{Buyer's price representation}

In this section we obtain a representation of the buyer's price of a game
option with gradual exercise, by exploiting a link with the price of an
American option with gradual exercise and payoff process $-Q_{\,\cdot\,,\psi
}=(-Q_{t,\psi})_{t=0}^{T}$ defined for any $\psi\in\chi$, where%
\[
Q_{t,\psi}:=Q_{\chi^{t},\psi}\text{\quad for }t=0,\ldots,T.
\]
Such a link is furnished by the next lemma.

\begin{lemma}
\label{Lem:k856dg3a}For any $\psi\in\mathcal{X}$%
\[
\left\{  u_{0}\,|\,(\psi,u)\in\Phi^{\mathrm{b}}(Y,X)\right\}  =\left\{
z_{0}\,|\,z\in\Psi^{\mathrm{a}}(-Q_{\,\cdot\,,\psi})\right\}  ,
\]
where $-Q_{\,\cdot\,,\psi}=(-Q_{t,\psi})_{t=0}^{T}$ is the payoff process of
an American option.
\end{lemma}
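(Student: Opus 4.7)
The plan is to exhibit, for each fixed $\psi\in\mathcal{X}$, an explicit affine bijection between the families $\{u\mid(\psi,u)\in\Phi^{\mathrm{b}}(Y,X)\}$ and $\Psi^{\mathrm{a}}(-Q_{\,\cdot\,,\psi})$ that preserves the initial portfolio $u_0=z_0$. The strategy mirrors the seller's analogue Lemma~\ref{Lem:8f5sgdjf}: the two hedging problems differ only in how the common total cashflow
\[
Q_{\phi,\psi}=\sum_{t=0}^{T}G_{t}^{\phi,\psi}=\sum_{t=0}^{T}\phi_{t}Q_{t,\psi}
\]
is distributed across the trading dates, so the period-by-period discrepancy can be carried inside the hedging portfolio without altering the time-$0$ endowment.

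Concretely, for each $\phi\in\mathcal{X}$ I would introduce the adjustment process
\[
A_{0}^{\phi}:=0,\qquad A_{t+1}^{\phi}:=A_{t}^{\phi}+\phi_{t}Q_{t,\psi}-G_{t}^{\phi,\psi}\quad\text{for }t=0,\ldots,T,
\]
and set $z_{t}^{\phi}:=u_{t}^{\phi}+A_{t}^{\phi}$. A direct substitution yields
\[
z_{t}^{\phi}+\phi_{t}Q_{t,\psi}-z_{t+1}^{\phi}=u_{t}^{\phi}+G_{t}^{\phi,\psi}-u_{t+1}^{\phi},
\]
so the rebalancing requirement $z_{t}^{\phi}-\phi_{t}(-Q_{t,\psi})-z_{t+1}^{\phi}\in\mathcal{K}_{t}$ defining $\Psi^{\mathrm{a}}(-Q_{\,\cdot\,,\psi})$ is term-by-term equivalent to the buyer's rebalancing condition~(\ref{eq:hf65hsdn9}). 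The map $u^{\phi}\mapsto u^{\phi}+A^{\phi}$ is clearly invertible, so the same identity works in both directions.

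The remaining structural requirements transfer cleanly. Since $A_{t}^{\phi}$ is built from $\phi_{s}$, $\phi_{s}^{\ast}$, $\psi$, $Y_{s}$ and $X_{s}$ for $s\le t-1$, it is $\mathcal{F}_{t-1}$-measurable, so $z^{\phi}$ is predictable iff $u^{\phi}$ is; moreover $A_{t}^{\phi}$ depends on $\phi$ only through $\phi_{0},\ldots,\phi_{t-1}$, which makes the non-anticipation conditions~(\ref{eq:dhg3sbdj}) for $u$ and for $z$ equivalent. The telescoping identity
\[
A_{T+1}^{\phi}=\sum_{s=0}^{T}\phi_{s}Q_{s,\psi}-\sum_{s=0}^{T}G_{s}^{\phi,\psi}=Q_{\phi,\psi}-Q_{\phi,\psi}=0
\]
reconciles the terminal convention $u_{T+1}^{\phi}=0=z_{T+1}^{\phi}$, and $A_{0}^{\phi}=0$ gives $z_{0}^{\phi}=u_{0}^{\phi}$, so the two sets of initial endowments coincide. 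The only non-routine step is guessing the correct adjustment process $A_{t}^{\phi}$; once this is written down, all measurability, terminal-value and non-anticipation bookkeeping is mechanical, exactly as in the seller's analogue Lemma~\ref{Lem:8f5sgdjf}.
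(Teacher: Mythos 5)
Your proposal is correct and takes essentially the same route as the paper: your recursively defined adjustment $A_{t}^{\phi}=\sum_{s=0}^{t-1}\bigl(\phi_{s}Q_{s,\psi}-G_{s}^{\phi,\psi}\bigr)$ telescopes, after interchanging the order of summation, to the closed form $-\phi_{t}^{\ast}\sum_{s=0}^{t-1}\psi_{s}Y_{s}$, which is exactly the shift~(\ref{eq:ff645lallpr}) used in the paper's proof. The subsequent verification of the rebalancing identity, predictability, non-anticipation and the endpoint values $A_{0}^{\phi}=A_{T+1}^{\phi}=0$ matches the paper's argument step for step.
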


With the aid of this lemma, in a similar manner as in
Section~\ref{Sect:sel_price_rep}, we can establish the following
representation of the buyer's price.

\begin{theorem}
\label{Thm:dhhd5436syy}The buyer's price in currency $j=1,\ldots,d$ of a game
option~$(Y,X)$ with gradual exercise and cancellation can be represented as%
\[
\pi_{j}^{\mathrm{b}}(Y,X)=\max_{\psi\in\mathcal{X}}\min_{\phi\in\mathcal{X}%
}\min_{(\mathbb{Q},S)\in\mathcal{\bar{P}}_{j}^{\mathrm{d}}(\phi)}%
\mathbb{E}_{\mathbb{Q}}(\left(  Q_{\,\cdot\,,\psi}\cdot S\right)  _{\phi}).
\]

\end{theorem}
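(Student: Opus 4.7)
The plan is to mirror the argument used in Section~\ref{Sect:sel_price_rep} for the seller's price, replacing Lemma~\ref{Lem:8f5sgdjf} by Lemma~\ref{Lem:k856dg3a} and tracking the change of sign throughout. Starting from Definition~\ref{Def:nfn457w6dbg45d}, I would split the supremum as
\[
\pi_{j}^{\mathrm{b}}(Y,X)=\sup_{\psi\in\mathcal{X}}\sup\left\{-x\in\mathbb{R}\,\middle|\,\exists u:(\psi,u)\in\Phi^{\mathrm{b}}(Y,X),\ xe^{j}=u_{0}\right\},
\]
and then invoke Lemma~\ref{Lem:k856dg3a} to identify the inner set of initial endowments with $\{z_{0}\mid z\in\Psi^{\mathrm{a}}(-Q_{\,\cdot\,,\psi})\}$, i.e.\ with the initial endowments that superhedge the seller's position in an American option whose gradual exercise payoff process is $-Q_{\,\cdot\,,\psi}$.

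Next, by Definition~\ref{Def:hfy468ahd564vk}, the inner supremum equals $-p_{j}^{\mathrm{a}}(-Q_{\,\cdot\,,\psi})$, so
\[
\pi_{j}^{\mathrm{b}}(Y,X)=\sup_{\psi\in\mathcal{X}}\left(-p_{j}^{\mathrm{a}}(-Q_{\,\cdot\,,\psi})\right).
\]
Applying Theorem~\ref{Thm:jjf658wgffvdsdf85} to the American option with payoff process $-Q_{\,\cdot\,,\psi}$ (with its mixed exercise time renamed $\phi$ to match the statement) and pushing the sign through the two maxima converts them into minima:
\[
-p_{j}^{\mathrm{a}}(-Q_{\,\cdot\,,\psi})=\min_{\phi\in\mathcal{X}}\min_{(\mathbb{Q},S)\in\bar{\mathcal{P}}_{j}^{\mathrm{d}}(\phi)}\mathbb{E}_{\mathbb{Q}}((Q_{\,\cdot\,,\psi}\cdot S)_{\phi}).
\]
Substituting back recovers exactly the right-hand side of the theorem, up to showing that the outer supremum over $\psi$ is attained.

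The remaining, and main, obstacle is to promote $\sup_{\psi\in\mathcal{X}}$ to $\max_{\psi\in\mathcal{X}}$. The cleanest route is to appeal to Theorem~\ref{Thm:fjf75s41s} and Proposition~\ref{Prop:jg768enk}: since $\mathcal{Z}_{0}^{\mathrm{b}}$ is a polyhedral (hence closed) convex set and $\Omega$ is finite, the maximum $\pi_{j}^{\mathrm{b}}(Y,X)=\max\{-x\mid xe^{j}\in\mathcal{Z}_{0}^{\mathrm{b}}\}$ is attained at some $u_{0}=xe^{j}$, and by Proposition~\ref{Prop:jg768enk} this $u_{0}$ is the initial value of a pair $(\psi^{\ast},u)\in\Phi^{\mathrm{b}}(Y,X)$; the corresponding $\psi^{\ast}$ realises the outer maximum. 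Alternatively, one may note that $\mathcal{X}$ is a compact polytope in a finite-dimensional space and that $\psi\mapsto-p_{j}^{\mathrm{a}}(-Q_{\,\cdot\,,\psi})$ is concave (being the infimum over $(\phi,\mathbb{Q},S)$ of linear functionals in $\psi$), hence upper semicontinuous, so the supremum is attained. Once attainment is established, the proof is complete; the full details would be relegated to the Appendix along with those of Theorem~\ref{Thm:eudn9y65}.
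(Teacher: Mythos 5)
Your proposal is correct and follows essentially the same route as the paper's proof: it combines Lemma~\ref{Lem:k856dg3a}, Definition~\ref{Def:hfy468ahd564vk} and Theorem~\ref{Thm:jjf658wgffvdsdf85}, then pushes the sign through to turn the maxima into minima, with attainment of the outer supremum obtained from Theorem~\ref{Thm:fjf75s41s}. The only cosmetic differences are the order of steps (the paper establishes attainment before invoking the lemma) and that passing from $xe^{j}\in\mathcal{Z}_{0}^{\mathrm{b}}$ to a pair in $\Phi^{\mathrm{b}}(Y,X)$ uses Proposition~\ref{Prop:9h75hhsl} as well as Proposition~\ref{Prop:jg768enk}.
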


The proofs of Lemma~\ref{Lem:k856dg3a} and Theorem~\ref{Thm:dhhd5436syy} can
be found in the Appendix, Section~\ref{Sect:Appendix}.

\section{Example}\label{sec:6}

A game option $(Y,X)$ in a binary two-step two-currency model is presented in Figure~\ref{fig:two-step-model}. The model is recombinant; the option payoff is path-independent and has no cancellation penalties at time $2$. The model has transaction costs only at the node $\mathrm{u}$ at time $1$.
  \begin{figure}
 \begin{center}
 \begin{tikzpicture}[model,twostep, max label lines = 2]
  \node (root) {$\begin{gathered}\tfrac{1}{\pi^{12}_0} = \pi^{21}_0 = 10\\
	\begin{aligned}Y_0&=\twovector{0}{0} \\ X_0&=\twovector{0}{5}\end{aligned}\end{gathered}$}
    child { node[label=left:u] (u) {$\begin{gathered}\begin{aligned}\pi^{12}_1 &=\tfrac{1}{8} \\ \pi^{21}_1 &= 16\end{aligned}\\
	\begin{aligned}Y_1&=\twovector{0}{3} \\ X_1&=\twovector{0}{6} \end{aligned}\end{gathered}$}
    	child { node[label=right:uu] (uu) {$\begin{gathered}\tfrac{1}{\pi^{12}_2} =\pi^{21}_2 = 14\\
	Y_2=X_2=\twovector{0}{9}\end{gathered}$}}
    	child { node [draw=none] {} edge from parent [draw=none]}
	}
    child { node[label=left:d] (d) {$\begin{gathered}\tfrac{1}{\pi^{12}_1} =\pi^{21}_1 = 6\\
	\begin{aligned}Y_1&=\twovector{0}{0} \\ X_1&=\twovector{0}{3}\end{aligned}\end{gathered}$}
    	child { node [draw=none] {} edge from parent [draw=none]}
    	child { node[label=right:dd] (dd) {$\begin{gathered}\tfrac{1}{\pi^{12}_2} = \pi^{21}_2 = 4\\
	Y_2=X_2=\twovector{0}{0}\end{gathered}$}}
	};

%	\node (ud) at (0.7\textwidth,0) {$\begin{aligned}\pi^{21\mathrm{ud}}_2 &= \frac{1}{\pi^{12\mathrm{ud}}_2} = 10 & \end{aligned}$};
	
	\node[label=right:{ud,du}] (ud) at (2\leveldistance,0) {$\begin{gathered}\tfrac{1}{\pi^{12}_2} = \pi^{21}_2 = 10\\
	Y_2=X_2=\twovector{0}{4}\end{gathered}$};
	
    \draw (u) -- (ud);
    \draw (d) -- (ud);
\end{tikzpicture}
\end{center}
  \caption{Game option in binary two-step two-currency model}
  \label{fig:two-step-model}
 \end{figure}
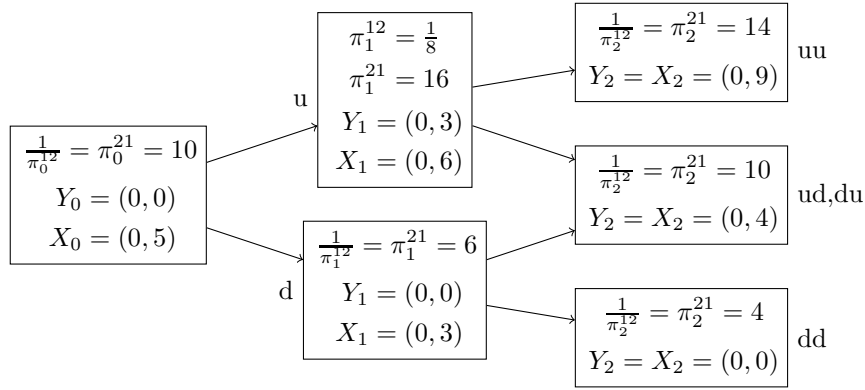

   Constructions 3.1 and 3.4 of \cite{roux2016} give the bid-ask spread of the game option $(Y,X)$ with instant exercise and cancellation in terms of currency $2$ to be $[3.2,5]$. We will show below that the bid-ask spread of $(Y,X)$ with gradual exercise and cancellation is
  \[
  [\pi_2^{\mathrm{b}}(Y,X), \pi_2^{\mathrm{a}}(Y,X)] = [\tfrac{11}{3},\tfrac{14}{3}] \approx [3.6667,4.6667] \subset [3.2,5].
 \]
  (Indeed the bid and ask prices of $(Y,X)$ can be read off the vertical axes in Figures~\ref{fig:ex:toy:Za0} and~\ref{fig:ex:toy:Zb0} below.) Thus gradual exercise and cancellation leads to a smaller bid-ask spread in this example.

 Let us use Construction~\ref{constr:seller} to find the set $\mathcal{Z}^{\mathrm{a}}_0$ of initial endowments that allow the seller to superhedge $(Y,X)$ with gradual exercise and cancellation. At time $t=2$ we have
 \begin{align*}
  \mathcal{Z}^{\mathrm{auu}}_2 &= \{(x^1,x^2)\in\mathbb{R}^2:14x^1+x^2 \ge 9\}, \\
  \mathcal{Z}^{\mathrm{aud}}_2 &= \mathcal{Z}^{\mathrm{adu}}_2 =  \{(x^1,x^2)\in\mathbb{R}^2:10x^1+x^2 \ge 4\}, \\
  \mathcal{Z}^{\mathrm{add}}_2 &= \{(x^1,x^2)\in\mathbb{R}^2:4x^1+x^2 \ge 0\}.
 \end{align*}
 Figure~\ref{fig:ex:toy:Zau1} illustrates the construction at time $t=1$ at the node $\mathrm{u}$, which results in
 \[
  \mathcal{Z}^{\mathrm{au}}_1 = \{(x^1,x^2)\in\mathbb{R}^2:14x^1+x^2 \ge 6, \tfrac{58}{5}x^1+x^2 \ge 6, 10x^1+x^2 \ge 4\}.
 \]
 Similar considerations at the node $\mathrm{d}$ give
 \[
  \mathcal{Z}^{\mathrm{ad}}_1 = \{(x^1,x^2)\in\mathbb{R}^2:6x^1+x^2 \ge \tfrac{4}{3}\}.
 \]
 The construction at time $t=0$ gives
 \[
  \mathcal{Z}^{\mathrm{a}}_0 = \{(x^1,x^2)\in\mathbb{R}^2:10x^1+x^2 \ge \tfrac{14}{3}\},
 \]
 as illustrated in Figure~\ref{fig:ex:toy:Za0}.
    \begin{figure}
  \newcommand{\minx}{-0.3}
  \newcommand{\maxx}{1.5}
  \newcommand{\miny}{-10}
  \newcommand{\maxy}{10}
   \centering
   \begin{tabular}{@{}rr@{}}
     \footnotesize\begin{tikzpicture}[x=0.5/(\maxx-(\minx))*\figurewidth,y=0.5/(\maxy-(\miny))*\figureheight]
    \draw[draw=none,pattern=north east lines, pattern color=lightgray] ({0 - (\maxy - 9)/14} ,\maxy) -- (0, 9) -- ({0-(\miny-9)/14},\miny) -- (\maxx,\miny) -- (\maxx,\maxy) -- (\minx,\maxy) -- cycle; %Zauu2
     \draw[draw=none,pattern=north west lines, pattern color=lightgray] (\minx ,{4-(\minx-0)*10}) -- (0, 4) -- ({0-(\miny-4)/10},\miny) -- (\maxx,\miny) -- (\maxx,\maxy) -- (\minx,\maxy) -- cycle; %Zaud2
   \draw[draw=none,fill=lightgray,opacity=0.5] ({0 - (\maxy - 9)/14} ,\maxy) -- (5/4, -17/2) -- ({0-(\miny-4)/10},\miny) -- (\maxx,\miny) -- (\maxx,\maxy) -- cycle; %W
      \foreach \y/\yname in {{-17/2}/{-\tfrac{17}{2}},9,4}
         \draw[tick] (0,\y) node[left,ticklabel] {\footnotesize$\yname$} -- ++(1ex,0);

       \foreach \x/\xname in {{5/4}/\tfrac{5}{4},{2/5}/\tfrac{2}{5}}
         \draw[tick] (\x,0) node[below,ticklabel] {\footnotesize$\xname$} -- ++(0,1ex);

       \foreach \x/\xname in {{9/14}/\tfrac{9}{14}}
         \draw[tick] (\x,0) -- ++(0,1ex) node[above,ticklabel] {\footnotesize$\xname$};

    \draw[axis] (\minx,0) -- (\maxx,0) node[above left,axislabel] {$x^1$};
    \draw[axis] (0,\maxy) -- (0,\miny) node[above right,axislabel] {$x^2$};
    \draw[thick] ({0 - (\maxy - 9)/14} ,\maxy) -- (5/4, -17/2) -- ({0-(\miny-4)/10},\miny); %W
    \draw (\maxx,\maxy) node[below left] {$\mathcal{W}^{\mathrm{au}}_1=\mathcal{Z}^{\mathrm{auu}}_2\cap\mathcal{Z}^{\mathrm{aud}}_2$};
     \draw (5/4,-17/2) node {$\bullet$};
     \draw (5/4,-17/2) node[above right] {$z^{\mathrm{au}}_2$};
     \draw ({0 - (\maxy - 9)/14} ,\maxy) -- (0, 9) -- ({0-(\miny-9)/14},\miny) node[sloped,above,pos=0.2] {$\mathcal{Z}^{\mathrm{auu}}_2$};
     \draw (\minx ,{4-(\minx-0)*10}) -- (0, 4) -- ({0-(\miny-4)/10},\miny) node[sloped,above,pos=0.12] {$\mathcal{Z}^{\mathrm{aud}}_2$};
  \end{tikzpicture}
 &
     \footnotesize\begin{tikzpicture}[x=0.5/(\maxx-(\minx))*\figurewidth,y=0.5/(\maxy-(\miny))*\figureheight]
   \draw[draw=none,pattern=north east lines, pattern color=lightgray] ({0 - (\maxy - 9)/14} ,\maxy) -- (5/4, -17/2) -- ({0-(\miny-4)/10},\miny) -- (\maxx,\miny) -- (\maxx,\maxy) -- cycle; %W
   \draw[draw=none,fill=lightgray,opacity=0.5] ({0 - (\maxy - 9)/14} ,\maxy) -- (5/4, -17/2) -- ({0-(\miny-4)/10},\miny) -- (\maxx,\miny) -- (\maxx,\maxy) -- cycle; %W
%    \draw[draw=none,fill=lightgray,opacity=0.5] ({5/6 - (\maxy - (-13/3))/16} ,\maxy) -- (5/6, -13/3) -- (\maxx,{-13/3 -(\maxx-5/6)*10}) -- (\maxx,\maxy) -- cycle; %V
%
      \foreach \y/\yname in {{-17/2}/{-\tfrac{17}{2}},9}
         \draw[tick] (0,\y) node[left,ticklabel] {\footnotesize$\yname$} -- ++(1ex,0);

       \foreach \x/\xname in {{5/4}/\tfrac{5}{4},{9/14}/\tfrac{9}{14}}
         \draw[tick] (\x,0) node[below,ticklabel] {\footnotesize$\xname$} -- ++(0,1ex);
     \draw[thick] ({0 - (\maxy - 9)/14} ,\maxy) -- (5/4, -17/2) node[sloped,above,pos=0.5] {$\mathcal{W}^{\mathrm{au}}_1$} -- ({0-(\miny-4)/10},\miny);
    \draw (\maxx,\maxy) node[below left] {$\mathcal{V}^{\mathrm{au}}_1=\mathcal{W}^{\mathrm{au}}_1+ \mathcal{Q}^\mathrm{u}_1$};
    \draw[dashed] (\minx, {0 - (\minx-0)*14}) -- (0, 0) -- ({0 - (\miny-0)/10},\miny)  node[sloped, above, pos=0.6] {lower boundary of $\mathcal{Q}^\mathrm{u}_1$};
     \draw (5/4,-17/2) node {$\bullet$};
     \draw (5/4,-17/2) node[above right] {$z^{\mathrm{au}}_2$};
     \draw[axis] (\minx,0) -- (\maxx,0) node[above left,axislabel] {$x^1$};
     \draw[axis] (0,\maxy) -- (0,\miny) node[above right,axislabel] {$x^2$};
  \end{tikzpicture}
 \\
     \footnotesize\begin{tikzpicture}[x=0.5/(\maxx-(\minx))*\figurewidth,y=0.5/(\maxy-(\miny))*\figureheight]
    \draw[draw=none,pattern=north east lines, pattern color=lightgray]  ({0 - (\maxy - 9)/14} ,\maxy) -- (5/4, -17/2) -- ({0-(\miny-4)/10},\miny) -- (\maxx,\miny) -- (\maxx,\maxy) -- cycle; %W
    \draw[draw=none,pattern=north west lines, pattern color=lightgray]  ({0 - (\maxy-6)/14},\maxy) -- (0, 6) -- (\maxx, {6 - (\maxx-0)*10}) -- (\maxx,\maxy) -- (\minx,\maxy) -- cycle; %X
    \draw[draw=none,fill=lightgray,opacity=0.5] ({0 - (\maxy-6)/14},\maxy) -- (0, 6) -- (5/4, -17/2) -- ({0-(\miny-4)/10},\miny) -- (\maxx,\miny) -- (\maxx,\maxy) -- cycle; %int
      \foreach \y/\yname in {{-17/2}/{-\tfrac{17}{2}},{-11/3}/{-\tfrac{11}{3}},6,9}
         \draw[tick] (0,\y) node[left,ticklabel] {\footnotesize$\yname$} -- ++(1ex,0);
       \foreach \x/\xname in {{15/29}/\tfrac{15}{29},{5/4}/\tfrac{5}{4}}
         \draw[tick] (\x,0) node[below,ticklabel] {\footnotesize$\xname$} -- ++(0,1ex);
       \foreach \x/\xname in {{9/14}/\tfrac{9}{14},{5/6}/\tfrac{5}{6}}
         \draw[tick] (\x,0) -- ++(0,1ex) node[above,ticklabel] {\footnotesize$\xname$};
     \draw[axis] (\minx,0) -- (\maxx,0) node[above left,axislabel] {$x^1$};
     \draw[axis] (0,\maxy) -- (0,\miny) node[above right,axislabel] {$x^2$};
     \draw (5/6,-11/3) node {$\bullet$};
     \draw (5/6,-11/3) node[below left] {$z^{\mathrm{a}}_1$};
     \draw (5/4,-17/2) node {$\bullet$};
     \draw (5/4,-17/2) node[right] {$z^{\mathrm{au}}_2$};
     \draw ({0 - (\maxy - 9)/14} ,\maxy) -- (5/4, -17/2) node[sloped,above,pos=0.3] {$\mathcal{V}^{\mathrm{au}}_1$} -- ({0-(\miny-4)/10},\miny);
    \draw ({0 - (\maxy-6)/14},\maxy) -- (0, 6) -- (\maxx, {6 - (\maxx-0)*10})  node[sloped, above, pos=0.8] {$\mathcal{X}^{\mathrm{au}}_1$};

        \draw[thick] ({0 - (\maxy-6)/14},\maxy) -- (0, 6) -- (5/4, -17/2) -- ({0-(\miny-4)/10},\miny);
     \draw (\maxx,\maxy) node[below left] {$\conv\{\mathcal{V}^{\mathrm{au}}_1,\mathcal{X}^{\mathrm{au}}_1\}$};
  \end{tikzpicture}
 &
    \footnotesize\begin{tikzpicture}[x=0.5/(\maxx-(\minx))*\figurewidth,y=0.5/(\maxy-(\miny))*\figureheight]
    \draw[draw=none,pattern=north east lines, pattern color=lightgray]  ({0 - (\maxy-6)/14},\maxy) -- (0, 6) -- (5/4, -17/2) -- ({0-(\miny-4)/10},\miny) -- (\maxx,\miny) -- (\maxx,\maxy) -- cycle; %intersection
    \draw[draw=none,pattern=north west lines, pattern color=lightgray]  (\minx, {3 - (\minx-0)*14}) -- (0, 3) -- ({0 - (\miny-3)/10},\miny) -- (\maxx,\miny) -- (\maxx,\maxy) -- (\minx,\maxy) -- cycle; %Y
    \draw[draw=none,fill=lightgray,opacity=0.5] ({0 - (\maxy-6)/14},\maxy) -- (0, 6) -- (5/4, -17/2) -- ({0-(\miny-4)/10},\miny) -- (\maxx,\miny) -- (\maxx,\maxy) -- cycle; %Z
%
%    \draw (0,4) node {$\bullet$};
%     \draw (0,4) node[above right] {$\hat{y}_1$};
%
      \foreach \y/\yname in {{-17/2}/{-\tfrac{17}{2}},{-11/3}/{-\tfrac{11}{3}},6,3}
         \draw[tick] (0,\y) node[left,ticklabel] {\footnotesize$\yname$} -- ++(1ex,0);
       \foreach \x/\xname in {{5/4}/\tfrac{5}{4},{3/10}/\tfrac{3}{10}}
         \draw[tick] (\x,0) node[below,ticklabel] {\footnotesize$\xname$} -- ++(0,1ex);
       \foreach \x/\xname in {{15/29}/\tfrac{15}{29},{5/6}/\tfrac{5}{6}}
         \draw[tick] (\x,0) -- ++(0,1ex) node[above,ticklabel] {\footnotesize$\xname$};
     \draw (5/6,-11/3) node {$\bullet$};
     \draw (5/6,-11/3) node[above] {$z^{\mathrm{a}}_1$};
     \draw[axis] (\minx,0) -- (\maxx,0) node[above left,axislabel] {$x^1$};
     \draw[axis] (0,\maxy) -- (0,\miny) node[above right,axislabel] {$x^2$};
    \draw[thick] ({0 - (\maxy-6)/14},\maxy) -- (0, 6) -- (5/4, -17/2)  node[sloped, above, pos=0.1] {$\conv\{\mathcal{V}^{\mathrm{au}}_1,\mathcal{X}^{\mathrm{au}}_1\}$} -- ({0-(\miny-4)/10},\miny);
    \draw  (\minx, {3 - (\minx-0)*14}) -- (0, 3) -- ({0 - (\miny-3)/10},\miny)  node[sloped, below, pos=0.7] {$\mathcal{Y}^{\mathrm{au}}_1$};
%    \draw[thick] (\minx, {4 - (\minx-0)*16})  -- (0, 4) -- (5/8,-1) -- (3/4,-3) -- (\maxx,{3 -(\maxx-0)*8});
 \draw (\maxx,\maxy) node[below left] {$\mathcal{Z}^{\mathrm{au}}_1=\conv\{\mathcal{V}^{\mathrm{au}}_1,\mathcal{X}^{\mathrm{au}}_1\cap\mathcal{Y}^{\mathrm{au}}_1\}$};
  \end{tikzpicture}
   \end{tabular}
    \caption{$\mathcal{W}^{\mathrm{au}}_1$, $\mathcal{V}^{\mathrm{au}}_1$, $\conv\{\mathcal{V}^{\mathrm{au}}_1,\mathcal{X}^{\mathrm{au}}_1\}$, $\mathcal{Z}^{\mathrm{au}}_1$, $z^{\mathrm{a}}_1$, $z^{\mathrm{au}}_2$}
    \label{fig:ex:toy:Zau1}
   \end{figure}

   A superhedging strategy for the seller starting from the initial endowment $\twovector{0}{\pi_2^{\mathrm{a}}(Y,X)}$ can be constructed by following similar lines as in the proof of Proposition~\ref{Prop:7djhb78a} to assemble $(\phi,z^{\mathrm{a}})\in\Lambda^\mathrm{a}(Y,X)$, and then converting it into a superhedging strategy $(\phi,u^{\mathrm{a}})\in\Phi^\mathrm{a}(Y,X)$ using the arguments in the proof of Proposition~\ref{Prop:7d4hnk0a}. We illustrate the first part of the process here for the scenario~$\mathrm{uu}$. Define first $z^{\mathrm{a}}_0:=\twovector{0}{\tfrac{14}{3}}$; it is clear from Figure~\ref{fig:ex:toy:Za0} that $z^{\mathrm{a}}_0\notin\mathcal{X}^{\mathrm{a}}_0$, leading to $\phi_0:=0$. Choosing $z^{\mathrm{a}}_1:=\twovector{\tfrac{5}{6}}{-\tfrac{11}{3}}\in\mathcal{W}^{\mathrm{a}}_0$ then gives that
  \[
   z^{\mathrm{a}}_0-\phi_0X_0 - z^{\mathrm{a}}_1 = z^{\mathrm{a}}_0-z^{\mathrm{a}}_1\in\mathcal{Q}_0.
  \]
  Figure~\ref{fig:ex:toy:Zau1} shows that $z^{\mathrm{a}}_1\in\mathcal{Z}^{\mathrm{a}}_1\subset\mathcal{Y}^{\mathrm{a}}_1$. It also shows that defining $\phi_1^\mathrm{u} := \tfrac{2}{3}$ and $z^{\mathrm{au}}_2 := \twovector{\tfrac{5}{4}}{-\tfrac{17}{2}}$ leads to
  \[
   z^{\mathrm{a}}_1 = \phi_1^\mathrm{u}\twovector{0}{6} + (1-\phi_1^\mathrm{u})z^{\mathrm{au}}_2,
  \]
  with $\twovector{0}{6}\in\mathcal{X}^{\mathrm{au}}_1$ and $z^{\mathrm{au}}_2\in\mathcal{V}^{\mathrm{au}}_1=\mathcal{W}^{\mathrm{au}}_1\subset\mathcal{Z}^{\mathrm{auu}}_2=\mathcal{Y}^{\mathrm{auu}}_2$. Thus this strategy corresponds to cancellation of $\tfrac{1}{3}$ of the option at time $1$ at the node $\mathrm{u}$ and the remaining $\phi_1^\mathrm{uu} := \tfrac{1}{3}$ at time $2$.
   \begin{figure}
  \newcommand{\minx}{-0.3}
  \newcommand{\maxx}{1.5}
  \newcommand{\miny}{-10}
  \newcommand{\maxy}{10}
   \centering
   \begin{tabular}{@{}rr@{}}
     \footnotesize\begin{tikzpicture}[x=0.5/(\maxx-(\minx))*\figurewidth,y=0.5/(\maxy-(\miny))*\figureheight]
    \draw[draw=none,pattern=north east lines, pattern color=lightgray] ({0 - (\maxy-6)/14},\maxy) -- (0, 6) -- (5/4, -17/2) -- ({0-(\miny-4)/10},\miny) -- (\maxx,\miny) -- (\maxx,\maxy) -- (\minx,\maxy) -- cycle; %Zau1
     \draw[draw=none,pattern=north west lines, pattern color=lightgray] (\minx ,{4/3-(\minx-0)*6}) -- (0, 4/3) -- (\maxx ,{4/3-(\maxx-0)*6}) -- (\maxx,\maxy) -- (\minx,\maxy) -- cycle; %Zad1
   \draw[draw=none,fill=lightgray,opacity=0.5] ({0 - (\maxy-6)/14},\maxy) -- (0, 6) -- (5/6, -11/3) -- (\maxx ,{4/3-(\maxx-0)*6}) -- (\maxx,\maxy) -- cycle; %W
       \foreach \x/\xname in {{15/29}/\tfrac{15}{29}}
         \draw[tick] (\x,0) -- ++(0,1ex) node[above,ticklabel] {\footnotesize$\xname$};

         \foreach \y/\yname in {{-17/2}/{-\tfrac{17}{2}},{-11/3}/{-\tfrac{11}{3}},6,{4/3}/\tfrac{4}{3}}
         \draw[tick] (0,\y) node[left,ticklabel] {\footnotesize$\yname$} -- ++(1ex,0);

       \foreach \x/\xname in {{2/9}/\tfrac{2}{9},{5/4}/\tfrac{5}{4},{5/6}/\tfrac{5}{6}}
         \draw[tick] (\x,0) node[below,ticklabel] {\footnotesize$\xname$} -- ++(0,1ex);

    \draw[axis] (\minx,0) -- (\maxx,0) node[above left,axislabel] {$x^1$};
    \draw[axis] (0,\maxy) -- (0,\miny) node[above right,axislabel] {$x^2$};
     \draw (5/6,-11/3) node {$\bullet$};
     \draw (5/6,-11/3) node[below] {$z^{\mathrm{a}}_1$};
    \draw[thick] ({0 - (\maxy-6)/14},\maxy) -- (0, 6) -- (5/6, -11/3) -- (\maxx ,{4/3-(\maxx-0)*6}); %W
    \draw (\maxx,\maxy) node[below left] {$\mathcal{W}^{\mathrm{a}}_0=\mathcal{Z}^{\mathrm{au}}_1\cap\mathcal{Z}^{\mathrm{ad}}_1$};
    \draw ({0 - (\maxy-6)/14},\maxy) -- (0, 6) -- (5/4, -17/2) -- ({0-(\miny-4)/10},\miny)  node[sloped, above, midway] {$\mathcal{Z}^{\mathrm{au}}_1$};
     \draw (\minx ,{4/3-(\minx-0)*6}) -- (0, 4/3) -- (\maxx ,{4/3-(\maxx-0)*6}) node[sloped,below,pos=0.3] {$\mathcal{Z}^{\mathrm{ad}}_1$};
  \end{tikzpicture}
 &
     \footnotesize\begin{tikzpicture}[x=0.5/(\maxx-(\minx))*\figurewidth,y=0.5/(\maxy-(\miny))*\figureheight]
    \draw[draw=none,pattern=north east lines, pattern color=lightgray] ({0 - (\maxy-6)/14},\maxy) -- (0, 6) -- (5/6, -11/3) -- (\maxx ,{4/3-(\maxx-0)*6}) -- (\maxx,\maxy) -- cycle; %W
   \draw[draw=none,fill=lightgray,opacity=0.5] (\minx, {14/3 - (\minx-0)*10}) -- (0, 14/3) -- ({0 - (\miny-14/3)/10},\miny) -- (\maxx,\miny) -- (\maxx,\maxy) -- (\minx,\maxy) -- cycle; %V
      \foreach \y/\yname in {{-11/3}/{-\tfrac{11}{3}},{14/3}/{\tfrac{14}{3}}}
         \draw[tick] (0,\y) node[left,ticklabel] {\footnotesize$\yname$} -- ++(1ex,0);

      \foreach \y/\yname in {6}
         \draw[tick] (0,\y) -- ++(1ex,0) node[right,ticklabel] {\footnotesize$\yname$};

       \foreach \x/\xname in {{15/29}/\tfrac{15}{29}}
         \draw[tick] (\x,0) -- ++(0,1ex) node[above,ticklabel] {\footnotesize$\xname$};

       \foreach \x/\xname in {{7/15}/\tfrac{7}{15},{5/6}/\tfrac{5}{6}}
         \draw[tick] (\x,0) node[below,ticklabel] {\footnotesize$\xname$} -- ++(0,1ex);
     \draw[axis] (\minx,0) -- (\maxx,0) node[above left,axislabel] {$x^1$};
     \draw[axis] (0,\maxy) -- (0,\miny) node[above right,axislabel] {$x^2$};
     \draw (0,14/3) node {$\bullet$};
     \draw (0,14/3) node[below] {$z^{\mathrm{a}}_0$};

     \draw (5/6,-11/3) node {$\bullet$};
     \draw (5/6,-11/3) node[below] {$z^{\mathrm{a}}_1$};
     \draw ({0 - (\maxy-6)/14},\maxy) -- (0, 6) -- (5/6, -11/3) -- (\maxx ,{4/3-(\maxx-0)*6}) node[sloped,above,pos=0.5] {$\mathcal{W}^{\mathrm{a}}_0$};
    \draw[thick] (\minx, {14/3 - (\minx-0)*10}) -- (0, 14/3) -- ({0 - (\miny-14/3)/10},\miny)  node[sloped, below, pos=0.8] {$\mathcal{V}^{\mathrm{a}}_0$};
     \draw (\maxx,\maxy) node[below left] {$\mathcal{V}^{\mathrm{a}}_0=\mathcal{W}^{\mathrm{a}}_0+\mathcal{Q}^{\mathrm{a}}_0$};
    \draw[dashed] (\minx, {0 - (\minx-0)*10}) -- (0, 0)-- ({0 - (\miny-0)/10},\miny)   node[sloped, below, near start] {lower boundary of $\mathcal{Q}_0$};
  \end{tikzpicture}
 \\
     \footnotesize\begin{tikzpicture}[x=0.5/(\maxx-(\minx))*\figurewidth,y=0.5/(\maxy-(\miny))*\figureheight]
    \draw[draw=none,pattern=north east lines, pattern color=lightgray] (\minx, {14/3 - (\minx-0)*10}) -- (0, 14/3) -- ({0 - (\miny-14/3)/10},\miny) -- (\maxx,\miny) -- (\maxx,\maxy) -- (\minx,\maxy) -- cycle; %V
    \draw[draw=none,pattern=north west lines, pattern color=lightgray]  (\minx, {5 - (\minx-0)*10}) -- (0, 5) -- ({0 - (\miny-5)/10},\miny) -- (\maxx,\miny) -- (\maxx,\maxy) -- (\minx,\maxy) -- cycle; %X
    \draw[draw=none,fill=lightgray,opacity=0.5] (\minx, {14/3 - (\minx-0)*10}) -- (0, 14/3) -- ({0 - (\miny-14/3)/10},\miny) -- (\maxx,\miny) -- (\maxx,\maxy) -- (\minx,\maxy) -- cycle; %int
      \foreach \y/\yname in {{14/3}/{\tfrac{14}{3}}}
         \draw[tick] (0,\y) node[left,ticklabel] {\footnotesize$\yname$} -- ++(1ex,0);

       \foreach \x/\xname in {{1/2}/\tfrac{1}{2}}
         \draw[tick] (\x,0) -- ++(0,1ex) node[above,ticklabel] {\footnotesize$\xname$};

         \foreach \x/\xname in {{7/15}/\tfrac{7}{15}}
         \draw[tick] (\x,0) node[below,ticklabel] {\footnotesize$\xname$} -- ++(0,1ex);
     \draw[axis] (\minx,0) -- (\maxx,0) node[above left,axislabel] {$x^1$};
     \draw[axis] (0,\maxy) -- (0,\miny) node[above right,axislabel] {$x^2$};

     \draw (0,14/3) node {$\bullet$};
     \draw (0,14/3) node[above right] {$z^{\mathrm{a}}_0$};
\draw (\minx, {14/3 - (\minx-0)*10}) -- (0, 14/3) -- ({0 - (\miny-14/3)/10},\miny)  node[sloped, below, pos=0.8] {$\mathcal{V}^{\mathrm{a}}_0$};
     \draw (\minx, {5 - (\minx-0)*10}) -- (0, 5) -- ({0 - (\miny-5)/10},\miny)  node[sloped, above, pos=0.8] {$\mathcal{X}^{\mathrm{a}}_0$};

        \draw[thick] (\minx, {14/3 - (\minx-0)*10}) -- (0, 14/3) -- ({0 - (\miny-14/3)/10},\miny);
     \draw (\maxx,\maxy) node[below left] {$\conv\{\mathcal{V}^{\mathrm{a}}_0,\mathcal{X}^{\mathrm{a}}_0\}$};

  \end{tikzpicture}
 &
    \footnotesize\begin{tikzpicture}[x=0.5/(\maxx-(\minx))*\figurewidth,y=0.5/(\maxy-(\miny))*\figureheight]
    \draw[draw=none,pattern=north east lines, pattern color=lightgray]  (\minx, {14/3 - (\minx-0)*10}) -- (0, 14/3) -- ({0 - (\miny-14/3)/10},\miny) -- (\maxx,\miny) -- (\maxx,\maxy) -- (\minx,\maxy) -- cycle; %intersection
    \draw[draw=none,pattern=north west lines, pattern color=lightgray]  (\minx, {0 - (\minx-0)*10}) -- (0, 0) -- ({0 - (\miny-0)/10},\miny) -- (\maxx,\miny) -- (\maxx,\maxy) -- (\minx,\maxy) -- cycle; %Y
    \draw[draw=none,fill=lightgray,opacity=0.5] (\minx, {14/3 - (\minx-0)*10}) -- (0, 14/3) -- ({0 - (\miny-14/3)/10},\miny) -- (\maxx,\miny) -- (\maxx,\maxy) -- (\minx,\maxy) -- cycle; %Z
      \foreach \y/\yname in {{14/3}/{\tfrac{14}{3}},{-14/3}/{-\tfrac{14}{3}}}
         \draw[tick] (0,\y) node[left,ticklabel] {\footnotesize$\yname$} -- ++(1ex,0);
         \foreach \x/\xname in {{7/15}/\tfrac{7}{15}}
         \draw[tick] (\x,0) node[below,ticklabel] {\footnotesize$\xname$} -- ++(0,1ex);
     \draw[axis] (\minx,0) -- (\maxx,0) node[above left,axislabel] {$x^1$};
     \draw[axis] (0,\maxy) -- (0,\miny) node[above right,axislabel] {$x^2$};

     \draw (0,14/3) node {$\bullet$};
     \draw (0,14/3) node[above right] {$z^{\mathrm{a}}_0$};
    \draw (\minx, {14/3 - (\minx-0)*10}) -- (0, 14/3) -- ({0 - (\miny-14/3)/10},\miny)  node[sloped, below, pos=0.7] {$\conv\{\mathcal{V}^{\mathrm{a}}_0,\mathcal{X}^{\mathrm{a}}_0\}$};
    \draw  (\minx, {0 - (\minx-0)*10}) -- (0, 0) -- ({0 - (\miny-0)/10},\miny)  node[sloped, below, pos=0.7] {$\mathcal{Y}^{\mathrm{a}}_0$};
    \draw[thick] (\minx, {14/3 - (\minx-0)*10}) -- (0, 14/3) -- ({0 - (\miny-14/3)/10},\miny);
 \draw (\maxx,\maxy) node[below left] {$\mathcal{Z}^{\mathrm{a}}_0=\conv\{\mathcal{V}^{\mathrm{a}}_0,\mathcal{X}^{\mathrm{a}}_0\}\cap\mathcal{Y}^{\mathrm{a}}_0$};
  \end{tikzpicture}
   \end{tabular}
    \caption{$\mathcal{W}^{\mathrm{a}}_0$, $\mathcal{V}^{\mathrm{a}}_0$, $\conv\{\mathcal{V}^{\mathrm{a}}_0,\mathcal{X}^{\mathrm{a}}_0\}$, $\mathcal{Z}^{\mathrm{a}}_0$, $z^{\mathrm{a}}_0$, $z^{\mathrm{a}}_1$}
    \label{fig:ex:toy:Za0}
   \end{figure}

     The set of superhedging strategies for the buyer of $(Y,X)$ can be computed by following Construction~\ref{constr:buyer}. At time $t=2$,
 \begin{align*}
  \mathcal{Z}^{\mathrm{buu}}_2 &= \{(x^1,x^2)\in\mathbb{R}^2:14x^1+x^2 \ge -9\}, \\
  \mathcal{Z}^{\mathrm{bud}}_2 &= \mathcal{Z}^{\mathrm{bdu}}_2 =  \{(x^1,x^2)\in\mathbb{R}^2:10x^1+x^2 \ge -4\}, \\
  \mathcal{Z}^{\mathrm{bdd}}_2 &= \{(x^1,x^2)\in\mathbb{R}^2:4x^1+x^2 \ge 0\}.
 \end{align*}
 The construction at time $t=1$ at the node $\mathrm{u}$ gives
 \[
  \mathcal{Z}^{\mathrm{bu}}_1 = \{(x^1,x^2)\in\mathbb{R}^2:14x^1+x^2 \ge -6, 10x^1+x^2 \ge -4\},
 \]
 as illustrated in Figure~\ref{fig:ex:toy:Zbu1}. Similar considerations at the node $\mathrm{d}$ result in
 \[
  \mathcal{Z}^{\mathrm{bd}}_1 = \{(x^1,x^2)\in\mathbb{R}^2:6x^1+x^2 \ge -\tfrac{4}{3}\}.
 \]
 Figure~\ref{fig:ex:toy:Zb0} demonstrates the construction at time $t=0$, which leads to
 \[
  \mathcal{Z}^{\mathrm{a}}_0 = \{(x^1,x^2)\in\mathbb{R}^2:10x^1+x^2 \ge -\tfrac{11}{3}\}.
 \]
 \begin{figure}
  \newcommand{\minx}{-1.5}
  \newcommand{\maxx}{0.3}
  \newcommand{\miny}{-10}
  \newcommand{\maxy}{12}
   \centering
   \begin{tabular}{@{}rr@{}}
     \footnotesize\begin{tikzpicture}[x=0.5/(\maxx-(\minx))*\figurewidth,y=0.5/(\maxy-(\miny))*\figureheight]
    \draw[draw=none,pattern=north east lines, pattern color=lightgray] ({0 - (\maxy +9)/14} ,\maxy) -- (0, -9) -- ({0-(\miny + 9)/14},\miny) -- (\maxx,\miny) -- (\maxx,\maxy) -- (\minx,\maxy) -- cycle; %Zbuu2
     \draw[draw=none,pattern=north west lines, pattern color=lightgray] (\minx,{-4 - (\minx - 0)*10}) -- (0, -4) -- (\maxx,{-4 - (\maxx - 0)*10}) -- (\maxx,\miny) -- (\maxx,\maxy) -- (\minx,\maxy) -- cycle; %Zbud2
   \draw[draw=none,fill=lightgray,opacity=0.5] ({0 - (\maxy +9)/14} ,\maxy) -- (-5/4, 17/2) -- (\maxx,{-4 - (\maxx - 0)*10}) -- (\maxx,\miny) -- (\maxx,\maxy) -- cycle; %W
      \foreach \y/\yname in {-4,-9}
         \draw[tick] (0,\y) node[left,ticklabel] {\footnotesize$\yname$} -- ++(1ex,0);

         \foreach \y/\yname in {{17/2}/\tfrac{17}{2}}
         \draw[tick] (0,\y) node[left,ticklabel] {\footnotesize$\yname$} -- ++(1ex,0);

       \foreach \y/\yname in {{13/6}/\tfrac{13}{6}}
          \draw[tick] (0,\y) -- ++(1ex,0) node[right,ticklabel] {\footnotesize$\yname$};
       \foreach \x/\xname in {{{-5/4}}/-\tfrac{5}{4}\phantom-,{{-9/14}}/-\tfrac{9}{14}\phantom-,{-2/5}/-\tfrac{2}{5}\phantom-}
         \draw[tick] (\x,0) node[below,ticklabel] {\footnotesize$\xname$} -- ++(0,1ex);

\draw[axis] (\minx,0) -- (\maxx,0) node[above left,axislabel] {$x^1$};
    \draw[axis] (0,\maxy) -- (0,\miny) node[above right,axislabel] {$x^2$};
     \draw (-7/12,13/6) node {$\bullet$};
     \draw (-7/12,13/6) node[right] {$z^{\mathrm{bu}}_2$};
    \draw[thick] ({0 - (\maxy +9)/14} ,\maxy) -- (-5/4, 17/2) -- (\maxx,{-4 - (\maxx - 0)*10}); %W
    \draw (\maxx,\maxy) node[below left] {$\mathcal{W}^{\mathrm{bu}}_1=\mathcal{Z}^{\mathrm{buu}}_2\cap\mathcal{Z}^{\mathrm{bud}}_2$};
     \draw ({0 - (\maxy +9)/14} ,\maxy) -- (0, -9) node[sloped,below,pos=0.8] {$\mathcal{Z}^{\mathrm{buu}}_2$} -- ({0-(\miny + 9)/14},\miny);
     \draw (\minx,{-4 - (\minx - 0)*10}) -- (0, -4) -- (\maxx,{-4 - (\maxx - 0)*10}) node[sloped,above,midway] {$\mathcal{Z}^{\mathrm{bud}}_2$};
  \end{tikzpicture}
 &
      \footnotesize\begin{tikzpicture}[x=0.5/(\maxx-(\minx))*\figurewidth,y=0.5/(\maxy-(\miny))*\figureheight]
   \draw[draw=none,pattern=north east lines, pattern color=lightgray] ({0 - (\maxy +9)/14} ,\maxy) -- (-5/4, 17/2) -- (\maxx,{-4 - (\maxx - 0)*10}) -- (\maxx,\miny) -- (\maxx,\maxy) -- cycle; %W
   \draw[draw=none,fill=lightgray,opacity=0.5] ({0 - (\maxy +9)/14} ,\maxy) -- (-5/4, 17/2) -- (\maxx,{-4 - (\maxx - 0)*10}) -- (\maxx,\miny) -- (\maxx,\maxy) -- cycle; %W

      \foreach \y/\yname in {-4}
         \draw[tick] (0,\y) node[left,ticklabel] {\footnotesize$\yname$} -- ++(1ex,0);

         \foreach \y/\yname in {{17/2}/\tfrac{17}{2}}
         \draw[tick] (0,\y) node[left,ticklabel] {\footnotesize$\yname$} -- ++(1ex,0);

         \foreach \y/\yname in {{13/6}/\tfrac{13}{6}}
         \draw[tick] (0,\y) -- ++(1ex,0) node[right,ticklabel] {\footnotesize$\yname$};

       \foreach \x/\xname in {{-7/12}/-\tfrac{7}{12}\phantom-,{-5/4}/-\tfrac{5}{4}\phantom-,{-2/5}/-\tfrac{2}{5}\phantom-}
         \draw[tick] (\x,0) node[below,ticklabel] {\footnotesize$\xname$} -- ++(0,1ex);

      \draw[axis] (\minx,0) -- (\maxx,0) node[above left,axislabel] {$x^1$};
      \draw[axis] (0,\maxy) -- (0,\miny) node[above right,axislabel] {$x^2$};
% %
     \draw (-7/12,13/6) node {$\bullet$};
     \draw (-7/12,13/6) node[left] {$z^{\mathrm{bu}}_2 = z^{\mathrm{b}}_1$};
      \draw[thick] ({0 - (\maxy +9)/14} ,\maxy) -- (-5/4, 17/2) -- (\maxx,{-4 - (\maxx - 0)*10}) node[sloped,above,pos=0.85] {$\mathcal{W}^{\mathrm{bu}}_1$};
     \draw (\maxx,\maxy) node[below left] {$\mathcal{V}^{\mathrm{bu}}_1=\mathcal{W}^{\mathrm{bu}}_1+\mathcal{Q}^{\mathrm{u}}_1$};
% %
     \draw[dashed] ({0 - (\maxy-0)/14},\maxy) -- (0, 0)  node[sloped, below, pos=0.43] {lower boundary of $\mathcal{Q}^\mathrm{u}_1$} -- (\maxx, {0 - (\maxx-0)*10});
   \end{tikzpicture}
 \\
      \footnotesize\begin{tikzpicture}[x=0.5/(\maxx-(\minx))*\figurewidth,y=0.5/(\maxy-(\miny))*\figureheight]
     \draw[draw=none,pattern=north east lines, pattern color=lightgray]  ({0 - (\maxy +9)/14} ,\maxy) -- (-5/4, 17/2) -- (\maxx,{-4 - (\maxx - 0)*10}) -- (\maxx,\miny) -- (\maxx,\maxy) -- cycle; %V
     \draw[draw=none,pattern=north west lines, pattern color=lightgray]  ({0 - (\maxy+6)/14},\maxy) -- (0, -6) -- (\maxx,{-6 - (\maxx - 0)*10}) -- (\maxx,\maxy) -- (\minx,\maxy) -- cycle; %X
     \draw[draw=none,fill=lightgray,opacity=0.5] ({0 - (\maxy+6)/14},\maxy) -- (-1/2, 1) -- (\maxx,{-4 - (\maxx - 0)*10})-- (\maxx,\maxy) -- cycle; %int
% %
      \foreach \y/\yname in {-4,-6,1,{17/2}/\tfrac{17}{2}}
         \draw[tick] (0,\y) node[left,ticklabel] {\footnotesize$\yname$} -- ++(1ex,0);

       \foreach \y/\yname in {{13/6}/\tfrac{13}{6}}
          \draw[tick] (0,\y) -- ++(1ex,0) node[right,ticklabel] {\footnotesize$\yname$};
       \foreach \x/\xname in {{{-5/4}}/-\tfrac{5}{4}\phantom-,{-1/2}/-\tfrac{1}{2}\phantom-}
         \draw[tick] (\x,0) node[below,ticklabel] {\footnotesize$\xname$} -- ++(0,1ex);

      \draw[axis] (\minx,0) -- (\maxx,0) node[above left,axislabel] {$x^1$};
      \draw[axis] (0,\maxy) -- (0,\miny) node[above right,axislabel] {$x^2$};
% %
      \draw ({0 - (\maxy +9)/14} ,\maxy) -- (-5/4, 17/2) -- (\maxx,{-4 - (\maxx - 0)*10}) node[sloped,above,pos=0.85] {$\mathcal{V}^{\mathrm{bu}}_1$};
     \draw ({0 - (\maxy+6)/14},\maxy) -- (0, -6) node[sloped, above, pos=0.3] {$\mathcal{X}^{\mathrm{bu}}_1$} -- (\maxx,{-6 - (\maxx - 0)*10})  ;
     \draw (-7/12,13/6) node {$\bullet$};
     \draw (-7/12,13/6) node[above] {$z^{\mathrm{b}}_1$};
      \draw[thick] ({0 - (\maxy+6)/14},\maxy) -- (-1/2, 1) -- (\maxx,{-4 - (\maxx - 0)*10});
      \draw (\maxx,\maxy) node[below left] {$\mathcal{V}^{\mathrm{bu}}_1\cap\mathcal{X}^{\mathrm{bu}}_1$};
   \end{tikzpicture}
 &
     \footnotesize\begin{tikzpicture}[x=0.5/(\maxx-(\minx))*\figurewidth,y=0.5/(\maxy-(\miny))*\figureheight]
     \draw[draw=none,pattern=north east lines, pattern color=lightgray]  ({0 - (\maxy+6)/14},\maxy) -- (-1/2, 1) -- (\maxx,{-4 - (\maxx - 0)*10})-- (\maxx,\maxy) -- cycle; %intersection
     \draw[draw=none,pattern=north west lines, pattern color=lightgray]  ({0 - (\maxy+3)/14},\maxy) -- (0, -3) -- (\maxx, {-3 - (\maxx-0)*10}) -- (\maxx,\maxy) -- (\minx,\maxy) -- cycle; %Y
     \draw[draw=none,fill=lightgray,opacity=0.5]  ({0 - (\maxy+6)/14},\maxy) -- (-1/2, 1) -- (\maxx,{-4 - (\maxx - 0)*10})-- (\maxx,\maxy) -- cycle; %Z

        \foreach \y/\yname in {-4,1}
         \draw[tick] (0,\y) node[left,ticklabel] {\footnotesize$\yname$} -- ++(1ex,0);

         \foreach \y/\yname in {-3,{13/6}/\tfrac{13}{6}}
         \draw[tick] (0,\y) -- ++(1ex,0) node[right,ticklabel] {\footnotesize$\yname$};

       \foreach \x/\xname in {{-3/14}/-\tfrac{3}{14}\phantom-}
         \draw[tick] (\x,0) -- ++(0,1ex) node[above,ticklabel] {\footnotesize$\xname$};

       \foreach \x/\xname in {{-7/12}/-\tfrac{7}{12}\phantom-,{-2/5}/-\tfrac{2}{5}\phantom-}
         \draw[tick] (\x,0) node[below,ticklabel] {\footnotesize$\xname$} -- ++(0,1ex);

\draw[axis] (\minx,0) -- (\maxx,0) node[above left,axislabel] {$x^1$};
      \draw[axis] (0,\maxy) -- (0,\miny) node[above right,axislabel] {$x^2$};
% %
     \draw[thick] ({0 - (\maxy+6)/14},\maxy) -- (-1/2, 1) node[sloped, below, pos=0.6] {$\mathcal{V}^{\mathrm{bu}}_1\cap\mathcal{X}^{\mathrm{bu}}_1$}-- (\maxx,{-4 - (\maxx - 0)*10})  ;
     \draw ({0 - (\maxy+3)/14},\maxy) -- (0, -3) node[sloped, above, pos=0.4] {$\mathcal{Y}^{\mathrm{bu}}_1$} -- (\maxx, {-3 - (\maxx-0)*10})  ;

     \draw (-7/12,13/6) node {$\bullet$};
     \draw (-7/12,13/6) node[above] {$z^{\mathrm{b}}_1$};
  \draw (\maxx,\maxy) node[below left] {$\mathcal{Z}^{\mathrm{bu}}_1=\conv\{\mathcal{V}^{\mathrm{bu}}_1\cap\mathcal{X}^{\mathrm{bu}}_1,\mathcal{Y}^{\mathrm{bu}}_1\}$};
   \end{tikzpicture}
   \end{tabular}
    \caption{$\mathcal{W}^{\mathrm{bu}}_1$, $\mathcal{V}^{\mathrm{bu}}_1$, $\mathcal{V}^{\mathrm{bu}}_1\cap\mathcal{X}^{\mathrm{bu}}_1$, $\mathcal{Z}^{\mathrm{bu}}_1$, $z^{\mathrm{b}}_1$, $z^{\mathrm{bu}}_2$}
    \label{fig:ex:toy:Zbu1}
   \end{figure}

 Similarly to the seller's case, the construction of a superhedging strategy for the buyer starting from the initial endowment $\twovector{0}{-\pi_2^{\mathrm{b}}(Y,X)}$ involves two steps, namely assembling $(\psi,z^{\mathrm{b}})\in\Lambda^\mathrm{b}(Y,X)$ using the construction in the proof of Proposition~\ref{Prop:9h75hhsl}, and then converting it into a superhedging strategy $(\psi,u^{\mathrm{b}})\in\Phi^\mathrm{b}(Y,X)$ following the lines in the proof of Proposition~\ref{Prop:jg768enk}. Let us consider again the first step for the scenario $\mathrm{uu}$. Define $z^{\mathrm{b}}_0:=\twovector{0}{-\tfrac{11}{3}}$; then Figure~\ref{fig:ex:toy:Zb0} shows that $z^{\mathrm{b}}_0\in\mathcal{X}^{\mathrm{b}}_0$ but $z^{\mathrm{b}}_0\notin\mathcal{Y}^{\mathrm{b}}_0$, which leads to $\psi_0:=0$. Choosing $z^{\mathrm{b}}_1:=\twovector{-\tfrac{7}{12}}{\tfrac{13}{6}}\in\mathcal{W}^{\mathrm{b}}_0$ ensures that
 \[
    z^{\mathrm{b}}_0-\psi_0Y_0 - z^{\mathrm{b}}_1 = z^{\mathrm{b}}_0-z^{\mathrm{b}}_1\in\mathcal{Q}_0.
 \]
 Figure~\ref{fig:ex:toy:Zbu1} shows that $z^{\mathrm{b}}_1\in\mathcal{X}^{\mathrm{bu}}_1$; however $z^{\mathrm{b}}_1\notin\mathcal{Y}^{\mathrm{bu}}_1$ again leads to the choice $\psi^{\mathrm{u}}_1:=0$. Moreover $z^{\mathrm{b}}_1\in\mathcal{W}^{\mathrm{bu}}_1$, so choosing $z^{\mathrm{bu}}_2:=z^{\mathrm{b}}_1$ gives
 \[
  z^{\mathrm{b}}_1-\psi^{\mathrm{u}}_1Y^{\mathrm{u}}_1 - z^{\mathrm{bu}}_2 = 0 \in\mathcal{Q}^{\mathrm{u}}_1.
 \]
 Note finally that $z^{\mathrm{bu}}_2\in \mathcal{Z}^{\mathrm{buu}}_2=\mathcal{Y}^{\mathrm{buu}}_2$, which leads to $\psi^{\mathrm{uu}}_2:=1$. Thus this strategy corresponds to exercising the entire option at time 2 on the node $\mathrm{uu}$.
     \begin{figure}
  \newcommand{\minx}{-1.5}
  \newcommand{\maxx}{0.3}
  \newcommand{\miny}{-10}
  \newcommand{\maxy}{12}
   \centering
   \begin{tabular}{@{}rr@{}}
      \footnotesize\begin{tikzpicture}[x=0.5/(\maxx-(\minx))*\figurewidth,y=0.5/(\maxy-(\miny))*\figureheight]
     \draw[draw=none,pattern=north east lines, pattern color=lightgray] ({0 - (\maxy+6)/14},\maxy) -- (-1/2, 1) -- (\maxx,{-4 - (\maxx - 0)*10})-- (\maxx,\maxy) -- cycle; %Zbu1
      \draw[draw=none,pattern=north west lines, pattern color=lightgray] (\minx,{-4/3 - (\minx - 0)*6}) -- (0, -4/3) -- (\maxx,{-4/3 - (\maxx - 0)*6}) -- (\maxx,\miny) -- (\maxx,\maxy) -- (\minx,\maxy) -- cycle; %Zbd1
    \draw[draw=none,fill=lightgray,opacity=0.5] ({0 - (\maxy+6)/14},\maxy) -- (-7/12, 13/6) -- (\maxx,{-4/3 - (\maxx - 0)*6}) -- (\maxx,\maxy) -- cycle; %W
% %
       \foreach \y/\yname in {-4,{-4/3}/-\tfrac{4}{3}}
          \draw[tick] (0,\y) node[left,ticklabel] {\footnotesize$\yname$} -- ++(1ex,0);
       \foreach \y/\yname in {{13/6}/\tfrac{13}{6}}
          \draw[tick] (0,\y) -- ++(1ex,0) node[right,ticklabel] {\footnotesize$\yname$};
        \foreach \x/\xname in {{-2/9}/-\tfrac{2}{9}\phantom-}
          \draw[tick] (\x,0) -- ++(0,1ex) node[above,ticklabel] {\footnotesize$\xname$};
        \foreach \x/\xname in {{-7/12}/-\tfrac{7}{12}\phantom-}
          \draw[tick] (\x,0) node[below,ticklabel] {\footnotesize$\xname$} -- ++(0,1ex);
     \draw (-7/12,13/6) node {$\bullet$};
     \draw (-7/12,13/6) node[above] {$z^{\mathrm{b}}_1$};
 \draw[axis] (\minx,0) -- (\maxx,0) node[above left,axislabel] {$x^1$};
     \draw[axis] (0,\maxy) -- (0,\miny) node[above right,axislabel] {$x^2$};
% %
     \draw[thick] ({0 - (\maxy+6)/14},\maxy) -- (-7/12, 13/6) -- (\maxx,{-4/3 - (\maxx - 0)*6}); %W
     \draw (\maxx,\maxy) node[below left] {$\mathcal{W}^{\mathrm{b}}_0=\mathcal{Z}^{\mathrm{bu}}_1\cap\mathcal{Z}^{\mathrm{bd}}_1$};
% %
      \draw ({0 - (\maxy+6)/14},\maxy) -- (-1/2, 1) node[sloped, above, midway] {$\mathcal{Z}^{\mathrm{bu}}_1$}-- (\maxx,{-4 - (\maxx - 0)*10});
      \draw (\minx,{-4/3 - (\minx - 0)*6}) -- (0, -4/3) -- (\maxx,{-4/3 - (\maxx - 0)*6}) node[sloped,above,midway] {$\mathcal{Z}^{\mathrm{bd}}_1$};
   \end{tikzpicture}
 &
       \footnotesize\begin{tikzpicture}[x=0.5/(\maxx-(\minx))*\figurewidth,y=0.5/(\maxy-(\miny))*\figureheight]
    \draw[draw=none,pattern=north east lines, pattern color=lightgray] ({0 - (\maxy+6)/14},\maxy) -- (-7/12, 13/6) -- (\maxx,{-4/3 - (\maxx - 0)*6}) -- (\maxx,\maxy) -- cycle; %W
    \draw[draw=none,fill=lightgray,opacity=0.5]  (\minx,{13/6 - (\minx + 7/12)*10}) -- (-7/12, 13/6) -- (\maxx,{13/6 - (\maxx + 7/12)*10}) -- (\maxx,\maxy) -- (\minx,\maxy) -- cycle; %V
       \foreach \y/\yname in {{-4/3}/-\tfrac{4}{3},{-11/3}/-\tfrac{11}{3}}
          \draw[tick] (0,\y) node[left,ticklabel] {\footnotesize$\yname$} -- ++(1ex,0);
       \foreach \y/\yname in {{13/6}/\tfrac{13}{6}}
          \draw[tick] (0,\y) -- ++(1ex,0) node[right,ticklabel] {\footnotesize$\yname$};

        \foreach \x/\xname in {{-7/12}/-\tfrac{7}{12}\phantom-,{-11/30}/-\tfrac{11}{30}\phantom-}
          \draw[tick] (\x,0) node[below,ticklabel] {\footnotesize$\xname$} -- ++(0,1ex);
 \draw[axis] (\minx,0) -- (\maxx,0) node[above left,axislabel] {$x^1$};
     \draw[axis] (0,\maxy) -- (0,\miny) node[above right,axislabel] {$x^2$};
% % %
       \draw ({0 - (\maxy+6)/14},\maxy) -- (-7/12, 13/6) -- (\maxx,{-4/3 - (\maxx - 0)*6}) node[sloped,above,pos=0.8] {$\mathcal{W}^{\mathrm{b}}_0$};
       \draw[thick] (\minx,{13/6 - (\minx + 7/12)*10}) -- (-7/12, 13/6) -- (\maxx,{13/6 - (\maxx + 7/12)*10});
      \draw (\maxx,\maxy) node[below left] {$\mathcal{V}^{\mathrm{b}}_0=\mathcal{W}^{\mathrm{b}}_0+\mathcal{Q}_0$};
% % %
     \draw (-7/12,13/6) node {$\bullet$};
     \draw (-7/12,13/6) node[above] {$z^{\mathrm{b}}_1$};
     \draw (0,-11/3) node {$\bullet$};
     \draw (0,-11/3) node[right] {$z^{\mathrm{b}}_0$};
    \draw[dashed]  ({0 - (\maxy-0)/10},\maxy)-- (0, 0)   node[sloped, above, midway] {lower boundary of $\mathcal{Q}_0$}-- (\maxx, {0 - (\maxx-0)*10});
    \end{tikzpicture}
 \\
       \footnotesize\begin{tikzpicture}[x=0.5/(\maxx-(\minx))*\figurewidth,y=0.5/(\maxy-(\miny))*\figureheight]
      \draw[draw=none,pattern=north east lines, pattern color=lightgray]  (\minx,{13/6 - (\minx + 7/12)*10}) -- (-7/12, 13/6) -- (\maxx,{13/6 - (\maxx + 7/12)*10}) -- (\maxx,\maxy) -- (\minx,\maxy) -- cycle; %V
      \draw[draw=none,pattern=north west lines, pattern color=lightgray]  (\minx, {-5 - (\minx-0)*10}) -- (0, -5) -- (\maxx,{-5 - (\maxx - 0)*10}) -- (\maxx,\maxy) -- (\minx,\maxy) -- cycle; %X
      \draw[draw=none,fill=lightgray,opacity=0.5] (\minx,{13/6 - (\minx + 7/12)*10}) -- (-7/12, 13/6) -- (\maxx,{13/6 - (\maxx + 7/12)*10}) -- (\maxx,\maxy) -- (\minx,\maxy) -- cycle; %int
% % %
       \foreach \y/\yname in {-5}
          \draw[tick] (0,\y) node[left,ticklabel] {\footnotesize$\yname$} -- ++(1ex,0);
       \foreach \y/\yname in {{-11/3}/-\tfrac{11}{3}}
          \draw[tick] (0,\y) -- ++(1ex,0) node[right,ticklabel] {\footnotesize$\yname$};

        \foreach \x/\xname in {{-1/2}/-\tfrac{1}{2}\phantom-}
          \draw[tick] (\x,0) node[below,ticklabel] {\footnotesize$\xname$} -- ++(0,1ex);
        \foreach \x/\xname in {{-11/30}/-\tfrac{11}{30}\phantom-}
          \draw[tick] (\x,0) -- ++(0,1ex) node[above,ticklabel] {\footnotesize$\xname$};
 \draw[axis] (\minx,0) -- (\maxx,0) node[above left,axislabel] {$x^1$};
     \draw[axis] (0,\maxy) -- (0,\miny) node[above right,axislabel] {$x^2$};
% % %
     \draw (0,-11/3) node {$\bullet$};
     \draw (0,-11/3) node[above] {$z^{\mathrm{b}}_0$};
       \draw[thick] (\minx,{13/6 - (\minx + 7/12)*10}) -- (-7/12, 13/6) node[sloped,above,pos=0.7] {$\mathcal{V}^{\mathrm{b}}_0$} -- (\maxx,{13/6 - (\maxx + 7/12)*10});
     \draw (\minx, {-5 - (\minx-0)*10}) -- (0, -5)  node[sloped, below, pos=0.4] {$\mathcal{X}^{\mathrm{b}}_0$} -- (\maxx,{-5 - (\maxx - 0)*10});
% %
       \draw (\maxx,\maxy) node[below left] {$\mathcal{V}^{\mathrm{b}}_0\cap\mathcal{X}^{\mathrm{b}}_0$};
    \end{tikzpicture}
 &
      \footnotesize\begin{tikzpicture}[x=0.5/(\maxx-(\minx))*\figurewidth,y=0.5/(\maxy-(\miny))*\figureheight]
      \draw[draw=none,pattern=north east lines, pattern color=lightgray] (\minx,{13/6 - (\minx + 7/12)*10}) -- (-7/12, 13/6) -- (\maxx,{13/6 - (\maxx + 7/12)*10}) -- (\maxx,\maxy) -- (\minx,\maxy) -- cycle; %intersection
      \draw[draw=none,pattern=north west lines, pattern color=lightgray] ({0 - (\maxy-0)/10},\maxy) -- (0, 0) -- (\maxx, {0 - (\maxx-0)*10}) -- (\maxx,\maxy) -- cycle; %Y
      \draw[draw=none,fill=lightgray,opacity=0.5] (\minx,{13/6 - (\minx + 7/12)*10}) -- (-7/12, 13/6) -- (\maxx,{13/6 - (\maxx + 7/12)*10}) -- (\maxx,\maxy) -- (\minx,\maxy) -- cycle; %Z
% % %
       \foreach \y/\yname in {{-11/3}/-\tfrac{11}{3}}
          \draw[tick] (0,\y) -- ++(1ex,0) node[right,ticklabel] {\footnotesize$\yname$};

        \foreach \x/\xname in {{-11/30}/-\tfrac{11}{30}\phantom-}
          \draw[tick] (\x,0) -- ++(0,1ex) node[above,ticklabel] {\footnotesize$\xname$};
 \draw[axis] (\minx,0) -- (\maxx,0) node[above left,axislabel] {$x^1$};
     \draw[axis] (0,\maxy) -- (0,\miny) node[above right,axislabel] {$x^2$};

       \draw[thick] (\minx,{13/6 - (\minx + 7/12)*10}) -- (-7/12, 13/6) node[sloped,above,pos=0.7] {$\mathcal{V}^{\mathrm{b}}_0\cap\mathcal{X}^{\mathrm{b}}_0$} -- (\maxx,{13/6 - (\maxx + 7/12)*10});
    \draw  ({0 - (\maxy-0)/10},\maxy) -- (0, 0)  node[sloped, above, pos=0.6] {$\mathcal{Y}^{\mathrm{b}}_0$} -- (\maxx, {0 - (\maxx-0)*10});
     \draw (0,-11/3) node {$\bullet$};
     \draw (0,-11/3) node[below] {$z^{\mathrm{b}}_0$};
   \draw (\maxx,\maxy) node[below left] {$\mathcal{Z}^{\mathrm{b}}_0=\conv\{\mathcal{V}^{\mathrm{b}}_0\cap\mathcal{X}^{\mathrm{b}}_0,\mathcal{Y}^{\mathrm{b}}_0\}$};
    \end{tikzpicture}
   \end{tabular}
    \caption{$\mathcal{W}^{\mathrm{b}}_0$, $\mathcal{V}^{\mathrm{b}}_0$, $\mathcal{V}^{\mathrm{b}}_0\cap\mathcal{X}^{\mathrm{b}}_0$, $\mathcal{Z}^{\mathrm{b}}_0$, $z^{\mathrm{b}}_0$, $z^{\mathrm{b}}_1$}
    \label{fig:ex:toy:Zb0}
   \end{figure}

\section{Appendix: proofs\label{Sect:Appendix}}\label{sec:7}

\begin{proof}
[Proof of Theorem~\ref{Thm:fjusab76}]By Definition~\ref{Def:nnf65uwsu48},%
\[
\pi_{j}^{\mathrm{a}}(Y,X)=\inf\left\{  x\in\mathbb{R}\,|\,\exists(\phi
,u)\in\Phi^{\mathrm{a}}(Y,X):xe^{j}=u_{0}\right\}  .
\]
Hence, according to Proposition~\ref{Prop:7d4hnk0a},%
\[
\pi_{j}^{\mathrm{a}}(Y,X)=\inf\left\{  x\in\mathbb{R}\,|\,\exists(\phi
,u)\in\Lambda^{\mathrm{a}}(Y,X):xe^{j}=u_{0}\right\}
\]
and, by Proposition~\ref{Prop:7djhb78a},%
\[
\pi_{j}^{\mathrm{a}}(Y,X)=\inf\left\{  x\in\mathbb{R}\,|\,xe^{j}\in
\mathcal{Z}_{0}^{\mathrm{a}}\right\}  .
\]
Because $\mathcal{Z}_{0}^{\mathrm{a}}$ is a polyhedral set, it is closed. It
follows that $\left\{  x\in\mathbb{R}\,|\,xe^{j}\in\mathcal{Z}^{\mathrm{a}}_{0}\right\}  $
is closed. It is non-empty and bounded below because $xe^{j}\in\mathcal{Z}%
_{0}^{\mathrm{a}}$ for any $x\in\mathbb{R}$ large enough, and $xe^{j}%
\notin\mathcal{Z}_{0}^{\mathrm{a}}$ for any $x\in\mathbb{R}$ small enough. It
follows that the infimum is in fact a minimum.\bigskip
\end{proof}

\begin{proof}
[Proof of Proposition~\ref{Prop:7djhb78a}]Let $a\in\mathcal{Z}_{0}%
^{\mathrm{a}}$. We construct a mixed stopping time $\phi\in\mathcal{X}$ together with the corresponding process~$\phi^{\ast}$ and a
strategy $z\in\Phi$ by induction. First we put $\phi_{0}^{\ast}:=1$ and
$z_{0}:=a$. Clearly, $z_{0}\in\phi_{0}^{\ast}\mathcal{Z}_{0}^{\mathrm{a}}$.
Now suppose that for some $t=0,\ldots,T-1$ we have constructed~$z_{t}$
and~$\phi_{t}^{\ast}$ such that $z_{t}\in\phi_{t}^{\ast}\mathcal{Z}%
_{t}^{\mathrm{a}}$. It follows that $z_{t}\in\phi_{t}^{\ast}\mathcal{Y}%
_{t}^{\mathrm{a}}$, hence%
\[
z_{t}-\phi_{t}^{\ast}Y_{t}\in\mathcal{Q}_{t}.
\]
It also follows that $z_{t}\in\phi_{t}^{\ast}\operatorname*{conv}%
\{\mathcal{V}_{t}^{\mathrm{a}},\mathcal{X}_{t}^{\mathrm{a}}\}$, hence there
exist $\lambda_{t}\in\lbrack0,1]$, $v_{t}\in\mathcal{V}_{t}^{\mathrm{a}}$ and
$x_{t}\in\mathcal{X}_{t}^{\mathrm{a}}$ such that $z_{t}=\phi_{t}^{\ast}\left(
(1-\lambda_{t})v_{t}+\lambda_{t}x_{t}\right)  $. We put $\phi_{t}:=\phi
_{t}^{\ast}\lambda_{t}$, and then $\phi_{t+1}^{\ast}:=\phi_{t}^{\ast}-\phi
_{t}=\phi_{t}^{\ast}(1-\lambda_{t})$, so $z_{t}=\phi_{t+1}^{\ast}v_{t}%
+\phi_{t}x_{t}$. Since $x_{t}\in\mathcal{X}_{t}^{\mathrm{a}}$ and $v_{t}%
\in\mathcal{V}_{t}^{\mathrm{a}}$, it follows that $x_{t}-X_{t}\in
\mathcal{Q}_{t}$ and there is $z_{t+1}\in\phi_{t+1}^{\ast}\mathcal{W}%
_{t}^{\mathrm{a}}$ such that $\phi_{t+1}^{\ast}v_{t}-z_{t+1}\in\mathcal{Q}%
_{t}$. As a result,%
\begin{align*}
z_{t}-\phi_{t}X_{t}-z_{t+1}  &  =\phi_{t+1}^{\ast}v_{t}+\phi_{t}x_{t}-\phi
_{t}X_{t}-z_{t+1}\\
&  =\phi_{t+1}^{\ast}v_{t}-z_{t+1}+\phi_{t}\left(  x_{t}-X_{t}\right)
\in\mathcal{Q}_{t}.
\end{align*}
Since $\mathcal{W}_{t}^{\mathrm{a}}\subseteq\mathcal{Z}_{t+1}^{\mathrm{a}}$,
it also follows that $z_{t+1}\in\phi_{t+1}^{\ast}\mathcal{Z}_{t+1}%
^{\mathrm{a}}$. Finally, given that $z_{T}\in\phi_{T}^{\ast}\mathcal{Z}%
_{T}^{\mathrm{a}}$, we get $z_{T}\in\phi_{T}^{\ast}\mathcal{Y}_{T}%
^{\mathrm{a}}$, so $z_{T}-\phi_{T}^{\ast}Y_{T}\in\mathcal{Q}_{T}$, and we put
$\phi_{T}:=\phi_{T}^{\ast}$, $\phi_{T+1}^{\ast}:=0$ and $z_{T+1}:=0$. We have
constructed $(\phi,z)\in\Lambda^{\mathrm{a}}(Y,X)$ such that $a=z_{0}$.

Conversely, we take any $(\phi,z)\in\Lambda^{\mathrm{a}}(Y,X)$, and want to
show that $z_{0}\in\mathcal{Z}_{0}^{\mathrm{a}}$. More generally, we will show
by backward induction that for each $t=0,\ldots,T$%
\begin{equation}
z_{t}\in\left\{
\begin{array}
[c]{ll}%
\phi_{t}^{\ast}\mathcal{Z}_{t}^{\mathrm{a}} & \text{on }\{\phi_{t}^{\ast}>0\}\\
\mathcal{Q}_{t} & \text{on }\{\phi_{t}^{\ast}=0\}
\end{array}
\right.  . \label{eq:f754iewndfg}%
\end{equation}
Since $z_{T}-\phi_{T}^{\ast}Y_{T}=z_{T}-\phi_{T}Y_{T}\in\mathcal{Q}_{T}$, it
follows that%
\[
z_{T}\in\phi_{T}^{\ast}Y_{T}+\mathcal{Q}_{T}=\left\{
\begin{array}
[c]{ll}%
\phi_{T}^{\ast}\mathcal{Y}_{T}^{\mathrm{a}} & \text{on }\{\phi_{T}^{\ast}>0\}\\
\mathcal{Q}_{T} & \text{on }\{\phi_{T}^{\ast}=0\}
\end{array}
\right.  =\left\{
\begin{array}
[c]{ll}%
\phi_{T}^{\ast}\mathcal{Z}_{T}^{\mathrm{a}} & \text{on }\{\phi_{T}^{\ast}>0\}\\
\mathcal{Q}_{T} & \text{on }\{\phi_{T}^{\ast}=0\}
\end{array}
\right.  .
\]
Next, suppose that (\ref{eq:f754iewndfg}) holds for some $t=1,\ldots,T$. Since
$z$ is predictable, it follows that $z_{t}\in\mathcal{L}_{t-1}$, so%
\[
z_{t}\in\left\{
\begin{array}
[c]{ll}%
\phi_{t}^{\ast}(\mathcal{Z}_{t}^{\mathrm{a}}\cap\mathcal{L}_{t-1}) & 
\text{on }\{\phi_{t}^{\ast}>0\}\\
\mathcal{Q}_{t}\cap\mathcal{L}_{t-1} & \text{on }\{\phi_{t}^{\ast}=0\}
\end{array}
\right.  \subseteq\left\{
\begin{array}
[c]{ll}%
\phi_{t}^{\ast}\mathcal{W}_{t-1}^{\mathrm{a}} & \text{on }\{\phi_{t}^{\ast}>0\}\\
\mathcal{Q}_{t-1} & \text{on }\{\phi_{t}^{\ast}=0\}
\end{array}
\right.  .
\]
Hence, using%
\[
z_{t-1}-\phi_{t-1}X_{t-1}-z_{t}\in\mathcal{Q}_{t-1},
\]
we obtain%
\setlongestterm{\phi_{t}^{\ast}\mathcal{W}_{t-1}^{\mathrm{a}}+\mathcal{Q}_{t-1}}
\begin{align*}
z_{t-1}-\phi_{t-1}X_{t-1}  &  \in z_{t}+\mathcal{Q}_{t-1}\\
&  \subseteq\left\{
\begin{array}
[c]{ll}%
\phi_{t}^{\ast}\mathcal{W}_{t-1}^{\mathrm{a}}+\mathcal{Q}_{t-1} & 
\text{on }\{\phi_{t}^{\ast}>0\}\\
\mathcal{Q}_{t-1}+\mathcal{Q}_{t-1} & \text{on }\{\phi_{t}^{\ast}=0\}
\end{array}
\right. \\
&  =\left\{
\begin{array}
[c]{ll}%
\adjusttolongestterm{\phi_{t}^{\ast}\mathcal{V}_{t-1}^{\mathrm{a}}} & 
\text{on }\{\phi_{t}^{\ast}>0\}\\
\mathcal{Q}_{t-1} & \text{on }\{\phi_{t}^{\ast}=0\}
\end{array}
\right.  .
\end{align*}
It follows that%
\setlongestterm{\phi_{t}^{\ast}\operatorname*{conv}\left\{  \mathcal{V}_{t-1}^{\mathrm{a}%
},\mathcal{X}_{t-1}^{\mathrm{a}}\right\}}
\begin{align*}
z_{t-1}  &  \in\left\{
\begin{array}
[c]{ll}%
\adjusttolongestterm[+\innotis]{\phi_{t-1}X_{t-1}+\phi_{t}^{\ast}\mathcal{V}_{t-1}^{\mathrm{a}}} & \text{on }\{\phi_{t}^{\ast}>0\}\\
\phi_{t-1}X_{t-1}+\mathcal{Q}_{t-1} & \text{on }\{\phi_{t}^{\ast}=0\}
\end{array}
\right. \\
&  =\left\{
\begin{array}
[c]{ll}%
\adjusttolongestterm[-\innotis]{\phi_{t}^{\ast}\mathcal{V}_{t-1}^{\mathrm{a}}+\phi_{t-1}\mathcal{X}%
_{t-1}^{\mathrm{a}}} & \text{on }\{\phi_{t}^{\ast}>0\}\\
\phi_{t-1}\mathcal{X}_{t-1}^{\mathrm{a}} & \text{on }\{\phi_{t}^{\ast}=0,
\phi_{t-1}>0\}\\
\mathcal{Q}_{t-1} & \text{on }\{\phi_{t}^{\ast}=0, \phi_{t-1}=0\}
\end{array}
\right. \\
&  =\left\{
\begin{array}
[c]{ll}%
\adjusttolongestterm{\phi_{t}^{\ast}\mathcal{V}_{t-1}^{\mathrm{a}}+\phi_{t-1}\mathcal{X}%
_{t-1}^{\mathrm{a}}} & \text{on }\{\phi_{t-1}^{\ast}>0\}\\
\mathcal{Q}_{t-1} & \text{on }\{\phi_{t-1}^{\ast}=0\}
\end{array}
\right. \\
&  \subseteq\left\{
\begin{array}
[c]{ll}%
\phi_{t}^{\ast}\operatorname*{conv}\left\{  \mathcal{V}_{t-1}^{\mathrm{a}%
},\mathcal{X}_{t-1}^{\mathrm{a}}\right\}  & \text{on }\{\phi_{t-1}^{\ast}>0\}\\
\mathcal{Q}_{t-1} & \text{on }\{\phi_{t-1}^{\ast}=0\}
\end{array}
\right.  .
\end{align*}
Moreover, sice $z_{t-1}-\phi_{t-1}^{\ast}Y_{t-1}\in\mathcal{Q}_{t-1}$, it
follows that%
\[
z_{t-1}\in\phi_{t-1}^{\ast}Y_{t-1}+\mathcal{Q}_{t-1}=\left\{
\begin{array}
[c]{ll}%
\phi_{t}^{\ast}\mathcal{Y}_{t-1}^{\mathrm{a}} & \text{on }\{\phi_{t-1}^{\ast
}>0\}\\
\mathcal{Q}_{t-1} & \text{on }\{\phi_{t-1}^{\ast}=0\}
\end{array}
\right.  .
\]
As a result,%
\setlongestterm{\phi_{t}^{\ast}\operatorname*{conv}\left\{  \mathcal{V}_{t-1}^{\mathrm{a}%
},\mathcal{X}_{t-1}^{\mathrm{a}}\right\}  \cap\phi_{t}^{\ast}\mathcal{Y}%
_{t-1}^{\mathrm{a}}}
\begin{align*}
z_{t-1}  &  \in\left\{
\begin{array}
[c]{ll}%
\phi_{t}^{\ast}\operatorname*{conv}\left\{  \mathcal{V}_{t-1}^{\mathrm{a}%
},\mathcal{X}_{t-1}^{\mathrm{a}}\right\}  \cap\phi_{t}^{\ast}\mathcal{Y}%
_{t-1}^{\mathrm{a}} & \text{on }\{\phi_{t-1}^{\ast}>0\}\\
\mathcal{Q}_{t-1} & \text{on }\{\phi_{t-1}^{\ast}=0\}
\end{array}
\right. \\
&  =\left\{
\begin{array}
[c]{ll}%
\adjusttolongestterm[-\innotis]{\phi_{t}^{\ast}\mathcal{Z}_{t-1}^{\mathrm{a}}} & \text{on }\{\phi_{t-1}^{\ast}>0\}\\
\mathcal{Q}_{t-1} & \text{on }\{\phi_{t-1}^{\ast}=0\}
\end{array}
\right.  ,
\end{align*}
which completes the proof.\bigskip
\end{proof}

\begin{proof}
[Proof of Proposition~\ref{Prop:7d4hnk0a}]Suppose that $(\phi,z)\in
\Lambda^{\mathrm{a}}(Y,X)$. Then, for each $t=0,\ldots,T-1$%
\[
z_{t}-\phi_{t}X_{t}-z_{t+1}\in\mathcal{Q}_{t},
\]
so there is a liquidation strategy $y_{t+1}^{t},\ldots,y_{T+1}^{t}$ starting
from $z_{t}-\phi_{t}Y_{t}-z_{t+1}$ at time~$t$. We also put $y_{T+1}^{T}:=0$
for notational convenience. Moreover,
\[
z_{t}-\phi_{t}^{\ast}Y_{t}\in\mathcal{Q}_{t}\quad\text{for each }%
t=0,\ldots,T,
\]
so there is a liquidation strategy $x_{t+1}^{t},\ldots,x_{T+1}^{t}$ starting
from $z_{t}-\phi_{t}^{\ast}Y_{t}$ at time~$t$. For each $\psi\in\mathcal{X}$
we put%
\begin{align*}
u_{0}^{\psi}  &  :=z_{0},\\
u_{t}^{\psi}  &  :=\psi_{t}^{\ast}z_{t}+\sum_{s=0}^{t-1}\psi_{s+1}^{\ast}%
y_{t}^{s}+\sum_{s=0}^{t-1}\psi_{s}x_{t}^{s}\quad\text{for }t=1,\ldots,T+1.
\end{align*}
This defines $u:\mathcal{X}\rightarrow\Phi$, which satisfies the
non-anticipation condition~(\ref{eq:ug75hsdj6}). Moreover, for each $\psi
\in\mathcal{X}$ and for each $t=0,\ldots,T$,%
\begin{align*}
&  u_{t}^{\psi}-G_{t}^{\phi,\psi}-u_{t+1}^{\psi}\\
&  =\psi_{t}^{\ast}z_{t}+\sum_{s=0}^{t-1}\psi_{s+1}^{\ast}y_{t}^{s}+\sum
_{s=0}^{t-1}\psi_{s}x_{t}^{s}-\psi_{t}\phi_{t}^{\ast}Y_{t}-\psi_{t+1}^{\ast
}\phi_{t}X_{t}\\
&  \quad-\psi_{t+1}^{\ast}z_{t+1}-\sum_{s=0}^{t}\psi_{s+1}^{\ast}y_{t+1}%
^{s}-\sum_{s=0}^{t}\psi_{s}x_{t+1}^{s}\\
&  =\psi_{t+1}^{\ast}\left(  z_{t}-\phi_{t}X_{t}-z_{t+1}-y_{t+1}^{t}\right)
+\psi_{t}\left(  z_{t}-\phi_{t}^{\ast}Y_{t}-x_{t+1}^{t}\right) \\
&  \quad+\sum_{s=0}^{t-1}\psi_{s+1}^{\ast}\left(  y_{t}^{s}-y_{t+1}%
^{s}\right)  +\sum_{s=0}^{t-1}\psi_{s}\left(  x_{t}^{s}-x_{t+1}^{s}\right) \\
&  \in\psi_{t+1}^{\ast}\mathcal{K}_{t}+\psi_{t}\mathcal{K}_{t}+\sum
_{s=0}^{t-1}\psi_{s+1}^{\ast}\mathcal{K}_{t}+\sum_{s=0}^{t-1}\psi
_{s}\mathcal{K}_{t}\subseteq\mathcal{K}_{t}.
\end{align*}
This means that $(\phi,u)\in\Phi^{\mathrm{a}}(Y,X)$, with $z_{0}=u_{0}$.

Conversely, suppose that $(\phi,u)\in\Phi^{\mathrm{a}}(Y,X)$. Then we put%
\[
z:=u^{\chi^{T}}.
\]
It follows that for each $t=0,\ldots,T-1$%
\[
z_{t}-\phi_{t}X_{t}-z_{t+1}=u_{t}^{\chi^{T}}+G_{t}^{\phi,\chi^{T}}%
-u_{t+1}^{\chi^{T}}\in\mathcal{K}_{t}\subseteq\mathcal{Q}_{t}%
\]
since%
\[
G_{t}^{\phi,\chi^{T}}=\chi_{t}^{T}\phi_{t}^{\ast}Y_{t}+\chi_{t+1}^{T\ast}\phi
_{t}X_{t}=\phi_{t}X_{t}.
\]
Next, take any $t=0,\ldots,T$. Then $\chi_{s}^{T}=\chi_{s}^{t}=0$ for each
$s=0,\ldots,t-1$, and because $u$ satisfies the non-anticipation
condition~(\ref{eq:ug75hsdj6}), we have $z_{t}=u_{t}^{\chi^{T}}=u_{t}%
^{\chi^{t}}$. Since $\chi_{t}^{t}=1$, $\chi_{t+1}^{t\ast}=0$ and%
\[
G_{t}^{\phi,\chi^{t}}=\chi_{t}^{t}\phi_{t}^{\ast}Y_{t}+\chi_{t+1}^{t\ast}%
\phi_{t}X_{t}=\phi_{t}^{\ast}Y_{t},
\]
it means that%
\begin{equation}
z_{t}-\phi_{t}^{\ast}Y_{t}-u_{t+1}^{\chi^{t}}=u_{t}^{\chi^{t}}+G_{t}%
^{\phi,\chi^{t}}-u_{t+1}^{\chi^{t}}\in\mathcal{K}_{t}\subseteq\mathcal{Q}_{t}.
\label{eq:fd654jhuswbhf}%
\end{equation}
Moreover, for each $s=t+1,\ldots,T$ we have $\chi_{s}^{t}=\chi_{s+1}^{t\ast
}=0$ and%
\[
G_{s}^{\phi,\chi^{t}}=\chi_{s}^{t}\phi_{s}^{\ast}Y_{s}+\chi_{s+1}^{t\ast}%
\phi_{t}X_{t}=0,
\]
hence%
\[
u_{s}^{\chi^{t}}-u_{s+1}^{\chi^{t}}=u_{s}^{\chi^{t}}+G_{s}^{\phi,\chi^{t}%
}-u_{s+1}^{\chi^{t}}\in\mathcal{K}_{s}\subseteq\mathcal{Q}_{s}.
\]
We can verify by backward induction that $u_{s+1}^{\chi^{t}}\in\mathcal{Q}%
_{s}$ for each $s=t,\ldots,T$. Clearly, $u_{T+1}^{\chi^{t}}=0\in
\mathcal{Q}_{T}$. Now suppose that $u_{s+1}^{\chi^{t}}\in\mathcal{Q}_{s}$ for
some $s=t+1,\ldots,T$. It follows that $u_{s}^{\chi^{t}}=(u_{s}^{\chi^{t}%
}-u_{s+1}^{\chi^{t}})+u_{s+1}^{\chi^{t}}\in\mathcal{Q}_{s}+\mathcal{Q}%
_{s}=\mathcal{Q}_{s}$. By predictability, $u_{s}^{\chi^{t}}\in\mathcal{L}%
_{s-1}$, so we can conclude that $u_{s}^{\chi^{t}}\in\mathcal{Q}_{s}%
\cap\mathcal{L}_{s-1}\subseteq\mathcal{Q}_{s-1}$, which completes the backward
induction argument. In particular, we have shown that $u_{t+1}^{\chi^{t}}%
\in\mathcal{Q}_{t}$. Together with~(\ref{eq:fd654jhuswbhf}), this shows that%
\[
z_{t}-\phi_{t}^{\ast}Y_{t}\in\mathcal{Q}_{t}%
\]
for each $t=0,\ldots,T$. As a result, $(\phi,z)\in\Lambda^{\mathrm{a}}(Y,X)$
with $z_{0}=u_{0}$, which completes the proof.\bigskip
\end{proof}

\begin{proof}
[Proof of Lemma~\ref{Lem:8f5sgdjf}]Take any $u\in\Phi$ such that $(\phi
,u)\in\Phi^{\mathrm{a}}(Y,X)$. Observe that%
\begin{align*}
Q_{\phi,t}  &  =Q_{\phi,\chi^{t}}=\sum_{s=0}^{T}\sum_{u=0}^{T}\phi_{s}\chi
_{u}^{t}Q_{s,u}=\sum_{s=0}^{T}\phi_{s}\mathbf{1}_{\left\{  s\geq t\right\}
}Y_{t}+\sum_{s=0}^{T}\phi_{s}\mathbf{1}_{\left\{  s<t\right\}  }X_{s}\\
&  =\phi_{t}^{\ast}Y_{t}+\sum_{s=0}^{t-1}\phi_{s}X_{s}%
\end{align*}
and define $z:\mathcal{X}\rightarrow\Phi$ such that%
\begin{equation}
z_{t}^{\psi}:=u_{t}^{\psi}+\psi_{t}^{\ast}\sum_{s=0}^{t-1}\phi_{s}X_{s}
\label{eq:gf8dhsge}%
\end{equation}
for any $\psi\in\mathcal{X}$ and any $t=0,\ldots,T+1$.\ Then~$z$ satisfies the
non-anticipation condition~(\ref{eq:dg453hd7ah}), $z_{0}=u_{0}$, and it also
satisfies the rebalancing condition~(\ref{eq:jgt775g10}) for an American
option with payoff process~$Q_{\phi,\,\cdot\,}$ since for any $\psi
\in\mathcal{X}$ and any $t=0,\ldots,T$%
\begin{align*}
&  z_{t}^{\psi}-\psi_{t}Q_{\phi,t}-z_{t+1}^{\psi}\\
&  =\left(  u_{t}^{\psi}+\psi_{t}^{\ast}\sum_{s=0}^{t-1}\phi_{s}X_{s}\right)
-\psi_{t}\left(  \phi_{t}^{\ast}Y_{t}+\sum_{s=0}^{t-1}\phi_{s}X_{s}\right)
-\left(  u_{t+1}^{\psi}+\psi_{t+1}^{\ast}\sum_{s=0}^{t}\phi_{s}X_{s}\right) \\
&  =u_{t}^{\psi}-\psi_{t}\phi_{t}^{\ast}Y_{t}-\psi_{t+1}^{\ast}\phi_{t}%
X_{t}-u_{t+1}^{\psi}\\
&  =u_{t}^{\psi}-G_{t}^{\phi,\psi}-u_{t+1}^{\psi}\in\mathcal{K}_{t}.
\end{align*}
Conversely, take any $z\in\Psi^{\mathrm{a}}(Q_{\phi,\,\cdot\,})$ and define
$u:\mathcal{X}\rightarrow\Phi$ such that%
\[
u_{t}^{\psi}:=z_{t}^{\psi}-\psi_{t}^{\ast}\sum_{s=0}^{t-1}\phi_{s}X_{s}%
\]
for any $\psi\in\mathcal{X}$ and any $t=0,\ldots,T+1$. Then $u$ satisfies the
non-anticipation condition~(\ref{eq:ug75hsdj6}), $u_{0}=z_{0}$, and%
\begin{align*}
&  u_{t}^{\psi}-G_{t}^{\phi,\psi}-u_{t+1}^{\psi}\\
&  =\left(  z_{t}^{\psi}-\psi_{t}^{\ast}\sum_{s=0}^{t-1}\phi_{s}X_{s}\right)
-\left(  \psi_{t}\phi_{t}^{\ast}Y_{t}+\psi_{t+1}^{\ast}\phi_{t}X_{t}\right)
-\left(  z_{t+1}^{\psi}-\psi_{t+1}^{\ast}\sum_{s=0}^{t}\phi_{s}X_{s}\right) \\
&  =z_{t}^{\psi}-\psi_{t}\left(  \phi_{t}^{\ast}Y_{t}+\sum_{s=0}^{t-1}\phi
_{s}X_{s}\right)  -z_{t+1}^{\psi}\\
&  =z_{t}^{\psi}-\psi_{t}Q_{\phi,t}-z_{t+1}^{\psi}\in\mathcal{K}_{t}%
\end{align*}
for any $\psi\in\mathcal{X}$ and any $t=0,\ldots,T$, so the rebalancing
condition~(\ref{eq:f7enasg0}) holds. The lemma follows because
(\ref{eq:gf8dhsge}) defines a one-to-one map between strategies $z\in
\Psi^{\mathrm{a}}(Q_{\phi,\,\cdot\,})$ and strategies $u$ such that
$(\phi,u)\in\Phi^{\mathrm{a}}(Y,X)$ with $u_{0}=z_{0}$.\bigskip
\end{proof}

\begin{proof}
[Proof of Theorem~\ref{Thm:eudn9y65}]According to
Definition~\ref{Def:nnf65uwsu48},%
\begin{equation}
\pi_{j}^{\mathrm{a}}(Y,X)=\inf\left\{  x\in\mathbb{R}\,|\,\exists(\phi
,u)\in\Phi^{\mathrm{a}}(Y,X):xe^{j}=u_{0}\right\}  . \label{eq:nnf755efdvd64}%
\end{equation}
By Theorem~\ref{Thm:fjusab76}, $\pi_{j}^{\mathrm{a}}(Y,X)e^{j}\in
\mathcal{Z}_{0}^{\mathrm{a}}$. Hence, by Proposition~\ref{Prop:7djhb78a},
there is a $(\phi,z)\in\Lambda^{\mathrm{a}}(Y,X)$ such that $\pi
_{j}^{\mathrm{a}}(Y,X)e^{j}=z_{0}$. It follows by
Proposition~\ref{Prop:7d4hnk0a} that there is a $(\phi,u)\in\Phi^{\mathrm{a}%
}(Y,X)$ such that $\pi_{j}^{\mathrm{a}}(Y,X)e^{j}=u_{0}$, so the infimum
in~(\ref{eq:nnf755efdvd64}) is in fact a minimum,%
\[
\pi_{j}^{\mathrm{a}}(Y,X)=\min\left\{  x\in\mathbb{R}\,|\,\exists(\phi
,u)\in\Phi^{\mathrm{a}}(Y,X):xe^{j}=u_{0}\right\}  .
\]
As a result, according to Lemma~\ref{Lem:8f5sgdjf} and
Definition~\ref{Def:hfy468ahd564vk},%
\begin{align*}
\pi_{j}^{\mathrm{a}}(Y,X)  &  =\min\left\{  x\in\mathbb{R}\,|\,\exists\phi
\in\chi\exists z\in\Psi^{\mathrm{a}}(Q_{\phi,\,\cdot\,}):xe^{j}=z_{0}\right\}
\\
&  =\min_{\phi\in\chi}\inf\left\{  x\in\mathbb{R}\,|\exists z\in
\Psi^{\mathrm{a}}(Q_{\phi,\,\cdot\,}):xe^{j}=z_{0}\right\} \\
&  =\min_{\phi\in\chi}p_{j}^{\mathrm{a}}(Q_{\phi,\,\cdot\,}),
\end{align*}
where $p_{j}^{\mathrm{a}}(Q_{\phi,\,\cdot\,})$ is the seller's price of an
American option with gradual exercise and payoff process~$Q_{\phi,\,\cdot\,}$,
which can be expressed as%
\[
p_{j}^{\mathrm{a}}(Q_{\phi,\,\cdot\,})=\max_{\psi\in\mathcal{X}}%
\max_{(\mathbb{Q},S)\in\mathcal{\bar{P}}_{j}^{\mathrm{d}}(\psi)}%
\mathbb{E}_{\mathbb{Q}}((Q_{\phi,\,\cdot\,}\cdot S)_{\psi})
\]
by Theorem~\ref{Thm:jjf658wgffvdsdf85}. It follows that%
\[
\pi_{j}^{\mathrm{a}}(Y,X)=\min_{\phi\in\mathcal{X}}\max_{\psi\in\mathcal{X}%
}\max_{(\mathbb{Q},S)\in\mathcal{\bar{P}}_{j}^{\mathrm{d}}(\psi)}%
\mathbb{E}_{\mathbb{Q}}((Q_{\phi,\,\cdot\,}\cdot S)_{\psi}),
\]
completing the proof.\bigskip
\end{proof}

\begin{proof}
[Proof of Theorem~\ref{Thm:fjf75s41s}]Using
Definition~\ref{Def:nfn457w6dbg45d} and Propositions~\ref{Prop:9h75hhsl}
and~\ref{Prop:jg768enk}, we obtain%
\begin{align*}
\pi_{j}^{\mathrm{b}}(Y,X)  &  =\sup\left\{  -x\in\mathbb{R}\,|\,\exists
(\psi,u)\in\Phi^{\mathrm{b}}(Y,X):xe^{j}=u_{0}\right\} \\
&  =\sup\left\{  -x\in\mathbb{R}\,|\,\exists(\psi,u)\in\Lambda^{\mathrm{b}%
}(Y,X):xe^{j}=u_{0}\right\} \\
&  =\sup\left\{  -x\in\mathbb{R}\,|\,xe^{j}\in\mathcal{Z}_{0}^{\mathrm{b}%
}\right\}  .
\end{align*}
Being a polyhedral set, $\mathcal{Z}_{0}^{\mathrm{b}}$ is closed, hence
$\left\{  -x\in\mathbb{R}\,|\,xe^{j}\in\mathcal{Z}_{0}^{\mathrm{b}}\right\}  $
is closed. Moreover, $\left\{  -x\in\mathbb{R}\,|\,xe^{j}\in\mathcal{Z}%
_{0}^{\mathrm{b}}\right\}  $ is non-empty and bounded above because $xe^{j}%
\in\mathcal{Z}_{0}^{\mathrm{b}}$ for any $x\in\mathbb{R}$ large enough and
$xe^{j}\notin\mathcal{Z}_{0}^{\mathrm{b}}$ for any $x\in\mathbb{R}$ small
enough, so the supremum is in fact a maximum.\bigskip
\end{proof}

\begin{proof}
[Proof of Proposition~\ref{Prop:9h75hhsl}]Let $a\in\mathcal{Z}_{0}%
^{\mathrm{b}}$. We construct a mixed stopping time $\psi\in\mathcal{X}$ and a
strategy $z\in\Phi$ by induction. First we put $\psi_{0}^{\ast}:=1$ and
$z_{0}:=a$. Clearly, $z_{0}\in\psi_{0}^{\ast}\mathcal{Z}_{0}^{\mathrm{b}}$.
Next, suppose that $z_{t}\in\psi_{t}^{\ast}\mathcal{Z}_{t}^{\mathrm{b}}$ for
some $t=0,\ldots,T-1$. Then $z_{t}\in\psi_{t}^{\ast}\operatorname*{conv}%
\{\mathcal{V}_{t}^{\mathrm{b}}\cap\mathcal{X}_{t}^{\mathrm{b}},\mathcal{Y}%
_{t}^{\mathrm{b}}\}$, so there exist $\lambda_{t}\in\lbrack0,1]$, $v_{t}%
\in\mathcal{V}_{t}^{\mathrm{b}}\cap\mathcal{X}_{t}^{\mathrm{b}}$ and $y_{t}%
\in\mathcal{Y}_{t}^{\mathrm{b}}$\ such that $z_{t}=\psi_{t}^{\ast}\left(
(1-\lambda_{t})v_{t}+\lambda_{t}y_{t}\right)  $. We put $\psi_{t}:=\psi
_{t}^{\ast}\lambda_{t}$, and then $\psi_{t+1}^{\ast}:=\psi_{t}^{\ast}-\psi
_{t}=\psi_{t}^{\ast}(1-\lambda_{t})$, so $z_{t}=\psi_{t+1}^{\ast}v_{t}%
+\psi_{t}y_{t}$. Because $v_{t}\in\mathcal{X}_{t}^{\mathrm{b}}$ and $y_{t}%
\in\mathcal{Y}_{t}^{\mathrm{b}}$, we have $v_{t}+X_{t}\in\mathcal{Q}_{t}$ and
$y_{t}+Y_{t}\in\mathcal{Q}_{t}$. It follows that%
\begin{align*}
z_{t}+\psi_{t}Y_{t}+\psi_{t+1}^{\ast}X_{t}  &  =\psi_{t+1}^{\ast}v_{t}%
+\psi_{t}y_{t}+\psi_{t}Y_{t}+\psi_{t+1}^{\ast}X_{t}\\
&  =\psi_{t+1}^{\ast}(v_{t}+X_{t})+\psi_{t}(y_{t}+Y_{t})\\
&  \in\psi_{t+1}^{\ast}\mathcal{Q}_{t}+\psi_{t}\mathcal{Q}_{t}\subseteq
\mathcal{Q}_{t}.
\end{align*}
Since $v_{t}\in\mathcal{V}_{t}^{\mathrm{b}}$, there is a $z_{t+1}\in\psi
_{t+1}^{\ast}\mathcal{W}_{t}^{\mathrm{b}}$ such that $\psi_{t+1}^{\ast}%
v_{t}-z_{t+1}\in\mathcal{Q}_{t}$. It follows that%
\begin{align*}
z_{t}+\psi_{t}Y_{t}-z_{t+1}  &  =\psi_{t+1}^{\ast}v_{t}+\psi_{t}y_{t}+\psi
_{t}Y_{t}-z_{t+1}\\
&  =(\psi_{t+1}^{\ast}v_{t}-z_{t+1})+\psi_{t}(y_{t}+Y_{t})\in\mathcal{Q}_{t}.
\end{align*}
Since $\mathcal{W}_{t}^{\mathrm{b}}\subseteq\mathcal{Z}_{t+1}^{\mathrm{b}}$,
it also follows that $z_{t+1}\in\psi_{t+1}^{\ast}\mathcal{Z}_{t+1}%
^{\mathrm{b}}$. Finally, given that $z_{T}\in\psi_{T}^{\ast}\mathcal{Z}%
_{T}^{\mathrm{b}}=\psi_{T}^{\ast}\mathcal{Y}_{T}^{\mathrm{b}}$, we get
$z_{T}+\psi_{T}Y_{T}=z_{T}+\psi_{T}^{\ast}Y_{T}\in\mathcal{Q}_{T}$. Putting
and $\psi_{T}:=\psi_{T}^{\ast}$, $\psi_{T+1}^{\ast}:=0$ and $z_{T+1}:=0$, we
obtain%
\[
z_{T}+\psi_{T}Y_{T}+\psi_{T+1}^{\ast}X_{T}\in\mathcal{Q}_{T}.
\]
We have constructed $(\psi,z)\in\Lambda^{\mathrm{b}}(Y,X)$ such that $a=z_{0}$.

Conversely, we take any $(\psi,z)\in\Lambda^{\mathrm{b}}(Y,X)$, and want to
show that $z_{0}\in\mathcal{Z}_{0}^{\mathrm{b}}$. This a consequence of the
following fact, which will be proved by backward induction: for each
$t=0,\ldots,T$
\begin{equation}
z_{t}\in\left\{
\begin{array}
[c]{ll}%
\psi_{t}^{\ast}\mathcal{Z}_{t}^{\mathrm{b}} & \text{on }\{\psi_{t}^{\ast}>0\}\\
\mathcal{Q}_{t} & \text{on }\{\psi_{t}^{\ast}=0\}
\end{array}
\right.  . \label{eq:fh547sh4}%
\end{equation}
We start the proof with $t=T$. Since $z_{T}+\psi_{T}^{\ast}Y_{T}=z_{T}%
+\psi_{T}Y_{T}+\psi_{T+1}^{\ast}X_{T}\in\mathcal{Q}_{T}$, it follows that
indeed
\[
z_{T}\in-\psi_{T}^{\ast}Y_{T}+\mathcal{Q}_{T}=\left\{
\begin{array}
[c]{ll}%
\psi_{T}^{\ast}\mathcal{Y}_{T}^{\mathrm{b}} & \text{on }\{\psi_{T}^{\ast}>0\}\\
\mathcal{Q}_{T} & \text{on }\{\psi_{T}^{\ast}=0\}
\end{array}
\right.  =\left\{
\begin{array}
[c]{ll}%
\psi_{T}^{\ast}\mathcal{Z}_{T}^{\mathrm{b}} & \text{on }\{\psi_{T}^{\ast}>0\}\\
\mathcal{Q}_{T} & \text{on }\{\psi_{T}^{\ast}=0\}
\end{array}
\right.  .
\]
Next, suppose that (\ref{eq:fh547sh4}) holds for some $t=1,\ldots,T$. Since
$z$ is predictable, it follows that $z_{t}\in\mathcal{L}_{t-1}$, so%
\[
z_{t}\in\left\{
\begin{array}
[c]{ll}%
\psi_{t}^{\ast}(\mathcal{Z}_{t}^{\mathrm{b}}\cap\mathcal{L}_{t-1}) & 
\text{on }\{\psi_{t}^{\ast}>0\}\\
\mathcal{Q}_{t}\cap\mathcal{L}_{t-1} & \text{on }\{\psi_{t}^{\ast}=0\}
\end{array}
\right.  \subseteq\left\{
\begin{array}
[c]{ll}%
\psi_{t}^{\ast}\mathcal{W}_{t-1}^{\mathrm{b}} & \text{on }\{\psi_{t}^{\ast}>0\}\\
\mathcal{Q}_{t-1} & \text{on }\{\psi_{t}^{\ast}=0\}
\end{array}
\right.  .
\]
Hence, using%
\[
z_{t-1}+\psi_{t-1}Y_{t-1}-z_{t}\in\mathcal{Q}_{t-1},
\]
we obtain%
\setlongestterm{\psi_{t}^{\ast}\mathcal{W}_{t-1}^{\mathrm{b}}+\mathcal{Q}_{t-1}}
\begin{align*}
z_{t-1}+\psi_{t-1}Y_{t-1}  &  \in z_{t}+\mathcal{Q}_{t-1}\\
&  \subseteq\left\{
\begin{array}
[c]{ll}%
\psi_{t}^{\ast}\mathcal{W}_{t-1}^{\mathrm{b}}+\mathcal{Q}_{t-1} & 
\text{on }\{\psi_{t}^{\ast}>0\}\\
\mathcal{Q}_{t-1}+\mathcal{Q}_{t-1} & \text{on }\{\psi_{t}^{\ast}=0\}
\end{array}
\right. \\
&  =\left\{
\begin{array}
[c]{ll}%
\adjusttolongestterm{\psi_{t}^{\ast}\mathcal{V}_{t-1}^{\mathrm{b}}} & \text{on }\{\psi_{t}^{\ast}>0\}\\
\mathcal{Q}_{t-1} & \text{on }\{\psi_{t}^{\ast}=0\}
\end{array}
\right.  .
\end{align*}
Moreover, since%
\[
z_{t-1}+\psi_{t-1}Y_{t-1}+\psi_{t}^{\ast}X_{t-1}\in\mathcal{Q}_{t-1},
\]
we obtain%
\[
z_{t-1}+\psi_{t-1}Y_{t-1}\in\mathcal{Q}_{t-1}-\psi_{t}^{\ast}X_{t-1}=\left\{
\begin{array}
[c]{ll}%
\psi_{t}^{\ast}\mathcal{X}_{t-1}^{\mathrm{b}} & \text{on }\{\psi_{t}^{\ast}>0\}\\
\mathcal{Q}_{t-1} & \text{on }\{\psi_{t}^{\ast}=0\}
\end{array}
\right.  .
\]
It follows that%
\[
z_{t-1}+\psi_{t-1}Y_{t-1}\in\left\{
\begin{array}
[c]{ll}%
\psi_{t}^{\ast}(\mathcal{V}_{t-1}^{\mathrm{b}}\cap\mathcal{X}_{t-1}%
^{\mathrm{b}}) & \text{on }\{\psi_{t}^{\ast}>0\}\\
\mathcal{Q}_{t-1} & \text{on }\{\psi_{t}^{\ast}=0\}
\end{array}
\right.  ,
\]
and so%
\setlongestterm{\psi_{t-1}^{\ast}\operatorname*{conv}\left\{  \mathcal{V}_{t-1}^{\mathrm{b}%
}\cap\mathcal{X}_{t-1}^{\mathrm{b}},\mathcal{Y}_{t-1}^{\mathrm{b}}\right\}}
\begin{align*}
z_{t-1}  &  \in\left\{
\begin{array}
[c]{ll}%
\adjusttolongestterm[+\innotis]{\psi_{t}^{\ast}(\mathcal{V}_{t-1}^{\mathrm{b}}\cap\mathcal{X}_{t-1}%
^{\mathrm{b}})+\psi_{t-1}Y_{t-1}} & \text{on }\{\psi_{t}^{\ast}>0\}\\
\mathcal{Q}_{t-1}+\psi_{t-1}Y_{t-1} & \text{on }\{\psi_{t}^{\ast}=0\}
\end{array}
\right. \\
&  \subseteq\left\{
\begin{array}
[c]{ll}%
\adjusttolongestterm{\psi_{t}^{\ast}(\mathcal{V}_{t-1}^{\mathrm{b}}\cap\mathcal{X}_{t-1}%
^{\mathrm{b}})+\psi_{t-1}\mathcal{Y}_{t-1}^{\mathrm{b}}} & \text{on }\{\psi
_{t}^{\ast}>0\}\\
\mathcal{Q}_{t-1}+\psi_{t-1}\mathcal{Y}_{t-1}^{\mathrm{b}} & \text{on }\{%
\psi_{t}^{\ast}=0\}
\end{array}
\right. \\
&  =\left\{
\begin{array}
[c]{ll}%
\adjusttolongestterm[-\innotis]{\psi_{t}^{\ast}(\mathcal{V}_{t-1}^{\mathrm{b}}\cap\mathcal{X}_{t-1}%
^{\mathrm{b}})+\psi_{t-1}\mathcal{Y}_{t-1}^{\mathrm{b}}} & \text{on }\{\psi
_{t}^{\ast}>0\}\\
\psi_{t-1}\mathcal{Y}_{t-1}^{\mathrm{b}} & \text{on }\{\psi_{t}^{\ast}=0,
\psi_{t-1}^{\ast}>0\}\\
\mathcal{Q}_{t-1} & \text{on }\{\psi_{t-1}^{\ast}=0\}
\end{array}
\right. \\
&  =\left\{
\begin{array}
[c]{ll}%
\adjusttolongestterm{\psi_{t}^{\ast}(\mathcal{V}_{t-1}^{\mathrm{b}}\cap\mathcal{X}_{t-1}%
^{\mathrm{b}})+\psi_{t-1}\mathcal{Y}_{t-1}^{\mathrm{b}}} & \text{on }\{\psi
_{t-1}^{\ast}>0\}\\
\mathcal{Q}_{t-1} & \text{on }\{\psi_{t-1}^{\ast}=0\}
\end{array}
\right. \\
&  \subseteq\left\{
\begin{array}
[c]{ll}%
\psi_{t-1}^{\ast}\operatorname*{conv}\left\{  \mathcal{V}_{t-1}^{\mathrm{b}%
}\cap\mathcal{X}_{t-1}^{\mathrm{b}},\mathcal{Y}_{t-1}^{\mathrm{b}}\right\}  &
\text{on }\{\psi_{t-1}^{\ast}>0\}\\
\mathcal{Q}_{t-1} & \text{on }\{\psi_{t-1}^{\ast}=0\}
\end{array}
\right. \\
&  =\left\{
\begin{array}
[c]{ll}%
\adjusttolongestterm{\psi_{t-1}^{\ast}\mathcal{Z}_{t-1}^{\mathrm{b}}} & \text{on }\{\psi_{t-1}^{\ast}>0\}\\
\mathcal{Q}_{t-1} & \text{on }\{\psi_{t-1}^{\ast}=0\}
\end{array}
,\right.
\end{align*}
which completes the proof.\bigskip
\end{proof}

\begin{proof}
[Proof of Proposition~\ref{Prop:jg768enk}]Suppose that $(\psi,z)\in
\Lambda^{\mathrm{b}}(Y,X)$. Then, for each $t=0,\ldots,T-1$,%
\[
z_{t}+\psi_{t}Y_{t}-z_{t+1}\in\mathcal{Q}_{t},
\]
so there is a liquidation strategy $y_{t+1}^{t},\ldots,y_{T+1}^{t}$ starting
from $z_{t}+\psi_{t}Y_{t}-z_{t+1}$ at time~$t$. We also put $y_{T+1}^{T}:=0$
for notational convenience. Moreover,
\[
z_{t}+\psi_{t}Y_{t}+\psi_{t+1}^{\ast}X_{t}\in\mathcal{Q}_{t}\quad\text{for
each }t=0,\ldots,T,
\]
so there is a liquidation strategy $x_{t+1}^{t},\ldots,x_{T+1}^{t}$ starting
from $z_{t}+\psi_{t}Y_{t}+\psi_{t+1}^{\ast}X_{t}$ at time~$t$. For each
$\phi\in\mathcal{X}$ we put%
\begin{align*}
u_{0}^{\phi}  &  :=z_{0},\\
u_{t}^{\phi}  &  :=\phi_{t}^{\ast}z_{t}+\sum_{s=0}^{t-1}\phi_{s+1}^{\ast}%
y_{t}^{s}+\sum_{s=0}^{t-1}\phi_{s}x_{t}^{s}\quad\text{for }t=1,\ldots,T+1.
\end{align*}
This defines $u:\mathcal{X}\rightarrow\Phi$, which satisfies the
non-anticipation condition~(\ref{eq:dhg3sbdj}). Moreover, for each $\psi
\in\mathcal{X}$ and for each $t=0,\ldots,T$%
\begin{align*}
&  u_{t}^{\phi}+G_{t}^{\phi,\psi}-u_{t+1}^{\phi}\\
&  =\phi_{t}^{\ast}z_{t}+\sum_{s=0}^{t-1}\phi_{s+1}^{\ast}y_{t}^{s}+\sum
_{s=0}^{t-1}\phi_{s}x_{t}^{s}+\psi_{t}\phi_{t}^{\ast}Y_{t}+\psi_{t+1}^{\ast
}\phi_{t}X_{t}\\
&  \quad-\phi_{t+1}^{\ast}z_{t+1}-\sum_{s=0}^{t}\phi_{s+1}^{\ast}y_{t+1}%
^{s}-\sum_{s=0}^{t}\phi_{s}x_{t+1}^{s}\\
&  =\phi_{t+1}^{\ast}\left(  z_{t}+\psi_{t}Y_{t}-z_{t+1}-y_{t+1}^{t}\right)
+\phi_{t}\left(  z_{t}+\psi_{t}Y_{t}+\psi_{t+1}^{\ast}X_{t}-x_{t+1}^{t}\right)
\\
&  \quad+\sum_{s=0}^{t-1}\phi_{s+1}^{\ast}\left(  y_{t}^{s}-y_{t+1}%
^{s}\right)  +\sum_{s=0}^{t-1}\phi_{s}\left(  x_{t}^{s}-x_{t+1}^{s}\right) \\
&  \in\phi_{t+1}^{\ast}\mathcal{K}_{t}+\phi_{t}\mathcal{K}_{t}+\sum
_{s=0}^{t-1}\phi_{s+1}^{\ast}\mathcal{K}_{t}+\sum_{s=0}^{t-1}\phi
_{s}\mathcal{K}_{t}\subseteq\mathcal{K}_{t}.
\end{align*}
This means that $(\psi,u)\in\Phi^{\mathrm{b}}(Y,X)$, with $z_{0}=u_{0}$.

Conversely, suppose that $(\psi,u)\in\Phi^{\mathrm{b}}(Y,X)$. Then we put%
\[
z:=u^{\chi^{T}}.
\]
It follows that for each $t=0,\ldots,T-1$%
\[
z_{t}+\psi_{t}Y_{t}-z_{t+1}=u_{t}^{\chi^{T}}-G_{t}^{\chi^{T},\psi}%
-u_{t+1}^{\chi^{T}}\in\mathcal{K}_{t}\subseteq\mathcal{Q}_{t}%
\]
since%
\[
G_{t}^{\chi^{T},\psi}=\psi_{t}\chi_{t}^{T\ast}Y_{t}+\psi_{t+1}^{\ast}\chi
_{t}^{T}X_{t}=\psi_{t}Y_{t}.
\]
Next, take any $t=0,\ldots,T$. Then $\chi_{s}^{T}=\chi_{s}^{t}=0$ for each
$s=0,\ldots,t-1$, and because $u$ satisfies the non-anticipation
condition~(\ref{eq:dhg3sbdj}), we have $z_{t}=u_{t}^{\chi^{T}}=u_{t}^{\chi
^{t}}$. Since $\chi_{t}^{t}=\chi_{t}^{t\ast}=1$ and%
\[
G_{t}^{\chi^{t},\psi}=\psi_{t}\chi_{t}^{t\ast}Y_{t}+\psi_{t+1}^{\ast}\chi
_{t}^{t}X_{t}=\psi_{t}Y_{t}+\psi_{t+1}^{\ast}X_{t},
\]
it means that%
\begin{equation}
z_{t}+\psi_{t}Y_{t}+\psi_{t+1}^{\ast}X_{t}-u_{t+1}^{\chi^{t}}=u_{t}^{\chi^{t}%
}+G_{t}^{\chi^{t},\psi}-u_{t+1}^{\chi^{t}}\in\mathcal{K}_{t}\subseteq
\mathcal{Q}_{t}. \label{eq:hdta5347}%
\end{equation}
Moreover, for each $s=t+1,\ldots,T$ we have $\chi_{s}^{t}=\chi_{s}^{t\ast}=0$
and%
\[
G_{s}^{\chi^{t},s}=\psi_{s}\chi_{s}^{t\ast}Y_{s}+\psi_{s+1}^{\ast}\chi_{s}%
^{t}X_{s}=0,
\]
hence%
\[
u_{s}^{\chi^{t}}-u_{s+1}^{\chi^{t}}=u_{s}^{\chi^{t}}+G_{s}^{\chi^{t},\psi
}-u_{s+1}^{\chi^{t}}\in\mathcal{K}_{s}\subseteq\mathcal{Q}_{s}.
\]
We can verify by backward induction that $u_{s+1}^{\chi^{t}}\in\mathcal{Q}%
_{s}$ for each $s=t,\ldots,T$. Clearly, $u_{T+1}^{\chi^{t}}=0\in
\mathcal{Q}_{T}$. Now suppose that $u_{s+1}^{\chi^{t}}\in\mathcal{Q}_{s}$ for
some $s=t+1,\ldots,T$. It follows that $u_{s}^{\chi^{t}}=(u_{s}^{\chi^{t}%
}-u_{s+1}^{\chi^{t}})+u_{s+1}^{\chi^{t}}\in\mathcal{Q}_{s}+\mathcal{Q}%
_{s}=\mathcal{Q}_{s}$. By predictability, $u_{s}^{\chi^{t}}\in\mathcal{L}%
_{s-1}$, so we can conclude that $u_{s}^{\chi^{t}}\in\mathcal{Q}_{s}%
\cap\mathcal{L}_{s-1}\subseteq\mathcal{Q}_{s-1}$, which completes the backward
induction argument. In particular, we have shown that $u_{t+1}^{\chi^{t}}%
\in\mathcal{Q}_{t}$. Together with~(\ref{eq:hdta5347}), this shows that%
\[
z_{t}+\psi_{t}Y_{t}+\psi_{t+1}^{\ast}X_{t}\in\mathcal{Q}_{t}%
\]
for each $t=0,\ldots,T$. As a result, $(\psi,z)\in\Lambda^{\mathrm{b}}(Y,X)$
with $z_{0}=u_{0}$, which completes the proof.\bigskip
\end{proof}

\begin{proof}
[Proof of Lemma~\ref{Lem:k856dg3a}]Observe that for any $\psi\in\mathcal{X}$%
\begin{align*}
Q_{t,\psi}  &  =Q_{\chi^{t},\psi}=\sum_{u=0}^{T}\sum_{s=0}^{T}\chi_{u}^{t}%
\psi_{s}Q_{u,s}=\sum_{s=0}^{T}\psi_{s}\left(  \mathbf{1}_{\left\{  t\geq
s\right\}  }Y_{s}+\mathbf{1}_{\left\{  t<s\right\}  }X_{t}\right) \\
&  =\sum_{s=0}^{t}\psi_{s}Y_{s}+\psi_{t+1}^{\ast}X_{t}.
\end{align*}
Now take any $u\in\Phi$ such that $(\psi,u)\in\Phi^{\mathrm{b}}(Y,X)$ and
define $z:\mathcal{X}\rightarrow\Phi$ such that%
\begin{equation}
z_{t}^{\phi}:=u_{t}^{\phi}-\phi_{t}^{\ast}\sum_{s=0}^{t-1}\psi_{s}Y_{s}
\label{eq:ff645lallpr}%
\end{equation}
for any $\phi\in\mathcal{X}$ and any $t=0,\ldots,T+1$.\ Then~$z$ satisfies the
non-anticipation condition~(\ref{eq:dg453hd7ah}), $z_{0}=u_{0}$, and it also
satisfies the rebalancing condition~(\ref{eq:jgt775g10}) for an American
option with payoff process~$Z=Q_{\,\cdot\,,\psi}$ since for any $t=0,\ldots,T$%
\begin{align*}
&  z_{t}^{\phi}+\phi_{t}Q_{t,\psi}-z_{t+1}^{\phi}\\
&  =\left(  u_{t}^{\phi}-\phi_{t}^{\ast}\sum_{s=0}^{t-1}\psi_{s}Y_{s}\right)
+\phi_{t}\left(  \sum_{s=0}^{t}\psi_{s}Y_{s}+\psi_{t+1}^{\ast}X_{t}\right)
-\left(  u_{t+1}^{\phi}-\phi_{t+1}^{\ast}\sum_{s=0}^{t}\psi_{s}Y_{s}\right) \\
&  =u_{t}^{\phi}+\psi_{t}\phi_{t}^{\ast}Y_{t}+\psi_{t+1}^{\ast}\phi_{t}%
X_{t}-u_{t+1}^{\phi}\\
&  =u_{t}^{\phi}+G_{t}^{\phi,\psi}-u_{t+1}^{\phi}\in\mathcal{K}_{t}.
\end{align*}
Conversely, take any $z\in\Psi^{\mathrm{a}}(-Q_{\,\cdot\,,\psi})$ and define
$u:\mathcal{X}\rightarrow\Phi$ such that%
\[
u_{t}^{\phi}:=z_{t}^{\phi}+\phi_{t}^{\ast}\sum_{s=0}^{t-1}\psi_{s}Y_{s}%
\]
for any $\phi\in\mathcal{X}$ and any $t=0,\ldots,T+1$. Then $u$ satisfies the
non-anticipation condition~(\ref{eq:dhg3sbdj}), $u_{0}=z_{0}$, and
\begin{align*}
&  u_{t}^{\phi}+G_{t}^{\phi,\psi}-u_{t+1}^{\phi}\\
&  =\left(  z_{t}^{\phi}+\phi_{t}^{\ast}\sum_{s=0}^{t-1}\psi_{s}Y_{s}\right)
+\left(  \psi_{t}\phi_{t}^{\ast}Y_{t}+\psi_{t+1}^{\ast}\phi_{t}X_{t}\right)
-\left(  z_{t+1}^{\phi}+\phi_{t+1}^{\ast}\sum_{s=0}^{t}\psi_{s}Y_{s}\right) \\
&  =z_{t}^{\phi}+\phi_{t}\left(  \sum_{s=0}^{t}\psi_{s}Y_{s}+\psi_{t+1}^{\ast
}X_{t}\right)  -z_{t+1}^{\phi}\\
&  =z_{t}^{\phi}+\phi_{t}Q_{t,\psi}-z_{t+1}^{\phi}\in\mathcal{K}_{t}%
\end{align*}
for any $\phi\in\mathcal{X}$ and any $t=0,\ldots,T$, that is, the rebalancing
condition~(\ref{eq:hf65hsdn9}) holds. The lemma follows because
(\ref{eq:ff645lallpr}) defines a one-to-one map between strategies $z\in
\Psi^{\mathrm{a}}(-Q_{\,\cdot\,,\psi})$ and strategies $u$ such that
$(\psi,u)\in\Phi^{\mathrm{b}}(Y,X)$ with $u_{0}=z_{0}$.\bigskip
\end{proof}

\begin{proof}
[Proof of Theorem~\ref{Thm:dhhd5436syy}]By Definition~\ref{Def:nfn457w6dbg45d}%
,%
\begin{equation}
\pi_{j}^{\mathrm{b}}(Y,X)=\sup\left\{  -x\in\mathbb{R}\,|\,\exists(\psi
,u)\in\Phi^{\mathrm{b}}(Y,X):xe^{j}=u_{0}\right\}  . \label{Eq:hfhtu5hnla}%
\end{equation}
According to Theorem~\ref{Thm:fjf75s41s}, $-\pi_{j}^{\mathrm{b}}(Y,X)e^{j}%
\in\mathcal{Z}_{0}^{\mathrm{b}}$. Hence, by Proposition~\ref{Prop:9h75hhsl},
there is a $(\psi,z)\in\Lambda^{\mathrm{b}}(Y,X)$ such that $-\pi
_{j}^{\mathrm{b}}(Y,X)e^{j}=z_{0}$, and so, by Proposition~\ref{Prop:jg768enk}%
, there is a $(\psi,u)\in\Phi^{\mathrm{b}}(Y,X)$ such that $-\pi
_{j}^{\mathrm{b}}(Y,X)e^{j}=u_{0}$. It follows that the supremum
in~(\ref{Eq:hfhtu5hnla}) is attained,%
\begin{align*}
\pi_{j}^{\mathrm{b}}(Y,X)  &  =\max\left\{  -x\in\mathbb{R}\,|\,\exists
(\psi,u)\in\Phi^{\mathrm{b}}(Y,X):xe^{j}=u_{0}\right\} \\
&  =-\min\left\{  x\in\mathbb{R}\,|\,\exists(\psi,u)\in\Phi^{\mathrm{b}%
}(Y,X):xe^{j}=u_{0}\right\}  .
\end{align*}
Hence, according to Lemma~\ref{Lem:k856dg3a} and
Definition~\ref{Def:hfy468ahd564vk},
\begin{align*}
\pi_{j}^{\mathrm{b}}(Y,X)  &  =-\min\left\{  x\in\mathbb{R}\,|\,\exists\psi
\in\mathcal{X}\exists z\in\Psi^{\mathrm{a}}(-Q_{\,\cdot\,,\psi}):xe^{j}%
=z_{0}\right\} \\
&  =-\min_{\psi\in\mathcal{X}}\inf\left\{  x\in\mathbb{R}\,|\,\exists z\in
\Psi^{\mathrm{a}}(-Q_{\,\cdot\,,\psi}):xe^{j}=z_{0}\right\} \\
&  =-\min_{\psi\in\mathcal{X}}p_{j}^{\mathrm{a}}(-Q_{\,\cdot\,,\psi}),
\end{align*}
where
\[
p_{j}^{\mathrm{a}}(-Q_{\,\cdot\,,\psi})=\inf\left\{  x\in\mathbb{R}%
\,|\,\exists z\in\Psi^{\mathrm{a}}(-Q_{\,\cdot\,,\psi}):xe^{j}=z_{0}\right\}
\]
is the seller's price of an American option under gradual exercise with payoff
process~$-Q_{\,\cdot\,,\psi}$. By Theorem~\ref{Thm:jjf658wgffvdsdf85},%
\[
p_{j}^{\mathrm{a}}(-Q_{\,\cdot\,,\psi})=\max_{\phi\in\mathcal{X}}%
\max_{(\mathbb{Q},S)\in\mathcal{\bar{P}}_{j}^{\mathrm{d}}(\phi)}%
\mathbb{E}_{\mathbb{Q}}(\left(  -Q_{\,\cdot\,,\psi}\cdot S\right)  _{\phi}),
\]
so%
\begin{align*}
\pi_{j}^{\mathrm{b}}(Y,X)  &  =-\min_{\psi\in\mathcal{X}}p_{j}^{\mathrm{a}%
}(-Q_{\,\cdot\,,\psi})\\
&  =-\min_{\psi\in\mathcal{X}}\max_{\phi\in\mathcal{X}}\max_{(\mathbb{Q}%
,S)\in\mathcal{\bar{P}}_{j}^{\mathrm{d}}(\phi)}\mathbb{E}_{\mathbb{Q}}(\left(
-Q_{\,\cdot\,,\psi}\cdot S\right)  _{\phi})\\
&  =\max_{\psi\in\mathcal{X}}\min_{\phi\in\mathcal{X}}\min_{(\mathbb{Q}%
,S)\in\mathcal{\bar{P}}_{j}^{\mathrm{d}}(\phi)}\mathbb{E}_{\mathbb{Q}}(\left(
Q_{\,\cdot\,,\psi}\cdot S\right)  _{\phi}).
\end{align*}
Theorem~\ref{Thm:dhhd5436syy} has been proved.
\end{proof}

\end{document}